\title{Reconfiguring graph minors}
\author{Benjamin Moore}{Department of Combinatorics and Optimization, University of Waterloo, Canada}{brmoore@uwaterloo.ca}{}{}
\author{Naomi Nishimura}{David R. Cheriton School of Computer Science, University of Waterloo, Canada}{nishi@uwaterloo.ca}{}{}
\author{Vijay Subramanya}{David R. Cheriton School of Computer Science, University of Waterloo, Canada}{v7subram@uwaterloo.ca}{}{}
\authorrunning{B. Moore, N. Nishimura, and V. Subramanya}
\subjclass{G.2.2 Graph Theory}
\keywords{reconfiguration, graph minors, graph algorithms}
\begin{document}

\maketitle

\newtheorem{observation}[theorem]{Observation}

\newcommand{\fillin}[1]{\textcolor{red}{\tt #1}}

\begin{abstract}
Under the reconfiguration framework, we consider the various ways that a target graph $H$ is a {\em minor} of a host graph $G$, where a subgraph of $G$ can be transformed into $H$ by means of {\em edge contraction} (replacement of both endpoints of an edge by a new vertex adjacent to any vertex adjacent to either endpoint). Equivalently, an {\em $H$-model} of $G$ is a labeling of the vertices of $G$ with the vertices of $H$, where the contraction of all edges between identically-labeled vertices results in a graph containing representations of all edges in $H$.

We explore the properties of $G$ and $H$ that result in a connected {\em reconfiguration graph}, in which nodes represent $H$-models and two nodes are adjacent if their corresponding $H$-models differ by the label of a single vertex of $G$.  Various operations on $G$ or $H$ are shown to preserve connectivity.
In addition, we demonstrate properties of graphs $G$ that result in connectivity for the target graphs $K_2$, $K_3$, and $K_4$, including a full characterization of graphs $G$ that result in connectivity for $K_2$-models, as well as the relationship between connectivity of $G$ and other $H$-models.

\end{abstract}

\section{Introduction}\label{sec-intro}

Graph minors have been studied extensively as a means for categorizing graphs and exploiting their properties.  A graph $H$ is a {\em minor} of a graph $G$ if $H$ can be formed from a subgraph of $G$ by a series of edge contractions, where the {\em contraction} of the edge $uv$ results in the replacement of both $u$ and $v$ by a new vertex $w$ that is adjacent to any vertex that was adjacent to $u$ or $v$ (or both).  Much of the research in the area has focused on classes of graphs that are closed under the taking of minors, and on exploiting properties of graphs known not to contain certain graphs as minors. For example, it is known that for every minor closed class, that class is characterized by a finite set of forbidden minors \cite{RobertsonSeymour}. Additionally, it has been shown that for any fixed graph $H$, every $H$-minor-free graph of treewidth $w$ has an $\Omega(w) \times \Omega(w)$ grid as a minor~\cite{DH05}.

In our work, we instead focus on the solution space of $H$-models of a graph $G$ using the reconfiguration framework~\cite{IDHPSUU11,N17,H13}, where an $H$-model is a mapping that labels the vertices of $G$ with the vertices of $H$.  A {\em reconfiguration graph} for an instance of a problem consists of a node for each possible feasible solution and an edge between any two nodes representing solutions that are adjacent.  The definition of adjacency may be presented as a {\em reconfiguration step} used to transform a solution into a neighbouring solution. Structural properties of the reconfiguration graph, including its diameter and whether it is connected, are of interest both in their own right and as keys to solving algorithmic problems, such as determining whether there is a path (or {\em reconfiguration sequence}) in the graph between two given vertices  and, if so, finding the shortest such path.  

In this paper, we consider how the connectivity of the reconfiguration graph depends on the choices of the host $G$ and target $H$.  We consider an instance of \textsc{Minor Reconfiguration} to consist of a {\em host graph} $G$ and {\em target graph} $H$ such that $H$ is a minor of $G$.  Each node in the reconfiguration graph consists of a labeling of the vertices of $G$ with the vertices of $H$ (or, more simply, integers in $\{1, \ldots, |V(H)|\}$) such that the contraction of each edge with identically-labeled endpoints results in a graph that, upon deletion of zero or more edges, yields $H$.  We consider two $H$-models to be adjacent if they differ by a single label.

Although we are the first to consider the reconfiguration of minors, several papers have considered the reconfiguration of  subgraphs~\cite{subgraph,subgraphmoritz}.  The representation of a configuration as a labeling of the vertices has been used for problems entailing moving labels from a source to a target configuration using the minimum number of swaps, where labels (or tokens) on adjacent vertices can be exchanged (detailed in a survey of reconfiguration~\cite{N17}), and labeled edges have been considered in the reconfiguring of triangulations~\cite{annafliplabel}.

We begin by establishing properties of $k$-connected graphs and minors
in Section~\ref{sec-prelims}, based on which we form a toolkit of
techniques used in reconfiguration (Section~\ref{sec-tools}).
We consider various properties of $G$ and $H$ that determine whether or not the reconfiguration graph is connected. For a target graph $H$, we define $\mbox{host}(H)$ to be the set of host graphs $G$ such that the reconfiguration graph for $G$ and $H$ is connected.
We then
focus on characterizing $\mbox{host}(K_2)$
(Section~\ref{sec-k2}),
$\mbox{host}(K_3)$ (Section~\ref{sec-k3}), and
$\mbox{host}(K_4)$ (Section~\ref{sec-k4}).  Finally, in
Section~\ref{sec-conclusions}, we summarize the results and present
directions for future work.

\section{Preliminaries}\label{sec-prelims}

We define key terms used in the description of graphs; for common terms not defined in this paper, the reader is referred to a resource on graph theory~\cite{D05}.  We will frequently focus on subsets of the vertices; 
for a subgraph $V \subseteq V(G)$, the {\em induced subgraph} $G[V]$ is the subgraph with vertex set $V$ and edge set $\{uv \in V(G) \mid u, v \in V\}$.  As shorthand, for $G$ a graph and $S$ a set of vertices, we use $G \setminus S$ to denote $G[V(G) \setminus S]$.  In order to avoid confusion with the vertices in graphs $G$ and $H$, we refer to the {\em nodes} of a reconfiguration graph.

\subsection{Properties of $k$-connected graphs}

We focus on various ways of connecting vertices in the graph. 
A {\em cut set} $S$ of $G$ is a set of vertices such that $G \setminus S$ consists of at least two components; the member of a cut set of size one is also called a {\em cut vertex}.  A {\em bridge} is an edge whose deletion disconnects the graph. A graph is $k$-connected if there is no cut set of size $k$.  Equivalently, in a $k$-connected graph there exist $k$ vertex-disjoint paths between any pair of vertices in the graph.  At times we will focus on how highly connected a specific vertex might be.  A {\em universal vertex} is adjacent to all other vertices in the graph.  In a {\em complete graph} on $j$ vertices, denoted $K_j$, all vertices are universal vertices.

To characterize the behaviour of various host and target graphs, we make use of characterizations of graphs in terms of a base graph class and a series of operations. {\em Adding an edge} consists of adding an edge between two vertices in $V(G)$.  To {\em split} a vertex $v$ is to first delete $v$ from $G$, and then add two vertices $v_{1}$ and $v_{2}$ to $G$ such that $v_{1}v_{2} \in E(G)$, each neighbour of $v$ in $G$ is a neighbour of exactly one of $v_{1}$ or $v_{2}$, and $\deg(v_{i}) \geq 3$ for $i \in \{1,2\}$.

We make use of Tutte's characterization of $3$-connected graphs and Ding and Qin's characterization of a subset of the $4$-connected graphs, given below.  The base case for Tutte's characterization is a wheel, defined as follows.

\begin{definition}\label{def-wheel}
A {\em $k$-wheel} $W_k$ is a graph on $k+1$ vertices, the {\em rim vertices} $r_1, \ldots, r_k$ and the {\em hub vertex} $h$, where there is a cycle induced on the rim vertices and an edge between $h$ and each rim vertex.
\end{definition}

\begin{theorem}~\cite{wheelsandwhirls}\label{thm-tutte-3conn}
A graph is $3$-connected if and only if it is obtained from a wheel by repeatedly adding edges and splitting vertices.
\end{theorem}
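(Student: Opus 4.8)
The plan is to prove the two implications separately. For the (easy) converse — that any graph built from a wheel by edge additions and vertex splits is $3$-connected — I would first check that each wheel $W_k$ is $3$-connected: after deleting any two vertices, the surviving rim vertices remain mutually connected, through the hub if it survives and otherwise along a subpath of the rim cycle. Since adding an edge never decreases connectivity, it remains to show that splitting a vertex preserves $3$-connectivity. So let $G$ be $3$-connected, let $G'$ arise by splitting $v$ into adjacent vertices $v_1, v_2$, and suppose for contradiction that $\{x, y\}$ is a cut set of $G'$. If $\{x, y\} = \{v_1, v_2\}$ then $G' \setminus \{x, y\} = G \setminus \{v\}$ is connected, a contradiction. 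If $\{x, y\} \cap \{v_1, v_2\} = \emptyset$ then, because $v_1 v_2 \in E(G')$, the vertices $v_1$ and $v_2$ lie in one component of $G' \setminus \{x, y\}$, so contracting $v_1 v_2$ exhibits $\{x, y\}$ as a cut set of $G$, again a contradiction. Otherwise we may assume $x = v_1$ and $y \notin \{v_1, v_2\}$; let $C$ be the component of $G' \setminus \{v_1, y\}$ containing $v_2$. Every neighbour of $v_2$ outside $C$ lies in $\{v_1, y\}$, so $\deg(v_2) \geq 3$ forces $C$ to contain a neighbour of $v_2$, i.e.\ $C \setminus \{v_2\} \neq \emptyset$. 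Deleting $v$ and $y$ from $G$ is the same as deleting $v_1, v_2, y$ from $G'$, which leaves the nonempty set $C \setminus \{v_2\}$ joined to no other component of $G' \setminus \{v_1, y\}$; since at least one other component exists, $\{v, y\}$ is a cut set of $G$ — a contradiction. Hence $G'$ is $3$-connected.

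For the forward direction I would use strong induction on $|E(G)|$. If $G$ is a wheel there is nothing to prove, so the crux is the following claim: \emph{every $3$-connected graph $G$ that is not a wheel has an edge $e$ such that either $G \setminus e$ is $3$-connected, or $e = uv$ is in no triangle and $G/e$ is $3$-connected.} Granting this, observe that contracting an edge whose endpoints have no common neighbour removes exactly one edge, so in either case we obtain a $3$-connected graph with strictly fewer edges; by induction it is built from a wheel by edge additions and vertex splits, and $G$ is recovered by one further edge addition (in the deletion case) or by one further split of the contracted vertex back into $u$ and $v$ (in the contraction case — legitimate precisely because $u$ and $v$ have no common neighbour and each has degree at least $3$ in the $3$-connected graph $G$). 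As $|E(G)|$ strictly decreases at each reduction and is bounded below, and a graph admitting no reduction is a wheel, the induction terminates correctly.

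It remains to prove the claim. If $G$ is not minimally $3$-connected there is, by definition, an edge $e$ with $G \setminus e$ still $3$-connected, and we are done. So assume $G$ is minimally $3$-connected and not a wheel, and suppose for contradiction that $G/e$ is not $3$-connected for every edge $e$ lying in no triangle. For such an $e = uv$, a $2$-cut of $G/e$ must contain the contracted vertex — otherwise it would be a $2$-cut of $G$ — so there is a vertex $z = z(u, v)$ for which $\{u, v, z\}$ is a cut set of $G$. Among all triples consisting of such an edge $e = uv$, an associated vertex $z$, and a component $C$ of $G \setminus \{u, v, z\}$, choose one minimizing $|C|$. The heart of the argument is to use $3$-connectivity to show that vertices of $C$ have enough neighbours inside $C \cup \{u, v, z\}$, then to locate inside $C$ an edge lying in no triangle whose associated cut set cuts off a strictly smaller component, contradicting minimality; when no such strictly smaller fragment can be produced, the structure is pinned down tightly enough to force $G$ to be a wheel, contradicting our standing assumption.

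I expect this last step — the minimum-fragment analysis proving the claim for minimally $3$-connected graphs, carried out while respecting the ``no triangle'' condition that makes a contraction reversible by a genuine vertex split (every edge of a wheel, by contrast, lies in a triangle, which is why wheels must appear as the base cases) — to be the main obstacle. By comparison, the interplay between the two reduction operations, the bookkeeping on edge counts, and the verification that wheels, edge additions, and vertex splits preserve $3$-connectivity are all routine.
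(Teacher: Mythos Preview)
The paper does not prove this theorem: it is quoted as Tutte's classical Wheel Theorem with a citation to \cite{wheelsandwhirls} and used as a black box in Section~\ref{sec-k3}. There is therefore no ``paper's own proof'' to compare your proposal against.

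That said, your sketch follows the standard route to Tutte's theorem. The converse direction is handled cleanly; your case analysis showing that a vertex split preserves $3$-connectivity is correct (and your use of $\deg(v_2)\ge 3$ to guarantee $C\setminus\{v_2\}\neq\emptyset$ is exactly the point where the degree constraint in the paper's definition of a split is needed). For the forward direction, your reduction claim --- that a $3$-connected non-wheel admits either a removable edge or a contractible edge lying in no triangle --- is the right target, and your insistence on the ``no triangle'' condition is precisely what makes contraction the inverse of a split as defined here (each neighbour of $v$ goes to exactly one of $v_1,v_2$). The minimum-fragment argument you outline is indeed the crux of Tutte's original proof; as you note, it is where the real work lies, and your description of it is a plan rather than a proof. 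If you intend to include a full proof, that step needs to be written out carefully --- in particular, one must verify that the smaller fragment produced inside $C$ really does come from an edge in no triangle, and that the endgame (no strictly smaller fragment exists) forces the wheel structure. But for the purposes of this paper, simply citing Tutte suffices.
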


To state Ding and Qin's result, we need a few additional definitions. The \emph{line graph} $L(G)$ of a graph $G$ has a vertex corresponding to each edge of $G$ and two vertices are adjacent if their corresponding edges share an endpoint in $G$. A graph is \emph{cubic} if each vertex has degree three. Furthermore, a cubic graph with at least six vertices is \emph{internally $4$-connected} if its line graph is $4$-connected. One of the base classes for their characterization is a square of a cycle, as defined below.

\begin{definition}
The {\em square of a cycle} $C^2_k$ is formed from the cycle $C_k$ by adding an edge between any pair of vertices joined by a path of length two.
\end{definition}

Finally, we say a sequence of $4$-connected graphs $G_{1},\ldots,G_{n}$ form a \emph{$(G_{1},G_{n})$-chain} if for all $i \in \{1,\ldots,n-1\}$, there exists an edge $e$ such that $G_{i+1}$ is formed from $G_{i}$ by removal of the edge $e$.
Theorem \ref{thm-dingqin} is a generalization of a well-known theorem of Martinov \cite{Martinov}.

\begin{theorem}~\cite{DQ15}\label{thm-dingqin}
  Let $\mathcal{C} = \{C^2_k : k \geq 5\}$ and $\mathcal{L} = \{H : H$ is the line graph of an internally $4$-connected cubic graph$\}$. Let $G$ be a $4$-connected graph not in $\mathcal{C} \cup \mathcal{L}$. Then if $G$ is planar, there is a $(G,C_{6}^{2})$-chain. Otherwise, there is a $(G,K_{5})$-chain. 
\end{theorem}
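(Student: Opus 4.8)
The plan is to prove this as a chain theorem by induction on the number of edges, in the spirit of Tutte's wheel theorem (Theorem~\ref{thm-tutte-3conn}), with Martinov's theorem~\cite{Martinov} supplying the single reduction step: Martinov's theorem is exactly the statement that the $4$-connected graphs admitting no edge-removal to a smaller $4$-connected graph are precisely the members of $\mathcal{C}\cup\mathcal{L}$. Thus for $G\notin\mathcal{C}\cup\mathcal{L}$ some removal is available, and the only new content is to iterate the available removals so that the descent terminates at the \emph{specific} graph $C_6^2$ (planar case) or $K_5$ (non-planar case) rather than at an arbitrary member of $\mathcal{C}\cup\mathcal{L}$. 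Equivalently, I would prove: the $4$-connected graphs with no chain to $C_6^2$ are exactly the non-planar ones together with the members of $\mathcal{C}\cup\mathcal{L}$, and those with no chain to $K_5$ are exactly the planar ones together with the members of $\mathcal{C}\cup\mathcal{L}$.

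Take the planar case first. The removal operation preserves planarity, so every graph on a chain started from a planar graph is planar; in particular the non-planar member $C_5^2=K_5$ of $\mathcal{C}$ never appears. Given a planar $4$-connected $G\notin\mathcal{C}\cup\mathcal{L}$, apply Martinov to get a $4$-connected $G'$ and recurse when $G'\notin\mathcal{C}\cup\mathcal{L}$. The work is the case $G'\in\mathcal{C}\cup\mathcal{L}$ with $G'\neq C_6^2$: here one must either show that a different choice of first removal avoids $\mathcal{C}\cup\mathcal{L}$, or show directly that each $C_k^2$ with $k\geq 7$ and each planar member of $\mathcal{L}$ (the line graphs of planar internally $4$-connected cubic graphs) still admits a chain down to $C_6^2$. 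Confirming the bottom of the chain --- that the octahedron $C_6^2$ is actually reached and that no smaller planar $4$-connected graph blocks the descent --- is then a routine finite check.

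For the non-planar case the essential extra ingredient is Wagner's theorem: a $4$-connected graph with no $K_5$ minor is planar or the Wagner graph $V_8$, and $V_8$ is only $3$-connected, so every $4$-connected non-planar graph has a $K_5$ minor. Since $C_5^2=K_5$ is the only non-planar member of $\mathcal{C}$ and is itself excluded, a non-planar $G\notin\mathcal{C}\cup\mathcal{L}$ has $G\neq K_5$ and $G\notin\mathcal{L}$. The plan is to show that from such a $G$ some removal keeps the graph both $4$-connected and non-planar; induction then yields a $(G,K_5)$-chain. The crux is to rule out the situation where every $4$-connectivity-preserving removal destroys non-planarity: I would fix a minimal model of $K_5$ in $G$ (five connected parts realising the ten edges), use $4$-connectivity together with an edge count to locate an edge unused by the model --- so that its removal preserves the $K_5$-minor --- and then argue, interleaving the count with the local degree analysis underlying Martinov's proof, that some such removal can also be taken to preserve $4$-connectivity.

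The step I expect to be the main obstacle is exactly this endpoint control: Martinov's theorem hands us \emph{a} reduction but gives no control over where the descent stops, so the substance of the proof is characterising the $4$-connected graphs with no chain to $C_6^2$ (respectively $K_5$) and showing these coincide with the members of $\mathcal{C}\cup\mathcal{L}$ together with the graphs of the wrong planarity. Within this, the members of $\mathcal{L}$ are the delicate part: line graphs of internally $4$-connected cubic graphs have very rigid local structure --- their triangles all come from vertices or triangles of the underlying cubic graph --- and this rigidity is precisely what makes them resist single-edge removals. I expect handling them requires transferring the reduction problem to the underlying cubic graph and using the structure theory of cyclically $4$-edge-connected cubic graphs to sort out which of them lie above $K_5$ or $C_6^2$ in the chain order and which are genuinely terminal.
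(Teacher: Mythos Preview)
The paper does not prove this theorem. It is stated with the citation~\cite{DQ15} and used as a black box; the only ``proof'' content in the paper surrounding it is the sentence ``Theorem~\ref{thm-dingqin} is a generalization of a well-known theorem of Martinov~\cite{Martinov}.'' So there is nothing here to compare your proposal against.

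That said, a brief comment on the proposal itself. Your overall architecture --- Martinov supplies a single reduction step, and the work is endpoint control --- is the right shape. But you have not actually closed the main gap you yourself identify. By Martinov, the graphs in $\mathcal{C}\cup\mathcal{L}$ are exactly the minimally $4$-connected graphs, so once a reduction lands on, say, $C_7^2$ or a member of $\mathcal{L}$, \emph{no} further edge removal preserves $4$-connectivity and the chain dies there. Your two suggested escapes --- ``a different choice of first removal avoids $\mathcal{C}\cup\mathcal{L}$'' or ``each $C_k^2$ with $k\ge 7$ and each planar member of $\mathcal{L}$ still admits a chain down to $C_6^2$'' --- are mutually exclusive, and the second is simply false by Martinov. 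So the entire content of the theorem is the first option: showing that at every stage some $4$-connectivity-preserving removal also avoids $\mathcal{C}\cup\mathcal{L}$ (except at the very last step). Your proposal names this but does not supply an argument for it; the edge-count and $K_5$-model reasoning you sketch in the non-planar case addresses preserving non-planarity, not avoiding $\mathcal{L}$, and the planar case has no mechanism at all. This is exactly where the substance of Ding and Qin's paper lies.
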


\subsection{Branch sets, $H$-models, and block trees}

For the purposes of reconfiguration, we make use of an equivalent definition of a minor as a mapping of each vertex of host graph $G$ to a vertex of target graph $H$. For convenience, we sometimes represent the vertices of $H$ as integer labels.
\begin{definition}\label{def-branchsets}
  For graphs $G$ and $H$ and mapping $f:V(G) \rightarrow V(H)$, we refer to $f(v)$ as the {\em label} of $v$ and define the {\em branch set} $G(f,i)$ to be the subgraph of $G$ induced on the set of vertices with label $i$.
\end{definition}

For ease, we will make use of $|G(f,i)|$ to denote $|V(G(f,i))|$. Given a mapping between $V(G)$ and $V(H)$, an edge of $G$ is a {\em connecting edge} if its endpoints are members of two different branch sets, and we say that it {\em connects} those two branch sets.  A mapping is equivalent to a minor when two additional properties hold, as indicated in Definition~\ref{def-model}.

\begin{definition}\label{def-model}
  For graphs $G$ and $H$ and mapping $f:V(G) \rightarrow V(H)$, we say that $H$ is a {\em minor of $G$} and that $f$ is an {\em $H$-model of $G$} if the following conditions hold:  
\begin{enumerate}
\item{}\label{model-vertices} for each $i \in V(H)$, each branch set $G(f,i)$ is nonempty and connected, and
\item{}\label{model-edges} for each edge $ij \in E(H)$, there exists an edge in $E(G)$ connecting $G(f,i)$ and $G(f,j)$.
\end{enumerate}
\end{definition}

 We will often find it convenient to view each branch set in terms of the tree structure of its $2$-connected components.  A {\em block} of a connected graph is either a maximal $2$-connected subgraph or one of the endpoints of a bridge. The {\em block tree} of a connected graph consists of a node for each block $B$; there is an edge between the nodes corresponding to blocks $B$ and $B'$ if there exists a cut vertex $v$ of $G$ such that $V(B) \cap V(B') = \{v\}$.  Given a graph $G$, an $H$-model $f$, and a label $a$, we use $T(G, f, a)$ to denote the block tree for $G(f,a)$.  In addition, for a subgraph $A$ of $G$, we use $T(G,f,a,A)$ to denote the subtree of $T(G, f, a)$ induced by the blocks containing vertices in $V(G(f,a)) \cap V(A)$. For convenience, we sometimes use "block of $G(f,a)$" to refer to a block of $T(G, f, a)$.

To make use of the tree structure, our algorithms typically process a block tree starting with blocks that are leaves of their block trees, or {\em leaf blocks}; a branch set that is $2$-connected can be viewed as having a block tree consisting of a single leaf block.  For ease of description, we will refer to the cut vertices of $G$ that appear in multiple blocks as {\em joining vertices} and all other vertices as {\em interior vertices}.

\subsection{Essential edges, crucial vertices, weak connections, and lynchpins}

When considering how labelings can be reconfigured, we need to ensure that we retain the connecting edges as required in Definition~\ref{def-model}.  In doing so, we need to pay particular attention to vertices and edges whose relabeling will cause problems. 

When there exists only a single edge that connects a pair of branch sets with labels $a$ and $b$, $ab \in  E(H)$, we call such an edge an {\em essential edge}, and denote it as $\mbox{ess}(a,b)$.  If all the edges between branch sets with labels $a$ and $b$ have the same endpoint in $a$, we call that vertex an {\em essential vertex for $b$}; clearly every endpoint of an essential edge is an essential vertex, but not every essential vertex is the endpoint of an essential edge.

The presence of essential vertices will be important in determining when it is easy to relabel vertices.  For any two labels, if the branch set with label $a$ contains an essential vertex for $b$ or if the branch set with label $b$ contains an essential vertex for $a$, we will say that the branch sets with labels $a$ and $b$ are {\em weakly connected}, or form a {\em weak connection}.  

Our results rely on the interplay between the presence of weak connections and the connectivity of a graph. For each weak connection, we identify a vertex as the {\em lynchpin} for the connection.  When the branch sets with labels $a$ and $b$ are weakly connected by an essential edge, then either of the endpoints of the essential edge can be designated as the lynchpin.  Otherwise, the (single) essential vertex giving rise to the weak connection is the lynchpin for that connection.  We will use lynchpins to form cut sets between non-lynchpins and other branch sets.

A vertex $v$ with label $a$ is a {\em crucial vertex} if it is an essential vertex for $b$ and an essential vertex for $c$, for $b \ne c$, and a {\em non-crucial vertex} otherwise.  If for some distinct labels $a$, $b$, and $c$, a vertex $v \in G(f,a)$ is essential for $c$ and also has at least one neighbour in $G(f,b)$, then $v$ is a {\em $b$-crucial vertex}.  Clearly, a vertex in $G(f,a)$ that is essential for $b$ and $c$ is crucial, $b$-crucial, and $c$-crucial.

The following observation characterized non-crucial vertices.

      \begin{observation}\label{obs-non-crucial}
For any non-crucial vertex $v \in G(f,a)$ that has at least one neighbour in a different branch set, there exists at least one label $b \ne a$ such that $v$ has a neighbour in $G(f,b)$ and $v$ is not $b$-crucial. 
  \end{observation}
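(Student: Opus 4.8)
The plan is to unwind the definitions of \emph{non-crucial}, \emph{essential vertex}, and \emph{$b$-crucial} and do a short case analysis on the labels appearing on the neighbours of $v$. Let $v \in G(f,a)$ be non-crucial and suppose $v$ has a neighbour in a branch set other than its own; let $N$ be the set of labels $b \neq a$ such that $v$ has a neighbour in $G(f,b)$, so $N$ is nonempty. I want to produce some $b \in N$ for which $v$ is not $b$-crucial. Recall $v$ is $b$-crucial precisely when $v$ is essential for some third label $c \notin \{a,b\}$ and has a neighbour in $G(f,b)$; since $v$ already has a neighbour in every $b \in N$, the obstruction to $b$ being the desired label is exactly that $v$ is essential for some label $c$ with $c \neq a$ and $c \neq b$.

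First I would handle the case $|N| = 1$, say $N = \{b\}$. Then $v$ could only be $b$-crucial if it is essential for some $c \notin \{a, b\}$. But an essential vertex for $c$ must, by definition, have a neighbour in $G(f,c)$; since the only label besides $a$ appearing on $v$'s neighbours is $b$, no such $c$ exists, so $v$ is not $b$-crucial, and $b$ is the required label. Next I would handle $|N| \geq 2$. Here I use the hypothesis that $v$ is non-crucial: by definition $v$ is \emph{not} essential for two distinct labels simultaneously, so there is at most one label $c$ for which $v$ is essential. If $v$ is essential for no label at all, then $v$ is trivially not $b$-crucial for any $b \in N$ and we are done. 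If $v$ is essential for exactly one label $c$, then because $|N| \geq 2$ we may pick $b \in N$ with $b \neq c$; then $v$ is not essential for any label outside $\{a, b\}$ except possibly $c$, but $c$ is excluded, so $v$ is not $b$-crucial. (Note $c$ itself need not lie in $N$, and that is fine — we only need to avoid it.)

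The only mildly delicate point, and the place I would be most careful, is the interaction between "$v$ is essential for $c$" and "$v$ has a neighbour in $G(f,c)$": a vertex essential for $c$ need not be the endpoint of an essential edge, but it \emph{does} have all the $a$--$c$ connecting edges at it, and since $ac$ may or may not be an edge of $H$ one should check the edge case $c \notin N$ cannot sabotage the argument — it cannot, because $b$-cruciality only requires a neighbour in $G(f,b)$, not in $G(f,c)$. Assembling these two cases gives the observation; the argument is entirely definitional, so no external results beyond the definitions in this subsection are needed.
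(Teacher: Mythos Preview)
Your case analysis has a genuine error in the $|N| \ge 2$ subcase where $v$ is essential for exactly one label $c$. You write: ``pick $b \in N$ with $b \neq c$; then $v$ is not essential for any label outside $\{a,b\}$ except possibly $c$, but $c$ is excluded, so $v$ is not $b$-crucial.'' But $c$ is \emph{not} excluded: since you chose $b \neq c$, the label $c$ lies outside $\{a,b\}$, and $v$ \emph{is} essential for $c$, so by definition $v$ \emph{is} $b$-crucial. Your choice of $b$ is exactly backwards. The correct move is to take $b = c$: then $v$ being $b$-crucial would require $v$ to be essential for some $d \notin \{a,c\}$, which is impossible since $c$ is the unique label for which $v$ is essential. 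For this you need $c \in N$, and that does hold: if $v$ is essential for $c$ then every connecting edge between $G(f,a)$ and $G(f,c)$ has $v$ as an endpoint, so in particular $v$ has a neighbour in $G(f,c)$. Your parenthetical remark that ``$c$ itself need not lie in $N$'' is therefore also wrong under the paper's (non-vacuous) reading of ``essential''.

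Once you reverse the choice to $b = c$, your direct argument goes through and is essentially equivalent to the paper's proof, which instead argues by contradiction: assume $v$ is $b$-crucial for every $b \in N$, take any $b \in N$, extract an essential label $c$, observe $c \in N$, apply the hypothesis again at $c$ to get a second essential label $d \neq c$, and conclude $v$ is crucial. Both routes hinge on the same fact --- that being essential for $c$ forces $c \in N$ --- and your $|N|=1$ case is already the contrapositive of this.
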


  \begin{proof}
Suppose to the contrary that $v$ is a non-crucial vertex but for every label $b$ such that $v$ has a neighbour in $G(f,b)$, $v$ is $b$-crucial.  By the definition of $b$-crucial, $v$ is essential for some $c \ne b$.  By our assumption, since $v$ has a neighbour in $G(f,c)$, $v$ is $c$-crucial.  By the definition of $c$-crucial, $v$ is essential for some $d \ne c$.  We can then conclude that $v$ is  essential for two different labels, and hence is crucial, forming a contradiction. 
\end{proof}

\subsection{Properties of $H$-models of $k$-connected graphs}\label{sec-model-k-conn}

When $G$ is $k$-connected, we are able to establish properties of connecting edges of branch sets, as shown in Lemma~\ref{lemma-leafblock} and Lemma~\ref{lemma-two-weak}, as well as the structure of weak edges (Lemma~\ref{lemma-all-weak}). The results make use of the fact that in a $k$-connected graph there cannot be a cut set of size less than $k$ separating any two vertices; cut sets are typically formed from the joining vertices of leaf blocks and lynchpins, and vertices separated by cut sets are typically non-lynchpins.  

\begin{lemma}\label{lemma-leafblock}
Given a $k$-connected graph $G$ and an $H$-model $f$ of $G$ for some graph $H$, for any branch set $G(f,a)$, each leaf block has $k-1$ interior vertices that are endpoints of connecting edges.
\end{lemma}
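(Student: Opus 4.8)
The plan is to argue by contradiction using the $k$-connectivity of $G$. Fix a branch set $G(f,a)$ and a leaf block $B$ of the block tree $T(G,f,a)$. If $B$ is the whole branch set (so $G(f,a)$ is $2$-connected and $T(G,f,a)$ is a single node), let $v$ be any vertex; otherwise let $v$ be the unique joining vertex of $B$ (the cut vertex of $G(f,a)$ that separates $B$ from the rest of the branch set). Let $I = V(B) \setminus \{v\}$ be the set of interior vertices of $B$. Suppose for contradiction that fewer than $k-1$ vertices of $I$ are endpoints of connecting edges; let $X \subseteq I$ be that set of endpoints, so $|X| \le k-2$. I would then exhibit a cut set of size at most $k-1$ in $G$, contradicting $k$-connectivity.

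The key observation is that the set $X \cup \{v\}$ separates $I \setminus X$ from the rest of $G$. Indeed, any vertex $u \in I \setminus X$ lies in $B$ but is not $v$ and is not an endpoint of a connecting edge, so all of $u$'s neighbours lie in $G(f,a)$; moreover, since $B$ is a block and $v$ is its only joining vertex, every vertex of $G(f,a)$ that is adjacent to a vertex of $I$ must itself lie in $B$ (a neighbour outside $B$ would force a second cut vertex or a larger $2$-connected piece). Hence every neighbour of $u$ lies in $V(B) = I \cup \{v\}$, and so every path from $I \setminus X$ to $G \setminus V(B)$ must pass through $\{v\}$ or through a vertex of $I$ that has a neighbour outside $B$ — but the only such vertices in $I$ are those in $X$ (the ones with connecting edges). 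Therefore $X \cup \{v\}$ is a cut set, provided $I \setminus X$ is nonempty and $G \setminus V(B)$ is nonempty.

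For the contradiction to land I need both sides of the cut to be nonempty. The set $G \setminus V(B)$ is nonempty because $|V(G)| \ge |V(H)| \ge 2$ forces at least two branch sets, so there are vertices outside $G(f,a) \supseteq V(B)$ — except I should double-check the degenerate case where $B = G(f,a)$ is the entire branch set but $|V(H)| = 1$; since we only care about $H$ with at least one edge (otherwise there are no connecting edges to speak of and the statement is about $k-1 = 0$ interior endpoints when... ) — more carefully, if $H$ has an edge then $|V(H)| \ge 2$ and $G \setminus V(B)$ is nonempty; if $H$ has no edges the claim is vacuous or trivial. For $I \setminus X$ nonempty: a $2$-connected graph on $\ge 3$ vertices has every block with at least $\ge k-1 \ge 2$ interior vertices once we know $B$ is a $2$-connected block, since $G$ being $k$-connected and $B$ a leaf block attached at a single vertex $v$ forces $|V(B)| \ge k$ (the vertices of $B$ other than $v$ would otherwise be separated by too small a set), hence $|I| \ge k-1 > |X|$.

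The main obstacle I anticipate is handling the block-tree bookkeeping cleanly: precisely justifying that a neighbour in $G(f,a)$ of an interior vertex of $B$ must lie in $B$, and carefully treating the boundary cases (when the branch set is a single vertex, a single bridge, or $2$-connected). These are routine once the block/cut-vertex definitions from the preliminaries are invoked, but they need to be stated carefully so the cut set $X \cup \{v\}$ genuinely has size $\le k-1$ and genuinely separates. Once that is pinned down, the contradiction with $k$-connectivity is immediate, and in fact the same argument shows the interior vertices that are connecting-edge endpoints number \emph{exactly} $k-1$ is not claimed — only at least $k-1$ — so no upper-bound argument is needed.
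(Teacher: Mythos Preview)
Your approach is correct and essentially the same as the paper's, just phrased on the cut-set side of Menger's theorem rather than the path side: the paper picks an interior vertex of the leaf block and a vertex in another branch set, invokes $k$ internally vertex-disjoint paths between them, and notes that at most one can pass through the joining vertex, forcing the other $k-1$ to exit via distinct interior vertices with connecting edges. Your contradiction via the separator $X \cup \{v\}$ is exactly the dual formulation of this same structural observation.
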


\begin{proof}
  Due to $k$-connectivity, there must be $k$ internally-vertex-disjoint paths between any pair of vertices. In particular, there must be $k$ internally-vertex-disjoint paths between an interior vertex in a leaf block of $G(f,a)$ and a vertex in another branch set.  At most one path can use the joining vertex of the leaf block.  The remaining $k-1$ paths must then make use of distinct interior vertices of the leaf block, each of which must have an edge to a vertex in another branch set.
  
\end{proof}

\begin{lemma}\label{lemma-two-weak}
  Given a $k$-connected graph $G$ and an $H$-model $f$ of $G$, where $|V(H)|=k$, suppose there exist branch sets $G(f,\ell)$ and $G(f,m)$ such that ${\ell}m \in E(H)$ and there are weak connections between $G(f,\ell)$ and each branch set other than $G(f,m)$ (where a weak connection between $G(f,\ell)$ and $G(f,m)$ is possible but not required). Then, the following hold:
  \begin{enumerate}
  \item{} Each leaf block in $G(f,\ell)$ must contain an interior vertex that is the endpoint of a connecting edge to $G(f,m)$.
    \item{} If it is possible to designate lynchpins of the weak connections such that $G(f,\ell)$ contains a non-lynchpin, then each leaf block in $G(f,m)$ must contain an interior vertex that is the endpoint of a connecting edge to $G(f,\ell)$.
  \end{enumerate}
\end{lemma}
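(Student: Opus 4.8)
The plan is to use a single device for both parts: from a leaf block, take its joining vertex together with one vertex per weak connection---an essential vertex witnessing it, which is its lynchpin---to build a cut set of size at most $k-1$, and then derive a contradiction from $k$-connectivity.

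For the first part, fix a leaf block $B$ of $G(f,\ell)$ with joining vertex $v_B$; if $G(f,\ell)$ is $2$-connected then $B=G(f,\ell)$ and the connecting edge realizing $\ell m\in E(H)$ already has an interior endpoint in $B$, so assume $v_B$ exists. Suppose for contradiction that no interior vertex of $B$ is an endpoint of a connecting edge to $G(f,m)$. For each label $c\notin\{\ell,m\}$ pick an essential vertex witnessing the weak connection between $G(f,\ell)$ and $G(f,c)$, and let $S$ be these $k-2$ vertices together with $v_B$, so $|S|\le k-1$. By Lemma~\ref{lemma-leafblock}, $B$ has at least $k-1$ interior vertices that are endpoints of connecting edges, so some such vertex $x$ avoids $S$. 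Then $x$ cannot leave $B$ in $G\setminus S$: since $x$ is interior to $B$, all its edges out of $B$ are connecting edges, necessarily to some $G(f,c)$ with $c\neq\ell$ and, by assumption, $c\neq m$; and for each such $c$ the chosen essential vertex lies in $S$---if it is in $G(f,\ell)$ it is the unique $\ell$-endpoint of all edges between $G(f,\ell)$ and $G(f,c)$, forcing $x$ to equal it, which is impossible, and if it is in $G(f,c)$ it is the unique $c$-endpoint of those edges, hence deleted. Iterating, the component of $x$ in $G\setminus S$ stays inside $B\subseteq G(f,\ell)$, which is disjoint from the nonempty $G(f,m)$; so $S$ is a cut set of size at most $k-1$ separating $x$ from $G(f,m)$, contradicting $k$-connectivity.

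For the second part I would first show the hypothesis forces $G(f,\ell)$ and $G(f,m)$ to be not weakly connected: otherwise, taking a lynchpin designation for which $G(f,\ell)$ has a non-lynchpin $u$, the set $P$ of those lynchpins (one for each weak connection involving $\ell$, now including the one between $\ell$ and $m$) satisfies $|P|\le k-1$, and deleting $P$ removes every connecting edge incident to $G(f,\ell)$; hence $u$ lies in a component of $G\setminus P$ contained in $G(f,\ell)$, separated from the nonempty $G(f,m)$ by at most $k-1$ vertices, a contradiction. So assume the connection between $G(f,\ell)$ and $G(f,m)$ is not weak, fix a non-lynchpin $u\in G(f,\ell)$ and a leaf block $B'$ of $G(f,m)$ with joining vertex $v_{B'}$, and suppose for contradiction that no interior vertex of $B'$ is an endpoint of a connecting edge to $G(f,\ell)$; note $B'\neq G(f,m)$, since $\ell m\in E(H)$ forces a connecting edge whose $m$-endpoint would otherwise be interior to $G(f,m)=B'$. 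Put $S=\{v_{B'}\}$ together with the lynchpins of the $k-2$ weak connections, so $|S|\le k-1$; deleting the lynchpins kills all edges between $G(f,\ell)$ and any $G(f,c)$ with $c\notin\{\ell,m\}$, and deleting $v_{B'}$ cuts the interior of $B'$ off from the remainder of $G(f,m)$. The aim is to conclude that a leftover interior vertex $x'$ of $B'$ (it exists by Lemma~\ref{lemma-leafblock}) is separated from $u$ by $S$.

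This last step is where I expect the real difficulty. In $G\setminus S$ the component of $u$ is confined to $G(f,\ell)$, the part of $G(f,m)$ outside the interior of $B'$, and whatever branch sets $G(f,c)$ these reach; but each such $G(f,c)$ is reachable only through the part of $G(f,m)$ outside the interior of $B'$ (all direct edges between $G(f,\ell)$ and $G(f,c)$ being gone), and that part is separated from the interior of $B'$ by $v_{B'}$. Were it not for a detour of the shape ``$u$ to $G(f,m)\setminus\mathrm{int}(B')$ to $G(f,c)$ to $\mathrm{int}(B')$'', the component of $u$ would avoid the interior of $B'$ and the usual contradiction would follow. Ruling out such a detour is the crux, and is where one must use the non-weakness of the connection between $G(f,\ell)$ and $G(f,m)$ together with a finer analysis of which blocks of $G(f,m)$ (via the subtrees $T(G,f,m,\cdot)$) can carry edges to a given $G(f,c)$, enlarging $S$ by further joining vertices while keeping $|S|\le k-1$. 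I would expect the argument to split on whether the interior of $B'$ sends connecting edges to one branch set or to several, handling the single–branch–set case directly and using in the other case that the $k-1$ connecting-edge endpoints in $B'$ guaranteed by Lemma~\ref{lemma-leafblock} cannot all be accommodated while still leaving $u$ reachable.
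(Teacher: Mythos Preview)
For Part~1 your argument is correct and is the paper's: the joining vertex of the leaf block together with one lynchpin per weak connection is a set of size at most $k-1$ whose removal strands an interior vertex of the leaf block away from $G(f,m)$, contradicting $k$-connectivity. Your write-up just fills in details the paper leaves implicit.

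For Part~2 there are two problems. First, your preliminary claim---that the hypotheses force $G(f,\ell)$ and $G(f,m)$ not to be weakly connected---is false. Take $k=3$, $G=K_4$, $|G(f,\ell)|=2$, $|G(f,m)|=|G(f,c)|=1$: the single vertices of $G(f,m)$ and $G(f,c)$ are essential for $\ell$ and may be designated as the two lynchpins, leaving both vertices of $G(f,\ell)$ as non-lynchpins, yet $\ell$ and $m$ \emph{are} weakly connected. Your derivation breaks because after deleting $P$ there is nothing left in $G(f,m)$ to be separated from $u$. Second, your treatment of the detour is explicitly incomplete: you identify the route $u\to G(f,m)\setminus\mathrm{int}(B')\to G(f,c)\to\mathrm{int}(B')$ as ``the crux'' but offer only a vague plan (enlarging $S$ by further joining vertices, splitting on how many branch sets $\mathrm{int}(B')$ reaches) rather than an argument. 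By comparison, the paper's proof of Part~2 is a single sentence asserting that the joining vertex of the leaf block in $G(f,m)$ together with the $k-2$ lynchpins separate the interior of that block from the non-lynchpin in $G(f,\ell)$; it does not engage with the detour at all. So you are not missing a simple trick that the paper uses---your instinct that the cut-set assertion needs justification is sound---but what you have written does not supply it.
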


  \begin{proof}
    To see why the first point holds, suppose instead that no such interior vertex existed in a leaf block of $G(f,\ell$).  Then, each path between an interior vertex $u$ in the leaf block in $G(f,\ell)$ and any vertex $v$ in $G(f,m)$ must pass through either the joining vertex of the leaf block or one of the lynchpins for the weak connections.  However, $u$ and $v$ are thus separated by a cut set of size at most $k-1$, contradicting the $k$-connectivity of $G$.

    The argument for the second point is similar; we can show that the joining vertex of the leaf block in $G(f, m)$ and the lynchpins of the weak connections form a cut set of size at most $k-1$ separating any interior vertex in the leaf block and the non-lynchpin in $G(f,\ell)$. 
\end{proof}

\begin{lemma}\label{lemma-all-weak}
Given a $k$-connected graph $G$ and a $K_k$-model $f$ of $G$ such that there is a branch set $B$ with weak connections to all other branch sets, it is not possible to designate lynchpins such that $B$ contains at least one vertex $x$ that is not a lynchpin for any of the weak connections, and at least one other branch set contains a vertex $y$ that is not a lynchpin for any of the weak connections.
\end{lemma}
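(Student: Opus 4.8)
The plan is to derive a contradiction from a single cut set assembled out of the lynchpins, in the spirit of the proofs of Lemmas~\ref{lemma-leafblock} and~\ref{lemma-two-weak}. Suppose, for contradiction, that lynchpins have been designated so that $B$ contains a vertex $x$ that is not a lynchpin of any of the weak connections, and that some other branch set contains a vertex $y$ that likewise is not a lynchpin. Since $f$ is a $K_k$-model there are exactly $k$ branch sets, so $B$ is weakly connected to exactly the $k-1$ branch sets different from it. Writing $p_D$ for the lynchpin of the weak connection between $B$ and $D$, for each branch set $D \neq B$, set $S = \{\, p_D : D \neq B \,\}$, so that $|S| \le k-1$. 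By the choice of $x$ and $y$, neither of them lies in $S$, and $x \neq y$ because they lie in different branch sets.

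The key step will be the structural observation that, for every $D \neq B$, \emph{every} connecting edge between $B$ and $D$ is incident to $p_D$: if the weak connection between $B$ and $D$ comes from an essential edge, that edge is the unique connecting edge and $p_D$ is one of its endpoints; and if it comes from an essential vertex --- in $B$ or in $D$ --- then by the definition of essential vertex every connecting edge between $B$ and $D$ has that vertex (which is $p_D$) as its endpoint in the corresponding branch set. From this I would conclude that any path that starts inside $B$ and ends outside $B$ must meet $S$: taking the first vertex $v$ of such a path that lies outside $B$, the edge of the path entering $v$ joins a vertex of $B$ to $v$, hence it is a connecting edge between $B$ and the branch set $D$ containing $v$, so its two endpoints --- namely $v$ and its predecessor on the path --- include $p_D \in S$. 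Applying this to any $x$--$y$ path, which exists because $G$ is $k$-connected and hence connected, shows that $S$ separates $x$ from $y$; since $x, y \notin S$ and $|S| \le k-1$, this contradicts the $k$-connectivity of $G$.

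The only delicate point is the structural observation that a lynchpin is incident to all connecting edges of its weak connection, that is, handling the essential-edge case and the two essential-vertex cases uniformly; once that is established, the separator argument is routine. I would also remark that nothing about the edge set of $H$ is used beyond $|V(H)| = k$, so the same conclusion holds whenever the number of branch sets equals the connectivity of the host graph.
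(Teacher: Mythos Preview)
Your proposal is correct and follows essentially the same approach as the paper: assemble the $k-1$ lynchpins into a set $S$ and argue that $S$ separates $x$ from $y$, contradicting $k$-connectivity. Your write-up is in fact more careful than the paper's, which simply asserts that ``any path from $x$ to another branch set must pass through one of the designated lynchpins'' without spelling out the case analysis on whether the lynchpin lies in $B$ or in the other branch set; your explicit verification that every connecting edge is incident to its lynchpin is exactly the justification that step needs.
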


\begin{proof}
It will suffice to show that if it is possible to designate lynchpins in the way described in the statement of the lemma, then we obtain a contradiction.  To do so, we will show that $x$ and $y$ can be separated by a cut set of size less than $k$, violating the $k$-connectivity of $G$.

Because $x$ is not a designated lynchpin, any path from $x$ to another branch set must pass through one of the designated lynchpins to reach $y$.
Thus, the designated lynchpins associated with the $k-1$ weak connections between $B$ and the remaining branch sets forms a cut set of size at most $k-1$, completing the proof of the lemma. 
\end{proof}

\section{Toolkit for reconfiguration of minors}\label{sec-tools}

In this section, we introduce techniques and properties that are exploited in the results found in the rest of the paper.  In particular, we focus on the types of steps used in reconfiguration and the properties that need to be satisfied for each type of transformation.  In Lemmas~\ref{lemma-conds} and \ref{lemma-universal} we determine conditions under which a vertex can be relabeled in a single step.  In the remainder of the section, we present results that can be used to handle more complex situations in which one or more of the conditions do not hold. 

Lemma~\ref{lemma-conds} delineates the conditions necessary for a vertex to be able to be relabeled from $a$ to $b$ in a single reconfiguration step: it cannot be the only vertex with label $a$, it cannot be a cut vertex in its branch set, it must be connected to a vertex with label $b$, and it is not incident with every edge between the branch sets for labels $a$ and $c$, where $c \ne b$.

\begin{lemma}\label{lemma-conds}
  Given a graph $G$ and an $H$-model $f$ of $G$, a vertex $v$ can be relabeled from $a$ to $b$ in a single reconfiguration step if and only if the following conditions hold:
\begin{enumerate}
  \item{}\label{cond-nonempty} $|G(f,a)| > 1$;
  \item{}\label{cond-notcut} $v$ is not a cut vertex in $G(f,a)$;
  \item{}\label{cond-nbr} $v$ has at least one neighbour in $G(f,b)$; and    
  \item{}\label{cond-edges} $v$ is not a $b$-crucial vertex.
\end{enumerate}    
\end{lemma}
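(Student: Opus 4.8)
The plan is to check directly when the labeling $f'$ obtained from $f$ by setting $f'(v)=b$ and leaving every other label unchanged is again an $H$-model, since a single reconfiguration step relabeling $v$ from $a$ to $b$ is exactly the passage from $f$ to such an $f'$. The structural point that makes this manageable is that the only branch sets that change are $G(f',a)=G(f,a)\setminus\{v\}$ and $G(f',b)=G(f,b)\cup\{v\}$, where in $G(f',b)$ the new vertex $v$ is joined precisely to its neighbours in $G(f,b)$; every other branch set, and every connecting edge not incident with $v$, is untouched. I will pair each of conditions~\ref{cond-nonempty}--\ref{cond-edges} with exactly one requirement of Definition~\ref{def-model} and argue that it is simultaneously necessary and sufficient for that requirement, so that both directions of the equivalence are obtained at once.

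For the vertex requirement of Definition~\ref{def-model}: $G(f',a)$ is nonempty if and only if $|G(f,a)|>1$, which is condition~\ref{cond-nonempty}; and, assuming condition~\ref{cond-nonempty}, the connected graph $G(f,a)$ with $v$ removed is connected if and only if $v$ is not a cut vertex of $G(f,a)$, which is condition~\ref{cond-notcut}. On the other side, $G(f',b)$ is automatically nonempty, and since $G(f,b)$ is nonempty and connected, adjoining $v$ keeps it connected if and only if $v$ has a neighbour in $G(f,b)$ (otherwise $\{v\}$ is an isolated component), which is condition~\ref{cond-nbr}. All remaining branch sets are unchanged, hence still nonempty and connected.

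For the edge requirement of Definition~\ref{def-model}, edges of $H$ not meeting $\{a,b\}$ are clearly unaffected, and any edge $bj\in E(H)$ with $j\neq a$ keeps a representative because $G(f,b)\subseteq G(f',b)$ and $G(f,j)$ is unchanged. If $ab\in E(H)$, then by conditions~\ref{cond-nonempty} and~\ref{cond-notcut} the set $G(f,a)\setminus\{v\}$ is nonempty, and since $G(f,a)$ is connected on at least two vertices $v$ has a neighbour $u$ in it, so $uv$ connects $G(f',a)$ and $G(f',b)$. Hence the only edges of $H$ whose representation can be destroyed are those $ac\in E(H)$ with $c\neq b$: an $a$--$c$ connecting edge of $f$ whose $a$-endpoint is $v$ becomes a $b$--$c$ edge in $f'$, and no new $a$--$c$ connecting edge can appear, so $f'$ has an $a$--$c$ connecting edge if and only if not every $a$--$c$ connecting edge of $f$ has endpoint $v$, that is, if and only if $v$ is not essential for $c$. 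Therefore the edge requirement holds for $f'$ exactly when $v$ is not essential for any label $c\notin\{a,b\}$ with $ac\in E(H)$; granting condition~\ref{cond-nbr}, so that $v$ has a neighbour in $G(f,b)$, and using that an essential vertex for $c$ necessarily has a neighbour in $G(f,c)$, this is precisely condition~\ref{cond-edges}, that $v$ is not a $b$-crucial vertex.

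I expect the only delicate step to be the last equivalence above. One has to be careful that the relevant obstruction is ``$v$ is essential for some third label $c$ with $ac\in E(H)$'' (which is the reading of ``essential'' consistent with essential edges), that this already supplies the clause ``$v$ has a neighbour in $G(f,c)$'' in the definition of $b$-crucial, and that condition~\ref{cond-nbr} is exactly what supplies the remaining clause ``$v$ has a neighbour in $G(f,b)$'', so that the match with condition~\ref{cond-edges} is exact and introduces no spurious cases. Everything else is routine bookkeeping about which branch sets and connecting edges a single relabel can touch.
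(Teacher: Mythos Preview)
Your proposal is correct and follows essentially the same approach as the paper: both define the relabeled mapping and verify directly against Definition~\ref{def-model} which branch sets and connecting edges can be affected, matching each of the four conditions to a requirement that could fail. Your treatment is somewhat more careful than the paper's in tracking the biconditional structure (in particular, your observation that condition~\ref{cond-nbr} is exactly what is needed to make ``not essential for any $c\notin\{a,b\}$'' coincide with ``not $b$-crucial'' is a point the paper leaves implicit), but the argument is the same.
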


  \begin{proof}
We define a mapping $g$ such that $g(v) = b$ and for every $u \ne v$, $g(u) = f(u)$.  We first show that if any of the conditions do not hold, then $g$ is not an $H$-model.  If condition~\ref{cond-nonempty} is violated, then in $g$ there is no vertex with label $a$, and if condition~\ref{cond-notcut} is violated, $G(g,a)$ is not connected. If condition~\ref{cond-nbr} is violated, then $G(g,b)$ is not connected.  Finally, if condition~\ref{cond-edges} does not hold, there exists a label $c$ such that $ac \in E(H)$ but there is no edge between $G(g,a)$ and $G(g,c)$.

To complete the proof, we now observe that if all the conditions hold, then $g$ is an $H$-model.  In particular, each branch set is nonempty and connected and for each edge $cd$ in $E(H)$, there exists at least one edge between $G(g,c)$ and $G(g,d)$. 
\end{proof}

Because several of the conditions hold automatically for a universal vertex, the following lemma lists a smaller number of conditions:

\begin{lemma}\label{lemma-universal}
  Given a graph $G$ and an $H$-model $f$ of $G$, a vertex $v$ can be relabeled from $a$ to $b$ in a single reconfiguration step if the following conditions hold:
  \begin{enumerate}
  \item{}\label{universal-nonempty} $G(f,a)$ contains a universal vertex $u$ such that $u \ne v$; and
  \item{}\label{universal-nbr} $v$ has at least one neighbour in $G(f,b)$.
  \end{enumerate}
\end{lemma}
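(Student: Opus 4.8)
The plan is to show that the hypotheses of Lemma~\ref{lemma-universal} imply all four conditions of Lemma~\ref{lemma-conds}, so that the single-step relabeling is licensed directly by the earlier characterization. So suppose $G(f,a)$ contains a universal vertex $u \ne v$ and $v$ has a neighbour in $G(f,b)$; I would verify conditions \ref{cond-nonempty}--\ref{cond-edges} in turn. Condition~\ref{cond-nbr} is exactly hypothesis~\ref{universal-nbr}, and condition~\ref{cond-nonempty} follows immediately since $u$ and $v$ are two distinct vertices of $G(f,a)$, so $|G(f,a)| \ge 2 > 1$.

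For condition~\ref{cond-notcut}, I would argue that $v$ is not a cut vertex of $G(f,a)$. Since $u$ is a universal vertex of $G$, it is in particular adjacent to every other vertex of $G(f,a)$; hence $G(f,a) \setminus \{v\}$ still contains $u$ adjacent to all remaining vertices, so it is connected. Therefore deleting $v$ does not disconnect $G(f,a)$, and $v$ is not a cut vertex.

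The only condition requiring a little more care is condition~\ref{cond-edges}: $v$ must not be a $b$-crucial vertex, i.e.\ there is no label $c$ with $c \ne b$ such that $v$ is essential for $c$ and has a neighbour in $G(f,b)$. I would handle this by exhibiting, for every label $c \ne a$ with $ac \in E(H)$, a connecting edge from $G(f,a)$ to $G(f,c)$ that avoids $v$: namely, since $u$ is universal, $u$ is adjacent to some (indeed every) vertex of $G(f,c)$, giving a connecting edge $u w$ with $w \in G(f,c)$ and $u \ne v$. Thus for no $c$ is $v$ the unique endpoint in $G(f,a)$ of all edges to $G(f,c)$; in particular $v$ is not essential for any $c$, so it is not $b$-crucial. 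This establishes condition~\ref{cond-edges}, and Lemma~\ref{lemma-conds} then yields the claim. I expect this last step to be the only place where one must be slightly careful, but the universality of $u$ makes it routine; there is no real obstacle.
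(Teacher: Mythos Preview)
Your proposal is correct and follows essentially the same approach as the paper's own proof: both verify the four conditions of Lemma~\ref{lemma-conds} directly, using the universality of $u$ to handle conditions~\ref{cond-notcut} and~\ref{cond-edges}. Your treatment is slightly more explicit about why $v$ cannot be essential for any label, but the argument is the same.
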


  \begin{proof}
It suffices to show that all conditions in Lemma~\ref{lemma-conds} hold.  Condition~\ref{cond-nonempty} follows from the existence of a vertex $u \ne v$ in $G(f,a)$ and condition~\ref{cond-nbr} follows from the existence of a neighbour of $v$ in $G(f,b)$.  The remaining two conditions follow from $u$ being universal: since there are edges from $u$ to each vertex in the branch set, $v$ is not a cut vertex (satisfying condition~\ref{cond-notcut}), and since there are edges from $u$ to each vertex in the graph, condition~\ref{cond-edges} holds. 
\end{proof}

When neither Lemma~\ref{lemma-conds} nor Lemma~\ref{lemma-universal} applies, the relabeling of a vertex requires a series of reconfiguration steps.  When the vertex to be relabeled is the only member of its branch set or a crucial vertex, we first need to fill its branch set with new vertices that can provide the necessary connecting edges to other branch sets.  When the vertex to be relabeled is a cut vertex, we will need to siphon away vertices from its branch set so that it is no longer a cut vertex among the remaining vertices with its label.

When a branch set is a block tree, both filling and siphoning entail the relabeling of vertices in a branch set block by block, starting at the leaf blocks.
If we are able to relabel all the interior vertices of a leaf block, we can simplify the block tree by removing the leaf block.
We will show in Lemma~\ref{lemma-slurp-component} that such relabeling is possible as long as we can avoid certain bad situations involving {\em leaf-crucial models} and {\em leaf-$\ell$-crucial models}, as outlined in Definitions~\ref{def-leaf-crucial-vertex}, \ref{def-limited-vertex}, and \ref{def-hit-bad-vertex}.

\begin{definition}\label{def-leaf-crucial-vertex}
Given a graph $G$ and an $H$-model $f$ of $G$, we say that a vertex $v$ is a {\em leaf-crucial vertex} if $v$ is a crucial vertex that is an interior vertex in a leaf block in its branch set.  An $H$-model that contains a leaf-crucial vertex is a {\em leaf-crucial model}.
\end{definition}

\begin{definition}\label{def-limited-vertex}
  Given a graph $G$ and an $H$-model $f$ of $G$, we say that a vertex $v$ is a {\em leaf-$\ell$-crucial vertex} if $v$ is an $\ell$-crucial vertex that is an interior vertex in a leaf block in its branch set. 
  An $H$-model that contains a leaf-$\ell$-crucial vertex is a {\em leaf-$\ell$-crucial model}.
\end{definition}

\begin{definition}\label{def-hit-bad-vertex}
  Given a graph $G$, an $H$-model $f$ of $G$, a label $a$, and a subgraph $A$ of $G(f,a)$,  we say that {\em $f$ hits a leaf-crucial model on relabeling $A$} if any relabeling of $f$ that changes only the vertices of $A$ can be extended by relabeling only the vertices of $A$ to reach a leaf-crucial model, and that {\em $f$ hits a leaf-$\ell$-crucial model on relabeling $A$} if any relabeling of $f$ that changes only the vertices in $A$ can be extended by relabeling only the vertices of $A$ to reach a leaf-$\ell$-crucial model.
\end{definition}

Observations~\ref{obs-no-crucial} and~\ref{obs-no-limited-vertex} follow from the definitions above.

\begin{observation}\label{obs-no-crucial}
  Given a graph $G$, an $H$-model $f$ of $G$, a label $a$, and a subgraph $A$ of $G(f,a)$, if $f$ does not hit a leaf-crucial model on relabeling $A$, then for each model $g$ reachable from $f$ by relabeling only the vertices of $A$, no interior vertex in a leaf block of $T(G,g,a)$ is crucial.
\end{observation}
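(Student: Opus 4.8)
The plan is to prove the contrapositive: supposing that some $H$-model $g$ reachable from $f$ by relabeling only the vertices of $A$ has an interior vertex $v$ of a leaf block of $T(G,g,a)$ that is crucial, I will show that $f$ hits a leaf-crucial model on relabeling $A$.

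The first step is a short definition chase showing that such a $g$ is already a leaf-crucial model. Since $v$ lies in a leaf block of $T(G,g,a)$, it belongs to $V(G(g,a))$, so the branch set of $v$ under $g$ is precisely $G(g,a)$; thus $v$ is a crucial vertex that is an interior vertex of a leaf block of its own branch set, which is exactly the condition in Definition~\ref{def-leaf-crucial-vertex}. Hence $v$ is leaf-crucial and $g$ is a leaf-crucial model.

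The second step invokes the reversibility of reconfiguration steps. Because two $H$-models are adjacent exactly when they differ in a single label, reversing a reconfiguration sequence that changes only vertices of $A$ yields another reconfiguration sequence that changes only vertices of $A$; consequently ``reachable from $f$ by relabeling only the vertices of $A$'' is a symmetric and transitive relation. So, given any relabeling $h$ of $f$ obtained by relabeling only vertices of $A$, I can reverse an $A$-only sequence from $f$ to $h$ and then append the given $A$-only sequence from $f$ to $g$, producing an $A$-only relabeling sequence from $h$ to the leaf-crucial model $g$. Therefore every relabeling of $f$ that changes only the vertices of $A$ can be extended, by relabeling only the vertices of $A$, to reach a leaf-crucial model; by Definition~\ref{def-hit-bad-vertex} this says exactly that $f$ hits a leaf-crucial model on relabeling $A$, contradicting the hypothesis.

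I do not expect a serious obstacle here, as the content is almost entirely unwinding Definitions~\ref{def-leaf-crucial-vertex} and~\ref{def-hit-bad-vertex}. The one place that needs attention is that ``$f$ hits a leaf-crucial model on relabeling $A$'' quantifies over \emph{all} intermediate relabelings of $f$, so the argument must use the reversibility of single-label changes to route an arbitrary such intermediate model $h$ back through $f$ to the offending model $g$, rather than reasoning only about $f$ itself.
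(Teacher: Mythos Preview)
Your proposal is correct and matches the paper's approach: the paper gives no proof at all, stating only that the observation follows directly from Definitions~\ref{def-leaf-crucial-vertex} and~\ref{def-hit-bad-vertex}, and your contrapositive argument is precisely the unpacking of those definitions. Your explicit use of reversibility to handle the universal quantifier over intermediate relabelings $h$ in Definition~\ref{def-hit-bad-vertex} is a point the paper glosses over but which is needed to make the claim rigorous.
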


  \begin{observation}\label{obs-no-limited-vertex}
  Given a graph $G$, an $H$-model $f$ of $G$, a label $a$, and a subgraph $A$ of $G(f,a)$, if $f$ does not hit a leaf-$\ell$-crucial model on relabeling $A$, then for each model $g$ reachable from $f$ by relabeling only the vertices of $A$, no interior vertex in a leaf block of $T(G,g,a)$ is $\ell$-crucial.
  \end{observation}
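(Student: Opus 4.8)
The plan is to argue by contradiction: assume the conclusion fails and derive that $f$ hits a leaf-$\ell$-crucial model on relabeling $A$, contradicting the hypothesis. So suppose there is an $H$-model $g$ reachable from $f$ by a sequence of reconfiguration steps, each relabeling a vertex of $A$, such that some interior vertex $v$ in a leaf block of $T(G,g,a)$ is $\ell$-crucial. By Definition~\ref{def-limited-vertex}, $v$ is a leaf-$\ell$-crucial vertex of $g$, so $g$ is a leaf-$\ell$-crucial model.

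The key point is that reachability ``by relabeling only the vertices of $A$'' is symmetric and transitive: a single reconfiguration step is undone by a single reconfiguration step relabeling the same vertex, and concatenating two such sequences again relabels only vertices of $A$. Hence, given any relabeling $f'$ of $f$ that changes only vertices of $A$, with resulting model $m$, we may reverse the sequence from $f$ to $m$ and then append the sequence from $f$ to $g$; the result extends $f'$, relabels only vertices of $A$, and reaches the leaf-$\ell$-crucial model $g$. Since this works for every such $f'$, Definition~\ref{def-hit-bad-vertex} says that $f$ hits a leaf-$\ell$-crucial model on relabeling $A$, the desired contradiction. The same argument with ``$\ell$-crucial'' replaced by ``crucial'' proves Observation~\ref{obs-no-crucial}.

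I expect the only real content to be this bit of bookkeeping about reversing and concatenating relabeling sequences, which is what makes ``$f$ does not hit a leaf-$\ell$-crucial model on relabeling $A$'' equivalent to ``no $H$-model obtainable from $f$ by relabeling only vertices of $A$ is leaf-$\ell$-crucial''; granting that, the statement is just Definition~\ref{def-limited-vertex} read off the relevant model.
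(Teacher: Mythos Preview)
Your argument is correct and matches the paper's intent: the paper gives no proof at all, stating only that the observation ``follows from the definitions above,'' and what you have written is precisely the definitional unpacking (including the reversibility of reconfiguration steps) that justifies this. Your remark that the only content is the bookkeeping about reversing and concatenating sequences is exactly right.
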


In Lemma~\ref{lemma-invariant}, we show that each time we relabel an interior vertex in a leaf block, if the leaf block still has interior vertices, one will be a neighbour of the relabeled vertex.  We use the result in Lemma~\ref{lemma-relabel-leafblock} to show that if we can avoid leaf-crucial models, then it is possible to relabel all the interior vertices in a leaf block (and hence remove it from the branch set).  By repeatedly relabeling leaf blocks, an entire connected component can be siphoned away, as shown in Lemma~\ref{lemma-slurp-component}.

\begin{lemma}\label{lemma-invariant}
  Given a graph $G$ and an $H$-model $f$ of $G$, suppose there exist labels $a$ and $b$ and a vertex $v$ such that $|G(f,a)| \ge 2$, $v$ is in a leaf block $L$ of $T(G,f,a)$, and relabeling $v$ to $b$ (and no other vertices) results in another $H$-model $g$.  Then $v$ has a neighbour in $G(g,a)$, and if $|V(L) \setminus \{v\}|\ge 2$, then $v$ has a neighbour $u$ in $G(g,a)$ such that $u \in V(L)$ and $u$ is an interior vertex in a leaf block of $T(G,g,a)$.  
\end{lemma}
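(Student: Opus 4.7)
The plan is to handle the two claims in turn, the first by a direct appeal to the connectedness of $G(f,a)$ and the second by exploiting the 2-connectivity of $L$. A preliminary observation used in both parts: since $g$ is an $H$-model obtained from $f$ by relabeling only $v$, Lemma~\ref{lemma-conds} gives that $v$ is not a cut vertex of $G(f,a)$, so $v$ cannot be a joining vertex of $L$; consequently $v$ is an interior vertex of $L$.

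For the first claim, connectedness of $G(f,a)$ together with $|G(f,a)| \ge 2$ forces $v$ to have at least one neighbour $w$ in $V(G(f,a)) \setminus \{v\}$. Since $g$ agrees with $f$ outside $v$, this vertex $w$ lies in $V(G(g,a))$, which is exactly what is required.

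For the second claim, the assumption $|V(L) \setminus \{v\}| \ge 2$ forces $|V(L)| \ge 3$, so $L$ is not a bridge block but a genuinely 2-connected subgraph. Hence $L' := G[V(L) \setminus \{v\}]$ is connected on at least two vertices. The next step is to locate a leaf block $B^*$ of $T(G,g,a)$ whose vertex set lies inside $V(L')$: inside $T(G,g,a)$, the subtree induced by blocks meeting $V(L')$ coincides with the block tree of $L'$, attached to the rest of $T(G,g,a)$ (if any) at $c$, the joining vertex of $L$ in $T(G,f,a)$. I would take $B^*$ to be any leaf of this subtree distinct from the block containing $c$; if the subtree consists of a single block containing $c$, then that single block itself serves as $B^*$, and its joining vertex in $T(G,g,a)$ is $c$. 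Write $c^*$ for the joining vertex of $B^*$ in $T(G,g,a)$ (undefined in the boundary case where $L = G(f,a)$ and $L'$ forms a single block).

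The crux will be to show $v$ has a neighbour in $V(B^*) \setminus \{c^*\}$, which supplies the required $u$. Fix any interior vertex $w$ of $B^*$ and consider two internally vertex-disjoint paths $P_1, P_2$ from $v$ to $w$ in $L$, guaranteed by 2-connectivity. Walking along each $P_i$, the first vertex in $V(B^*) \setminus \{c^*\}$ that the path reaches is either the second vertex of $P_i$ (in which case $v$ is adjacent to an interior vertex of $B^*$, as required) or must be reached via an edge from $c^*$, because in $L'$ the only vertex of $B^*$ with neighbours outside $B^*$ is $c^*$. If neither $P_i$ takes the first option, then both use $c^*$ as an internal vertex, contradicting vertex-disjointness. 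The main obstacle will be handling the degenerate configurations cleanly: when $L'$ is a single block and either $L = G(f,a)$ (so no $c^*$ exists) or $c^* = c$ is not a cut vertex of $L'$. In those subcases the path argument is replaced by the direct observation that $v$ has at least two neighbours in $L$, all lying in $V(B^*)$, and at most one of them can equal $c^*$, so at least one must lie in the interior of $B^*$.
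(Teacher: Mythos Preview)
Your proof is correct and takes essentially the same approach as the paper: both exploit the $2$-connectivity of $L$ to show that $v$ has a neighbour that is an interior vertex of some leaf block of $T(G,g,a)$ contained in $V(L)\setminus\{v\}$. The paper's version is more compressed---it first picks a neighbour $u$ of $v$ lying in a leaf block $L_g$ and then argues by contradiction that $u$ would otherwise lie on every $v$--$w$ path in $L$ for $w$ interior to $L_g$---whereas you first isolate the leaf block $B^*$ and use the two-disjoint-paths form of Menger, but these are the same argument in dual form.
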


  \begin{proof} We observe that since Lemma~\ref{lemma-conds} holds for the relabeling of $v$ from $a$ to $b$, $|G(f,a)| \ge 2$, and since each branch set is connected, $v$ must then have a neighbour in $G(g,a)$.

  We now suppose that $|V(L) \setminus \{v\}| \ge 2$.  At least one neighbour $u \in V(L)$ of $v$ is in a leaf block $L_g$ of $T(G,g,a)$, as $v$ is not a cut vertex of $G(f,a)$ (by Lemma~\ref{lemma-conds}, condition~\ref{cond-notcut}) and $L$ is a $2$-connected leaf block of $T(G,f,a)$.  If $u$ is an interior vertex in $G(g,a)$, we are done. If instead $u$ is a joining vertex and there exists no interior vertex in $L_g$ that is a neighbour of $v$, then for each interior vertex $w \in V(L_g)$, $u$ lies on every path from $v$ to $w$ in $L$, which contradicts the fact that $L$ is $2$-connected. Hence, $v$ is adjacent to an interior vertex in $G(g,a)$.  
  \end{proof}

\begin{lemma}\label{lemma-relabel-leafblock}
  Given a graph $G$, an $H$-model $f$ of $G$, a label $a$, and a leaf block $L$ of $T(G,f,a)$, suppose that $|V(G(f,a))\setminus V(L)| \geq 1$, $L$ contains at least one interior vertex that is the endpoint of a connecting edge, and $f$ does not hit a leaf-crucial model on relabeling $L$.
  Then we can reconfigure $f$ to a model $g$ such that $g(v)\neq f(v)$ for each $v\in V(L)$ that is an interior vertex of $G(f,a)$ and $g(u) = f(u)$ for all other vertices $u$.
\end{lemma}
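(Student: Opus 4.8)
The plan is to relabel the interior vertices of $L$ one at a time, in an order that spreads outward from the interior vertex guaranteed by hypothesis to be the endpoint of a connecting edge. Write $I$ for the set of interior vertices of $G(f,a)$ lying in $L$ and $j$ for the unique joining vertex of $L$; the latter exists because $|V(G(f,a))\setminus V(L)|\ge 1$ forces $L\neq G(f,a)$. I would run a loop that maintains a model $g$ (initially $f$) together with the set $R\subseteq I$ of vertices already relabeled, with the invariant that $R$ induces a connected subgraph of $L$ and every vertex of $R$ currently carries a label different from $a$; since $|I\setminus R|$ strictly decreases each iteration, the loop terminates with $R=I$.

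Each iteration performs one relabeling, justified as follows. Because $f$ does not hit a leaf-crucial model on relabeling $L$, Observation~\ref{obs-no-crucial} (applied with $A=L$) guarantees that no interior vertex of a leaf block of $T(G,g,a)$ is crucial, for every $H$-model $g$ reachable from $f$ by relabeling only vertices of $L$ — and every model arising in the loop is of this kind. So if $w$ is an interior vertex of a leaf block of the current $T(G,g,a)$ that is the endpoint of a connecting edge, then $w$ is not crucial, and Observation~\ref{obs-non-crucial} yields a label $b\neq a$ such that $w$ has a neighbour in the branch set with label $b$ and $w$ is not $b$-crucial. The four conditions of Lemma~\ref{lemma-conds} then hold for relabeling $w$ to $b$: condition~\ref{cond-nonempty} because $j$ and some vertex outside $L$ still carry label $a$; condition~\ref{cond-notcut} because an interior vertex of a leaf block is not a cut vertex of its branch set; and conditions~\ref{cond-nbr} and~\ref{cond-edges} by the choice of $b$. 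We first apply this to the hypothesised connecting-edge interior vertex $v_1$ of $L$, setting $R=\{v_1\}$.

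On a later iteration, with $R\subsetneq I$ and current model $g$, I would obtain the next vertex to relabel by applying Lemma~\ref{lemma-invariant} to the most recent relabeling — or, when that lemma's size condition fails (the degenerate case where the leaf block just vacated had only two vertices), by a short direct argument from the $2$-connectivity of $L$ — producing a vertex $y\in I\setminus R$ that is an interior vertex of a leaf block of $T(G,g,a)$ and has a neighbour in $R$. Since every vertex of $R$ carries a label other than $a$, that edge is a connecting edge, so the single-relabeling argument above applies to $y$; we then set $R\leftarrow R\cup\{y\}$, which keeps $R$ connected in $L$ and within $I$. When the loop ends we have relabeled precisely the vertices of $I$, each away from $a$, and touched no other vertex, giving the required $g$. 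One checks that the remaining hypotheses carry over throughout: the vertex outside $L$ is never relabeled, and any leaf block that the argument descends into has vertex set contained in $V(L)$, so ``does not hit a leaf-crucial model on relabeling $L$'' continues to apply.

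The main obstacle is precisely this bookkeeping — in particular the existence claim in the previous paragraph. As the block tree $T(G,g,a)$ is repeatedly modified by deleting vertices of $L$, one must always find a leaf block meeting $I\setminus R$ in an interior vertex adjacent to $R$. Lemma~\ref{lemma-invariant} delivers this immediately except when the leaf block just emptied had only two vertices; the delicate part is handling those cases by arguing that the $2$-connectivity of the original block $L$ still forces some interior vertex of some leaf block of $G(g,a)$ to be adjacent to $R$. Once that is in place, the rest is routine.
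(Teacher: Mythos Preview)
Your proposal is correct and follows the same approach as the paper: iteratively relabel interior vertices of $L$ via Lemma~\ref{lemma-conds} (invoking Observation~\ref{obs-non-crucial} when condition~\ref{cond-edges} fails), and use Lemma~\ref{lemma-invariant} to locate the next vertex with a connecting edge. The degenerate case you flag as the main obstacle is in fact glossed over in the paper's own proof, which simply says ``repeatedly apply the same argument until all interior vertices in $L$ have been relabeled''; your proposed $2$-connectivity argument for $L$ is the right way to close that gap and is, if anything, more careful than the original.
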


  \begin{proof}
  We show that all interior vertices of $L$ can be relabeled.  By assumption, there exists an interior vertex, say $v$, that is the endpoint of a connecting edge, say to $G(f,b)$.  Since $|V(G(f,a))\setminus V(L)| \geq 1$, relabeling $v$ to $b$ satisfies conditions \ref{cond-nonempty}, \ref{cond-notcut}, and \ref{cond-nbr} in Lemma \ref{lemma-conds}.  If condition~\ref{cond-edges} also holds, then we can relabel $v$ to $b$.

If instead condition~\ref{cond-edges} does not hold for the relabeling of $v$ to $b$, then $v$ is a $b$-crucial vertex.  Because $f$ does not hit a leaf-crucial model on relabeling $L$, $v$ is not a leaf-crucial vertex and hence not a crucial vertex. Thus, by Observation~\ref{obs-non-crucial} there exists a label $c\notin \{a,b\}$ such that $v$ has a neighbour labeled $c$ and $v$ is not a $c$-crucial vertex. Hence all conditions of Lemma~\ref{lemma-conds} are satisfied for the relabeling of $v$ to $c$.

As $v$ was an interior vertex in $L$ that could be relabeled, then by Lemma~\ref{lemma-invariant} either all interior vertices have been relabeled, or $v$ has a neighbour that is an interior vertex in $L$.  In the latter case, we can repeatedly apply the same argument until all interior vertices in $L$ have been relabeled.

\end{proof}

\begin{lemma}\label{lemma-slurp-component}
Given a $2$-connected graph $G$ and an $H$-model $f$ of $G$, suppose there exist  $ab\in E(H)$, a cut vertex $x$ of $G(f,a)$, and a connected component $C$ of $G(f,a)\setminus\{x\}$ that contains at least one vertex with a neighbour in $G(f,b)$ such that $f$ does not hit a leaf-crucial model or a leaf-$b$-crucial model on relabeling $C$.  Then we can reconfigure $f$ to a model $g$ such that $g(v)\neq f(v)$ for each $v\in V(C)$, $g(u) = f(u)$ for all $u\notin V(C)$, and $x$ has a neighbour in $G(g,b)$.
\end{lemma}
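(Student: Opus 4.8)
The plan is to process the component $C$ block by block, using the block tree $T(G,f,a,C)$ and peeling off leaf blocks via Lemma~\ref{lemma-relabel-leafblock}, until all of $C$ has been relabeled away from $a$; along the way I must make sure that at the end $x$ retains a neighbour in $G(f,b)$, which is where the leaf-$b$-crucial hypothesis comes in. First I would set up the induction: order the blocks of $T(G,f,a,C)$ so that $x$ lies in the ``root'' block (note $x$ is a cut vertex of $G(f,a)$, so $C$ together with $x$ is a connected subgraph of $G(f,a)$ whose block tree we root at a block containing $x$), and induct on the number of blocks. At each stage I take a leaf block $L$ of the current block tree, distinct from the root block, and relabel all its interior vertices. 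To invoke Lemma~\ref{lemma-relabel-leafblock} I need three things: (i) there is at least one vertex outside $L$ still labeled $a$ --- true since $x\notin V(L)$ when $L$ is not the root block; (ii) $L$ contains an interior vertex incident with a connecting edge --- this follows from $2$-connectivity of $G$ by Lemma~\ref{lemma-leafblock} with $k=2$, since every leaf block of a branch set in a $2$-connected graph has $k-1=1$ interior endpoint of a connecting edge; and (iii) $f$ (or the current model) does not hit a leaf-crucial model on relabeling $L$, which follows from the hypothesis that $f$ does not hit a leaf-crucial model on relabeling $C$, since any relabeling confined to $L\subseteq C$ is in particular a relabeling confined to $C$.

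The subtle point --- and the main obstacle --- is preserving a $G(f,b)$-neighbour of $x$. At the moment we are about to relabel the last leaf block $L^\ast$ adjacent to the root block in the block tree (or the root block itself, if $x$'s component shrinks down to it), $x$ might be the \emph{only} remaining vertex carrying a connecting edge of the right kind, and relabeling interior vertices of $L^\ast$ could strip $x$ of all its $b$-neighbours unless we are careful. Here is how I would handle it. Before relabeling $L^\ast$, consider whether $x$ currently has a neighbour in $G(f,b)$. If it does, and that neighbour lies outside $C$, we are safe regardless of what happens inside $C$. If $x$'s only $b$-neighbours lie inside $C$ --- necessarily inside $L^\ast$ or a block already processed, but processed blocks are gone, so inside $L^\ast$ --- then some interior vertex $w$ of $L^\ast$ is labeled $b$ and adjacent to $x$. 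In that case I do \emph{not} relabel that particular vertex $w$ to a label other than $b$; instead I argue that we can relabel all \emph{other} interior vertices of $L^\ast$, leaving $w$ with label $b$, and then relabel the remaining structure (now $w$ sits in a branch set; but $w$ is no longer a cut vertex issue because...) --- this needs care. A cleaner route: among all vertices of $C$ with a neighbour in $G(f,b)$, by hypothesis there is at least one; pick a vertex $y\in V(C)$ with a neighbour $z\in V(G(f,b))$, and \emph{first} relabel $y$ to $b$ (checking Lemma~\ref{lemma-conds} conditions --- $|G(f,a)|\ge 2$ since $x$ is a cut vertex so $G(f,a)$ has other vertices too, $y$ can be chosen non-cut in $G(f,a)$ by taking it interior in a leaf block of $T(G,f,a,C)$, $y$ has neighbour $z$ labeled $b$, and $y$ is not $b$-crucial because $f$ does not hit a leaf-$b$-crucial model on relabeling $C$). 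Now $G(f,b)$ has grown to include $y$, and the rest of $C\setminus\{y\}$ can be siphoned off block by block as above without any further concern for $x$'s $b$-neighbour, \emph{provided} we arrange for $x$ to end up adjacent to this enlarged $G(f,b)$. Since $y\in V(C)$ and $C$ is connected, $x$ need not be adjacent to $y$ directly; so I must instead maintain the invariant that some vertex of the shrinking $a$-labeled part of $C\cup\{x\}$ adjacent to $x$ carries label $b$, or else push the label $b$ along a path toward $x$.

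Concretely, I would argue: let $P = x = p_0, p_1, \ldots, p_t = y$ be a path in $G(f,a)\cup\{y\}$ from $x$ to $y$ with $p_1,\ldots,p_{t-1}\in V(C)$ (exists since $C\cup\{x\}$ is connected and $y\in V(C)$). After relabeling $y$ to $b$, relabel $p_{t-1}$ to $b$ --- it has neighbour $y$ labeled $b$, and the other Lemma~\ref{lemma-conds} conditions are arranged by choosing $P$ to be an induced path or by first siphoning so that $p_{t-1}$ is non-cut and non-$b$-crucial (using the leaf-$b$-crucial hypothesis again, and possibly re-choosing the block order so $p_{t-1}$ becomes an interior leaf-block vertex). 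Iterating, relabel $p_{t-2}, \ldots, p_1$ to $b$ in turn; then $p_1$ is a neighbour of $x=p_0$ with label $b$, giving the desired conclusion, while all of $V(C)$ has been relabeled (the vertices $p_1,\ldots,p_{t-1},y$ to $b$, the rest to whatever Lemma~\ref{lemma-relabel-leafblock} dictates), and no vertex outside $C$ changed. The bookkeeping obstacle is verifying at each single step that conditions \ref{cond-notcut} and \ref{cond-edges} of Lemma~\ref{lemma-conds} hold --- that is, that the vertex being relabeled is not a cut vertex of the current $a$-branch set and is not $b$-crucial. The non-cut condition is handled by always working from leaf blocks of $T(G,\cdot,a,C)$ and observing that $|G(f,a)\setminus V(C)|\ge 1$ throughout since $x$ is never relabeled; the not-$b$-crucial condition is exactly what the ``does not hit a leaf-$b$-crucial model on relabeling $C$'' hypothesis buys us, via Observation~\ref{obs-no-limited-vertex}, so that whenever the direct relabel to $b$ is blocked we may instead (by Observation~\ref{obs-non-crucial}, using that we also avoid leaf-crucial models) detour through a third label $c$ and come back. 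I expect the write-up to require a careful induction statement that simultaneously tracks (a) the set of still-$a$-labeled vertices of $C$ is a connected subgraph hanging off $x$, and (b) $x$ has a neighbour labeled $b$ as soon as the first vertex of $C$ has been moved to $b$.
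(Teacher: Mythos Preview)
Your overall strategy---root the block tree of $C\cup\{x\}$ at the block containing $x$ and peel off leaf blocks via Lemma~\ref{lemma-relabel-leafblock}---matches the paper's. Where you diverge is in guaranteeing that $x$ ends up with a neighbour labeled $b$: you propose relabeling a single vertex $y$ to $b$ and then ``pushing'' label $b$ along an explicit path $P$ from $y$ back to $x$. This forces you to worry, at each step, whether the current path vertex $p_i$ is a cut vertex of the shrinking $a$-branch set, and you acknowledge this as an unresolved obstacle (``possibly re-choosing the block order'', ``first siphoning so that $p_{t-1}$ is non-cut'').

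The paper avoids this bookkeeping entirely by strengthening what happens at each leaf block. Its two cases are: if a leaf block $L$ has \emph{no} interior vertex with a $b$-neighbour, relabel its interior vertices arbitrarily via Lemma~\ref{lemma-relabel-leafblock}; if $L$ \emph{does} have such an interior vertex, relabel \emph{all} interior vertices of $L$ to $b$ (Lemma~\ref{lemma-invariant} plus the leaf-$b$-crucial hypothesis let you do this one vertex at a time, each newly-$b$-labeled vertex supplying the required $b$-neighbour for the next). The point is that once an entire leaf block's interior is labeled $b$, its joining vertex automatically acquires a $b$-neighbour, so the next time that vertex becomes interior in a leaf block, Case~2 applies again. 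Thus $b$ propagates up the tree at the \emph{block} level rather than along a vertex path, and the cut-vertex condition is never an issue because you only ever relabel interior vertices of leaf blocks. When you finally reach the root block $B$, Case~2 gives every vertex of $B$ except $x$ the label $b$, so $x$ has a $b$-neighbour.

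Your path-pushing argument could likely be completed (siphon away the non-$x$ side of each $p_i$ before relabeling it), but it is a more fragile encoding of the same propagation; the paper's ``relabel the whole leaf block to $b$ whenever possible'' is the clean formulation you are circling around.
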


\begin{proof}
  We use $B$ to denote the block of $T(G,f,a,C)$ containing $x$, and view $T(G,f,a,C)$ as rooted at $B$.  We observe that any leaf block of $T(G,f,a,C)$ is also a leaf block of $T(G,f,a)$.

  To reconfigure $f$ to $g$, we work up the tree $T(G,f,a,C)$ from leaf blocks up to $B$, at each step relabeling all the vertices in the current block with labels different from $a$.  Specifically, if a leaf block does not have an interior vertex with a neighbour labeled $b$, then in Case 1 below, we can relabel the vertices in the block; such a relabeling removes the block from the branch set for label $a$.  If instead a leaf block does have an interior vertex with a neighbour labeled $b$, then in Case 2 below, we can relabel the block with $b$.  Such a relabeling not only removes the block from the branch set for label $a$, but also ensures that the joining vertex of the block has a neighbour with label $b$.  Repeated applications of the two cases suffice to ensure that we eventually reach a point in the process at which $B$ is a leaf block and contains an interior vertex with a neighbour labeled $b$; using Case 2, we can then satisfy the statement of the lemma by ensuring that every vertex in $B$ except $x$ receives label $b$.

\smallskip
\noindent{\bf Case 1}: A leaf block $L$ of $T(G, f, a, C)$ does not contain an interior vertex with a neighbour in $G(f,b)$.
\smallskip

Since $G$ is 2-connected, by Lemma~\ref{lemma-leafblock}, $L$ has at least one interior vertex that is an endpoint of a connecting edge. Because $f$ does not hit a leaf-crucial model and $|V(G(f,a))\setminus V(L)| \geq 1$, by Lemma \ref{lemma-relabel-leafblock}, we can relabel the interior vertices of $L$.

\smallskip
\noindent{\bf Case 2}: A leaf block $L$ of $T(G, f, a, C)$ contains an interior vertex $v$ with a neighbour in $G(f,b)$.
\smallskip

We first observe that we can relabel $v$ to $b$: since $f$ does not hit a leaf-$b$-crucial model on relabeling $C$, $v$ is not $b$-crucial (Observation~\ref{obs-no-limited-vertex}), and hence all the conditions of Lemma~\ref{lemma-conds} hold.  We can then repeat the same argument on the resulting model $h$, as follows. 
For $L_h$ the leaf block of $T(G,h,a,C)$ such that $V(L_h) = V(L)\setminus \{v\}$, if one exists, by Lemma~\ref{lemma-invariant}, $L_h$ contains an interior vertex $u$ that is a neighbour of $v$.  We can then use the fact that $f$ does not hit a leaf-$b$-crucial model on relabeling $C$ to again apply Observation~\ref{obs-no-limited-vertex} to conclude that $u$ is not $b$-crucial and that Lemma~\ref{lemma-conds} is satisfied for the labeling of $u$ to $b$.  Further repetitions of the argument result in the relabeling of all interior vertices of $L$ to $b$.

\end{proof}

\section{Characterizing $\mbox{host}(K_2)$}\label{sec-k2}

Theorem~\ref{lemma-2conn-total} fully characterizes $\mbox{host}(K_2)$;
as a consequence, we can use membership in $\mbox{host}(K_2)$ as an alternate definition of $2$-connectivity.  The reconfiguration of a $2$-connected graph $G$ is achieved by defining a canonical model (one in which one vertex has one label and all other vertices have the other label) and then showing it is possible both to reconfigure any $K_2$-model to a canonical model and to reconfigure between canonical models.  In contrast, when $G$ is not $2$-connected, the presence of a cut vertex prevents reconfiguration, as no ordering of relabeling steps can prevent a branch set from being disconnected.

The proof makes use of the following observation:

\begin{observation}\label{obs-nosplit}
Let $f$ be a $K_2$-model of a graph $G$ containing a cut vertex $x$. Then at most one connected component of $G\setminus \{x\}$ contains both vertices labeled $a$ and vertices labeled $b$.
\end{observation}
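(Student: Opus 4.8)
The plan is to argue by contradiction, assuming that two distinct connected components $C_1$ and $C_2$ of $G \setminus \{x\}$ each contain both a vertex labeled $a$ and a vertex labeled $b$. The key structural fact to exploit is that in a $K_2$-model of $G$, the branch sets $G(f,a)$ and $G(f,b)$ partition $V(G)$ and each is connected (Definition~\ref{def-model}, condition~\ref{model-vertices}). Since $x$ belongs to exactly one of the two branch sets, say without loss of generality $x \in G(f,a)$, every vertex of $G(f,b)$ lies in $G \setminus \{x\}$, and in fact all of $G(f,b)$ must lie inside a single component of $G \setminus \{x\}$: indeed $G(f,b)$ is connected and a connected subgraph of $G \setminus \{x\}$ cannot meet two different components of $G \setminus \{x\}$. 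This already forces all $b$-labeled vertices into one component, say $C_1$, immediately contradicting the assumption that $C_2$ also contains a $b$-labeled vertex.

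Concretely, the steps I would carry out are: (1) recall that $V(G) = V(G(f,a)) \sqcup V(G(f,b))$ and both branch sets are nonempty and connected; (2) note $x$ lies in one branch set, say $G(f,a)$; (3) observe $G(f,b) \subseteq V(G) \setminus \{x\} = V(G \setminus \{x\})$, and since $G(f,b)$ induces a connected subgraph of $G \setminus \{x\}$ it is entirely contained in one connected component, say $C_1$; (4) conclude that no component other than $C_1$ contains any $b$-labeled vertex, so at most one component contains vertices of both labels; if some component $C_2 \ne C_1$ did contain both an $a$-labeled and a $b$-labeled vertex, the $b$-labeled vertex would contradict step (3).

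I do not expect any genuine obstacle here; the statement is essentially a direct consequence of connectivity of branch sets together with the fact that a connected subgraph avoiding $x$ lies in a single component of $G \setminus \{x\}$. The only point requiring a line of care is the symmetric handling of the case $x \in G(f,b)$, which is identical after swapping the roles of $a$ and $b$, so it suffices to treat one case and invoke symmetry. One could also phrase the whole argument more symmetrically by observing that whichever branch set does \emph{not} contain $x$ is confined to a single component, and a component containing vertices of both labels must therefore be that single component, of which there is at most one.
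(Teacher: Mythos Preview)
Your argument is correct and is essentially the same as the paper's: both rely on the connectivity of the branch sets together with the fact that all paths between distinct components of $G\setminus\{x\}$ pass through $x$. The only cosmetic difference is that you make explicit up front which branch set contains $x$ and then argue the other branch set is confined to a single component, whereas the paper leaves this implicit and concludes that one of the two branch sets would be disconnected.
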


\begin{proof}
Suppose two components $C_1$ and $C_2$ contain both vertices labeled $a$ and vertices labeled $b$. Since every path between a vertex in $C_1$ and a vertex in $C_2$ contains $x$, and both $G(f,a)$ and $G(f,b)$ contain vertices in $C_1$ and $C_2$, either $G(f,a)$ or $G(f,b)$ is disconnected, which contradicts the fact that $f$ is a $K_2$-model.
\end{proof}

\begin{theorem}\label{lemma-2conn-total}
$G \in \mbox{host}(K_2)$ if and only if $G$ is $2$-connected.
\end{theorem}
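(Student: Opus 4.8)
The plan is to prove the two directions separately. For the easy direction, I would show that if $G$ is not $2$-connected then the reconfiguration graph is disconnected, by exhibiting two $K_2$-models that cannot reach each other. If $G$ has fewer than two vertices the statement is vacuous or trivial, so assume $G$ has a cut vertex $x$; let $C_1$ and $C_2$ be two distinct components of $G \setminus \{x\}$. Consider the model $f_1$ that labels all of $C_1 \cup \{x\}$ with $a$ and everything else with $b$, and the model $f_2$ that labels all of $C_1$ with $a$ and $\{x\} \cup C_2 \cup \cdots$ with $b$ (i.e.\ $x$ gets label $b$); both are valid $K_2$-models since each branch set is connected and there is a connecting edge across $x$. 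The key point is that in any $K_2$-model reachable from $f_1$, by Observation~\ref{obs-nosplit} at most one component of $G \setminus \{x\}$ can contain both labels, and since a single relabeling step changes only one vertex, one checks that the ``side'' of $x$ carrying the minority label is an invariant: relabeling $x$ itself is the only way to move mass between components, and doing so would disconnect whichever branch set had vertices on both sides. Making this invariant precise is the one place requiring care, but it follows directly from Observation~\ref{obs-nosplit} together with condition~\ref{cond-notcut} of Lemma~\ref{lemma-conds}.

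For the hard direction, suppose $G$ is $2$-connected; I would show the reconfiguration graph is connected by routing every $K_2$-model to a fixed canonical target. Fix a vertex $r \in V(G)$ and let the canonical model $c_r$ assign label $a$ to $r$ and label $b$ to every other vertex; since $G$ is $2$-connected, $G \setminus \{r\}$ is connected and $r$ has a neighbour, so $c_r$ is a valid $K_2$-model. Given an arbitrary $K_2$-model $f$, the goal is to reconfigure $f$ to $c_r$ (and hence, by reversibility of reconfiguration steps, any two models are connected through, say, $c_{r}$ for a common $r$, or via $c_r$ and $c_{r'}$ plus a short path between canonical models). The strategy is: first shrink the branch set not containing $r$ down appropriately, then handle $r$'s branch set. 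Concretely, if $r$ currently has label $a$, I want to relabel every vertex of $G(f,a) \setminus \{r\}$ to $b$; if $r$ has label $b$, I first move to a model where $r$ has label $a$ and $|G(f,a)|$ is small.

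The mechanism for shrinking a branch set is exactly Lemma~\ref{lemma-slurp-component} (with $H = K_2$): if $x$ is a cut vertex of $G(f,a)$, there is a component $C$ of $G(f,a) \setminus \{x\}$ none of whose vertices we need to keep, and since $H = K_2$ has no third label, no vertex can be crucial or $b$-crucial, so $f$ trivially does not hit a leaf-crucial or leaf-$b$-crucial model; thus $C$ can be entirely relabeled to $b$ while preserving the connecting edge at $x$. Iterating, $G(f,a)$ becomes $2$-connected or a single vertex; when it is a single vertex $v \ne r$ we use $2$-connectivity of $G$ to find a neighbour of $v$ with label $a$ only after first enlarging — more cleanly, I would instead always siphon toward making $r$'s branch set shrink: when $G(f,a)$ (the branch set of whichever label $r$ has, after possibly one initial step to put $r$ in a non-singleton branch set) has more than one vertex, repeatedly apply Lemma~\ref{lemma-slurp-component} or a direct application of Lemma~\ref{lemma-conds} to relabel non-$r$ vertices, using $2$-connectivity to guarantee condition~\ref{cond-notcut} can be arranged. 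The main obstacle I anticipate is the bookkeeping to ensure we never get stuck with $r$ as an isolated-in-its-branch-set vertex that still needs a neighbour relabeled, or with the other branch set empty; this is resolved by noting that with only two labels the ``$b$-crucial'' obstruction vanishes entirely, so the only real constraints are non-emptiness (condition~\ref{cond-nonempty}) and the cut-vertex condition (condition~\ref{cond-notcut}), both of which $2$-connectivity of $G$ lets us satisfy by choosing relabeling order carefully — and finally, reconfiguring between two canonical models $c_r$ and $c_{r'}$ for adjacent $r, r'$ is a single step, so all canonical models (and hence all models) lie in one component.
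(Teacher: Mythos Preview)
Your overall architecture matches the paper's: two directions, canonical models, route everything through canonical. But both directions have genuine gaps as written.

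\textbf{Easy direction.} Your models $f_1$ and $f_2$ differ only in the label of $x$: in $f_1$ the branch set for $a$ is $C_1 \cup \{x\}$, in $f_2$ it is $C_1$. But relabeling $x$ from $a$ to $b$ in $f_1$ is a \emph{valid} single step: $|G(f_1,a)| > 1$, removing $x$ from $G(f_1,a)$ leaves the connected set $C_1$, $x$ has a neighbour in $C_2 \subseteq G(f_1,b)$, and there is no third label to make $x$ crucial. So $f_1$ and $f_2$ are adjacent in the reconfiguration graph, and they cannot witness disconnectedness. The paper instead takes $f$ with $C_1$ labeled $a$ and $C_2$ labeled $b$, and $g$ with $C_1$ labeled $b$ and $C_2$ labeled $a$; the invariant you gesture at (via Observation~\ref{obs-nosplit}) then separates these, and the paper carries out the case analysis explicitly, including the subtle case of three or more components of $G \setminus \{x\}$.

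\textbf{Hard direction.} Your claim that ``reconfiguring between two canonical models $c_r$ and $c_{r'}$ for adjacent $r,r'$ is a single step'' is false on its face (two vertices change label) and the obvious two-step fix also fails: relabeling $r'$ from $b$ to $a$ in $c_r$ requires $r'$ not to be a cut vertex of $G \setminus \{r\}$, which can fail even for adjacent $r,r'$ in a $2$-connected $G$ (take $G$ a wheel with hub $r$ and $r'$ any rim vertex --- no, that works; but take $G$ a fan plus one edge, or more simply a universal vertex $r$ attached to a path). The paper handles this by invoking Lemma~\ref{lemma-leafblock}: some leaf block of $G \setminus \{v\}$ has an interior vertex adjacent to $v$, and interior vertices of leaf blocks are never cut vertices. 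You need this observation (or an equivalent) to make the canonical-to-canonical step go through. Similarly, your shrinking argument leans on Lemma~\ref{lemma-slurp-component}, which only applies at cut vertices; when $G(f,a)$ is $2$-connected you still need to locate a non-$r$ vertex with a neighbour in $G(f,b)$, and when $G(f,a)$ has cut vertices you need to avoid relabeling $r$ itself --- both are easy with the leaf-block lemma but are not addressed in your sketch.
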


\begin{proof}
We begin by showing that if $G$ is $2$-connected, then $G \in \mbox{host}(K_2)$. In particular, we define a canonical model and then show that we can reconfigure from any $K_2$-model to a canonical model and between the two canonical models.  We designate one vertex $v$ as the {\em special vertex}; a canonical model is one in which $v$ has one label and all the other vertices have the other label.  Due to the $2$-connectivity of $G$, we know that each branch set is nonempty and connected and that there exists an edge between vertices in different branch sets, satisfying Definition~\ref{def-model}.  

To reconfigure any $K_2$-model $f$ to a canonical model, we reconfigure to the canonical model $g$ such that $g(v) = f(v)$ and for all $u \ne v$, $g(u) \ne f(u)$.  Without loss of generality, we assume the labels are $a$ and $b$ and let $f(v) = a$.  Suppose $f \ne g$. It follows from the $2$-connectivity of $G$ that there exists a vertex $w \neq v$ in $G(f,a)$ with an edge to $G(f,b)$. If $w$ is not a cut vertex of $G(f,a)$, then all conditions of Lemma~\ref{lemma-conds} hold for the relabeling of $w$ to $b$. If $w$ is a cut vertex, then by Lemma~\ref{lemma-leafblock} we can find a vertex $z \notin \{v,w\}$ in a leaf block of $G(f,a)$ with an edge to $G(f,b)$. Now all conditions of Lemma~\ref{lemma-conds} hold for the relabeling of $z$ to $b$. Repeated application of this argument results in the reconfiguration of $f$ to a canonical model.

To reconfigure between canonical models, it suffices to show that we can reconfigure from canonical model $g$ to non-canonical model $h$ such that $g(v) \ne h(v)$, where $v$ is the special vertex. Without loss of generality, we let $g(v) = a$ and $h(v) = b$. By Lemma~\ref{lemma-leafblock}, each leaf block of $G(g,b)$ has an edge to $G(g,a)$, and hence $v$ has a neighbour $w$ that is not a cut vertex of $G(g,b)$. Note that all conditions of Lemma~\ref{lemma-conds} hold for the relabeling of $w$ to $a$: the $2$-connectivity of $G$ implies that $|G(g,b)| \ge 2$, $w$ is not a cut vertex, $wv$ is an edge, and a model with only two branch sets cannot contain a $b$-crucial vertex. After relabeling $w$, all conditions of Lemma~\ref{lemma-conds} hold for relabeling $v$ with $b$: both $v$ and $w$ have label $a$, $v$ is not a cut vertex in the branch set, and $w$ also has a neighbour with label $b$.  We have thus provided that if $G$ is $2$-connected, $G \in \mbox{host}(K_2)$.

We now assume that $G$ is not $2$-connected, and show that $G \notin \mbox{host}(K_2)$.
Since $G$ is not $2$-connected, $G$ has a cut vertex, say $x$. Let $C_1$ and $C_2$ be any two connected components of $G\setminus \{x\}$. Consider any two $K_2$-models $f$ and $g$ such that $f(u) = a$ and $g(u) = b$ for each $u \in V(C_1)$, and $f(v) = b$ and $g(v) = a$ for each $v \in V(C_2)$. We claim $f$ is not reconfigurable to $g$.

Suppose there exists a reconfiguration sequence. Observation~\ref{obs-nosplit} implies that either all vertices in $C_1$ are relabeled to $b$ before any vertex in $C_2$ is relabeled to $a$ or vice versa. Without loss of generality, we suppose the former case. Note that in the model $f'$ obtained after all vertices in $C_1$ are relabeled to $b$, $x$ must be labeled $b$ to ensure that $G(f',b)$ is connected. Now, if $C_1$ and $C_2$ are the only components of $G\setminus \{x\}$, then there exists no vertex labeled $a$ and so $f'$ is not a $K_2$-model of $G$. Otherwise, let $C_3$ be a component that contains a vertex labeled $a$. Let $z$ be the first vertex in $C_2$ that is relabeled to $a$, which transforms a model $g'$ to $h'$. In $h'$, $x$ must be labeled $a$ to ensure $G(g',a)$ is connected. However, that means $G(g',b)$ is disconnected, and so $g'$ is not a $K_2$-model of $G$.
\end{proof}

\section{Characterizing $\mbox{host}(K_3)$}\label{sec-k3}

To show that every $3$-connected graph is in $\mbox{host}(K_3)$, we make use of Tutte's characterization in Theorem~\ref{thm-tutte-3conn}.  In order to prove Theorem~\ref{thm-3conn},  it suffices to show that wheels are in $\mbox{host}(K_3)$ (Corollary~\ref{cor-wheel}) and that connectivity is preserved under the splitting of vertices (Lemma~\ref{lemma-split}) and adding of edges (Lemma~\ref{lemma-addedge}).  

\begin{theorem}\label{thm-3conn}
Every $3$-connected graph is in $\mbox{host}(K_{3})$.
\end{theorem}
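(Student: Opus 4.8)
The plan is to use Tutte's characterization (Theorem~\ref{thm-tutte-3conn}): every $3$-connected graph is built from a wheel by repeatedly adding edges and splitting vertices. So it suffices to establish three facts: (i) every wheel $W_k$ lies in $\mbox{host}(K_3)$; (ii) if $G \in \mbox{host}(K_3)$ and $G'$ is obtained from $G$ by adding an edge, then $G' \in \mbox{host}(K_3)$; and (iii) if $G \in \mbox{host}(K_3)$ and $G'$ is obtained from $G$ by splitting a vertex, then $G' \in \mbox{host}(K_3)$. Given these (which are Corollary~\ref{cor-wheel}, Lemma~\ref{lemma-addedge}, and Lemma~\ref{lemma-split} respectively, all forward-referenced in the excerpt), the theorem follows by induction on the number of operations used to construct $G$ from its base wheel, using the fact that all such $G$ are $3$-connected (so the hypotheses of the lemmas are met) and that every $3$-connected graph arises this way.

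For the wheel base case, I would first show $W_3 = K_4 \in \mbox{host}(K_3)$ directly, and then for general $W_k$ argue that every $K_3$-model can be reconfigured to a canonical model in which the hub $h$ is alone in its branch set and the rim is split into two arcs; this uses the hub being a universal vertex (Lemma~\ref{lemma-universal}) together with the toolkit lemmas for relabeling leaf blocks and siphoning components (Lemmas~\ref{lemma-relabel-leafblock} and~\ref{lemma-slurp-component}), noting that with only $|V(H)| = 3$ labels the notions of crucial/$\ell$-crucial vertices are quite constrained, so the ``bad'' leaf-crucial configurations are easy to rule out or route around. One then shows any two canonical models of $W_k$ are reconfigurable to each other by rotating the two rim-arcs one vertex at a time.

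The main obstacle is the vertex-splitting step (Lemma~\ref{lemma-split}). When a vertex $v$ of $G$ is split into $v_1, v_2$ with the edge $v_1v_2$, a $K_3$-model $f'$ of $G'$ naturally projects to a $K_3$-model of $G$ (label $v$ by $f'(v_1)$ if $f'(v_1) = f'(v_2)$, and otherwise contract the edge $v_1v_2$ and argue the result is still a valid model because each $v_i$ has degree $\ge 3$, so neither $v_i$ is a cut vertex alone in a leaf block obstruction); conversely, lifting a reconfiguration step in $G$ back to $G'$ requires care when the step relabels $v$, since in $G'$ we must relabel $v_1$ and $v_2$ separately and show the intermediate model (where $v_1, v_2$ have different labels) is still a valid $K_3$-model and that each of the two single relabels satisfies Lemma~\ref{lemma-conds}. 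The degree-$\ge 3$ condition on $v_1$ and $v_2$ is exactly what makes this work: it guarantees each $v_i$ has enough neighbours that relabeling it does not disconnect a branch set or destroy a connecting edge. Adding an edge (Lemma~\ref{lemma-addedge}) should be comparatively easy: an extra edge only adds connecting-edge options and can only help connectivity, so any reconfiguration sequence valid in $G$ remains valid in $G + e$, and every $K_3$-model of $G + e$ that is also a model of $G$ is handled by the inductive hypothesis, with the remaining models (those relying on $e$) bridged to such a model in a bounded number of steps.
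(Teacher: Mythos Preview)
Your high-level plan matches the paper exactly: invoke Tutte's wheel theorem and show closure under edge-addition and vertex-splitting, with wheels as the base case. The paper assembles Theorem~\ref{thm-3conn} from precisely Corollary~\ref{cor-wheel}, Lemma~\ref{lemma-split}, and Lemma~\ref{lemma-addedge}.

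The genuine gap is in your treatment of vertex-splitting. You write that a $K_3$-model $f'$ of $G'$ ``naturally projects'' to one of $G$, and when $f'(v_1)\neq f'(v_2)$ you propose to ``contract the edge $v_1v_2$ and argue the result is still a valid model.'' But contracting $v_1v_2$ in $G'$ returns $G$ with a single vertex $v$ that must receive \emph{one} label; whichever of $f'(v_1),f'(v_2)$ you pick, the other branch set loses a vertex, and that vertex may have been its only member, or a cut vertex of it, or its only contact with the third branch set. The degree-$\ge 3$ condition on $v_1,v_2$ does not rule any of this out. So there is no direct projection for such $f'$, and even if there were, your lifting argument (which is exactly Lemma~\ref{lemma-split-samelabel}) only applies to models of $G'$ in which $v_1,v_2$ already carry the same label. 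The paper closes this gap differently: it shows (Lemma~\ref{lemma-split}, via Lemma~\ref{lemma-slurp-siphon}) that \emph{within $G'$} any $K_3$-model with $f'(v_1)\neq f'(v_2)$ can be reconfigured to one with $f'(v_1)=f'(v_2)$, and only then invokes the lifting of Lemma~\ref{lemma-split-samelabel}. The case analysis there---in particular when $|G'(f',a)|=|G'(f',b)|=1$ and both split vertices are essential for the third label---is where the $3$-connectivity of $G'$ and the block structure of the third branch set are genuinely used.

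A smaller point: you call the edge-addition step ``comparatively easy,'' but bridging a model of $G+e$ that relies on $e$ to one that does not is where the work lies. When $e$ is the unique connecting edge between two branch sets, the paper needs a separate lemma (Lemma~\ref{lemma-ben}) to reroute through the third branch set; when $e$ is a bridge inside a branch set, it siphons off an entire component via Lemma~\ref{lemma-slurp-component}. Your outline is right, but ``a bounded number of steps'' hides a nontrivial argument.
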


The result for wheels (Corollary~\ref{cor-wheel}) follows from a result on a generalization of wheels (Lemma~\ref{lemma-genwheel}) by allowing multiple hub vertices, each of which is a universal vertex, and replacing each rim vertex by a connected graph. We use $W(G_1,G_2,\ldots, G_m, n, \ell,m)$ to denote a generalized wheel, for each $G_i$ a connected graph on $n$ vertices, $V(G_i) = \{v_{(i,1)} \ldots v_{(i,n)}\}$, and $\ell$ and $m$ both positive integers.  The graph $W(G_1, G_2, \ldots, G_m, n, \ell,m)$ consists of $\ell$ {\em hub vertices}, $V_H = \{h_1, \ldots, h_\ell\}$, and $m n$ {\em subgraph vertices}, $V_S = \{s_{i,j} \mid 1 \le i \le m, 1 \le j \le n\}$, where $s_{i,j}$ corresponds to $v_{(i,j)}$.  The edge set consists of the {\em hub edges}, $E_H = \{h_ih_j \mid 1 \le i \le \ell, 1 \le j \le \ell, i \ne j\}$, the {\em subgraph edges}, $E_s = \{s_{i,j}s_{i,k} \mid v_{(i,j)}v_{(i,k)} \in E(G_i), 1 \le i \le m\}$, the {\em rim edges}, $E_R = \{s_{(i,j)}s_{(k,j)} \mid k \equiv i+1 \bmod{m}, 1 \le j \le n\}$,
and the {\em connecting edges}, $E_C = \{h_ks_{(i,j)} \mid 1 \le k \le \ell, 1\le i \le m, 1 \le j \le n\}$. Observe that $W(G_{1},G_{2},\ldots,G_{m},n,l,m)$ has a $K_{l+2}$-minor when $m \geq 3$. 

\begin{lemma}\label{lemma-genwheel} 
For any graphs $G_i$, $W(G_1, G_2, \ldots, G_m, n, \ell,m)$ is in $\mbox{host}(K_{\ell+2})$ for any $m \geq 3$. 
\end{lemma}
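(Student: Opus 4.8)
The plan is to reduce the general statement to a canonical form and then reconfigure any model to that canonical form. Since the hub vertices $h_1,\dots,h_\ell$ are universal in $W = W(G_1,\dots,G_m,n,\ell,m)$, they are the natural "special vertices" playing the role that the single special vertex played in the $K_2$ case. A natural candidate for a \emph{canonical $K_{\ell+2}$-model} is one in which each hub vertex $h_k$ is the sole vertex with label $k$ (for $1 \le k \le \ell$), and the $mn$ subgraph vertices are partitioned among the remaining two labels $\ell+1$ and $\ell+2$ according to the cyclic rim structure: say all of $G_1,\dots,G_t$ get label $\ell+1$ and $G_{t+1},\dots,G_m$ get label $\ell+2$, for some $1 \le t \le m-1$, so that each of the two "big" branch sets is a consecutive union of subgraph copies joined along rim edges, hence connected, and they are adjacent via the rim edges between $G_t$ and $G_{t+1}$ (and between $G_m$ and $G_1$). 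One checks this is a valid model: every pair of distinct hub labels is adjacent via a hub edge, every hub label is adjacent to the two big labels via connecting edges, and the two big labels are adjacent as noted; the condition $m \ge 3$ guarantees the cyclic arrangement has enough room for the two big branch sets to each receive at least one $G_i$.

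The core of the argument is to show every $K_{\ell+2}$-model $f$ of $W$ reconfigures to a canonical one, and that any two canonical models are interreconfigurable. For the first part, I would proceed in stages. Stage one: move the labels of the hub vertices. Since each $h_k$ is universal, it has a neighbour of every label, so Lemma~\ref{lemma-universal} lets us relabel $h_k$ freely \emph{provided} some other vertex shares its current label; we use this to first "evacuate" label $k$ from all non-hub vertices by relabeling them (again via Lemma~\ref{lemma-universal}, since a hub of any label is universal and provides the needed neighbour and non-cut-vertex guarantees) until only $h_k$ carries label $k$ — being careful to process the labels $1,\dots,\ell$ in order and never to empty a branch set. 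Stage two: once the two remaining labels $\ell+1,\ell+2$ occupy exactly the subgraph vertices, reconfigure within those $mn$ vertices to the canonical cyclic partition. Here the key observation is that within $W \setminus V_H$, which is itself a cycle of connected blobs (the copies $G_i$ linked by rim edges), together with the universal hubs still available to certify the non-cut and $b$-crucial conditions of Lemma~\ref{lemma-conds}, we can slide the $\ell+1$/$\ell+2$ boundary around — this is essentially the $K_2$-style canonical-model argument from Theorem~\ref{lemma-2conn-total} carried out on the subgraph vertices, with the hubs removing all obstructions. Stage three, interreconfiguring two canonical models, reduces to: permuting which hub gets which of the labels $1,\dots,\ell$ (done by temporarily parking a label on a subgraph vertex), and rotating the boundary $t$ of the big branch sets, both handled by the same tools.

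The hard part will be Stage two, managing the relabeling of the subgraph vertices while making sure that at no intermediate step does a branch set become disconnected or does condition~\ref{cond-edges} of Lemma~\ref{lemma-conds} fail — in particular ensuring that when we relabel a subgraph vertex $s_{i,j}$ from $\ell+1$ to $\ell+2$, it is never the unique vertex maintaining the edge between label $\ell+1$ and some hub label, and that $G(f,\ell+1)$ stays connected. The universality of the hubs is what rescues us: every hub label's adjacency to $\ell+1$ and $\ell+2$ is witnessed by $\ell$ connecting edges from every subgraph vertex, so no single $s_{i,j}$ is ever essential for a hub label, and the cut-vertex condition is automatically fine inside a branch set that contains no hub. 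The remaining care is purely about keeping $G(f,\ell+1)$ and $G(f,\ell+2)$ connected as we push the boundary; doing this blob-by-blob around the cycle, and within each blob $G_i$ using that $G_i$ is connected (so it has a leaf-block / non-cut-vertex ordering à la Lemma~\ref{lemma-leafblock} and Lemma~\ref{lemma-relabel-leafblock}), handles it, though writing out the bookkeeping is where the real work lies.
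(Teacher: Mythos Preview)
Your approach is essentially the paper's: normalize the hub labels, evacuate them from the subgraph vertices, then rearrange the two remaining labels on the subgraph vertices, and finally permute among canonical forms. But you are making Stage two much harder than necessary. The paper does not redo a blob-by-blob $K_2$ argument; it simply observes that the subgraph vertices induce a $2$-connected graph (the $n$ rim cycles of length $m\ge 3$ glued along the connected copies $G_i$) and invokes Theorem~\ref{lemma-2conn-total} directly. The lifting is free: once each hub carries its own label, a subgraph vertex in $G(f,\ell{+}1)$ is never essential for any hub label (the hub is adjacent to every subgraph vertex), so condition~\ref{cond-edges} of Lemma~\ref{lemma-conds} collapses to the $K_2$ version, and every valid $K_2$-step on the non-hub subgraph is already a valid $K_{\ell+2}$-step in $W$. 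All your ``bookkeeping where the real work lies'' evaporates. Correspondingly, the paper's canonical model is simpler than yours: a single designated subgraph vertex $s_{(1,1)}$ alone in one branch set, all other subgraph vertices in the other --- no parameter $t$ to track.

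Your Stage one also has a wrinkle: you try to force the specific assignment $h_k\mapsto k$ up front, which is awkward when some $h_k$ is currently the unique vertex carrying its label. The paper sidesteps this by only arranging that the hubs carry \emph{pairwise distinct} labels (if two hubs share a label, some label is absent from all hubs, and Lemma~\ref{lemma-universal} relabels one hub to it), then parametrizes canonical models by a permutation $\pi$ and handles the specific hub-to-label matching in the inter-canonical step via explicit swap sequences (hub--hub and hub--special-vertex).
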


\begin{proof}

  To prove that there exists a reconfiguration sequence between any pair of $K_{\ell+2}$-models,  we identify certain models as canonical models.  It then suffices to show that we can reconfigure from any model to a canonical model and between any two canonical models. 

  We consider a $K_{\ell+2}$-model $f$ to be a {\em canonical model} if for some ordering $\pi$ on the labels $\{1, \ldots \ell+2\}$,
  $f(h_i) = \pi(i)$ for all $1 \le i \le \ell$,
$f(s_{(1,1)}) = \pi(\ell+1)$, and $f(t) = \pi(\ell + 2)$ for all $t \in V_T \setminus \{s_{(1,1)}\}$.  Clearly, this defines a $K_{\ell+2}$-minor.  For convenience, we call $s = s_{(1,1)}$ the {\em special vertex}, and at times make use of $s^+ = s_{(2,1)}$ and $s^- = s_{(m,1)}$.

We first show that any $K_{\ell+2}$-model $f$ reconfigures to a canonical model in the following steps: we ensure each hub vertex has a distinct label, then we ensure that no non-hub vertex has a label used by a hub vertex, and then we reconfigure to a canonical model. 

Suppose that not every hub vertex has a distinct label. Without loss of generality, suppose $f(h_1) = f(h_2) = b$, for some label $b$.  Because at most $\ell-1$ labels appear on hub vertices, there exists a label $a$ such that no hub vertex has label $a$.  By Lemma~\ref{lemma-universal}, we can relabel $h_2$ to $a$.  By repeating this process, we ensure that each hub vertex has a distinct label. 

Now suppose that there exists a non-hub vertex that has the same label as a hub vertex.  Because $f$ is a $K_{\ell+2}$-model, there must exist at least one non-hub vertex with a label different from a hub vertex.  Moreover, some non-hub vertex $v$ with the same label as a hub vertex must have a non-hub neighbour $w$ such that $f(w)$ is not the label of a hub vertex.  By Lemma~\ref{lemma-universal}, we can relabel $v$ with $f(w)$. By applying this idea repeatedly, we ensure that no non-hub vertex has the same label as a hub vertex.

Now each of the non-hub vertices has one of the two labels not used by any of the hub vertices.  Because the non-hub vertices form a 2-connected graph, by Theorem~\ref{lemma-2conn-total} we can reconfigure to a canonical model. 

To complete the proof, we now show that any canonical model reconfigures to any other canonical model.  As this can be accomplished by a sequences of swaps, and because Theorem~\ref{lemma-2conn-total} allows the swapping of labels of special and non-special non-hub vertices, the following two cases suffice.

\smallskip
\noindent{\bf Case 1}: Exchanging labels of two hub vertices
\smallskip

We reconfigure from canonical model $f$ to a canonical model $g$ in which for some $1 \le i \le \ell$ and $1 \le j \le \ell$, $f(h_i) = g(h_j)$, $f(h_j) = g(h_i)$, and for any other vertex $v$, $f(v) = g(v)$.  We will show that each of the following relabelings are possible: relabeling $s^+$ to $f(h_i)$, relabeling $h_i$ to $f(h_j)$, relabeling $h_j$ to $f(h_i)$, and finally relabeling $s^+$ back to $f(s^+)$.  After all four relabelings are executed, we will have reconfigured to the model $g$. 

We check each of the four proposed relabelings.  
By the definition of a canonical model, we know that $f(s^+) \ne f(s)$, that $s^+$ has at least one neighbour with each possible label, and that $s^+$ is not a cut vertex of $G(f,f(s^+))$.
We verify that all the conditions of Lemma~\ref{lemma-conds} are satisfied by the relabeling of $s^+$ to $f(h_i)$:  there are other vertices with label $f(s^+)$, $s^+$ is not a cut vertex, $s^-$ has the same neighbours in other branch sets, and  $h_i$ is a neighbour of $s^+$.  To see that we can relabel $h_i$ to $f(h_j)$, we observe that $G(f,f(h_i))$ contains both $h_i$ and $s^+$, that $h_i$ is not a cut vertex of the branch set, that $s^+$ has a neighbour with each label, and that $h_i$ is a neighbour of $h_j$; thus, all conditions of Lemma~\ref{lemma-conds} are satisfied. 
Finally, relabeling $h_j$ to $f(h_i)$ and then from $s^+$ to $f(s^+)$ both follow
from Lemma~\ref{lemma-universal} as in each case there is a universal vertex with the same label.  

\smallskip

\noindent{\bf Case 2}: Exchanging labels of a hub vertex and the special vertex

\smallskip

In this case, the canonical model $f$ is reconfigured to the canonical model $g$, where for some hub vertex $h_i$, $f(h_i) = g(s)$, $f(s) = g(h_i)$, and for all other vertices, $f(v) = g(v)$.  By arguments similar to those given in the previous case, we can show that we can label $s^+$ by $f(h_i)$, $h_i$ by $f(s)$, $s$ by $f(h_i)$, and finally $s^+$ by $f(s^+)$.

\end{proof}

\begin{corollary}\label{cor-wheel}
$W_{n}$ is in $\mbox{host}(K_{3})$.
\end{corollary}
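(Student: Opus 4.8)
The plan is to obtain $W_n$ as a degenerate instance of the generalized wheel from Lemma~\ref{lemma-genwheel}. Recall that $W_n$ consists of $n$ rim vertices inducing a cycle together with a single hub vertex adjacent to all of them, and note that $n \ge 3$ since the rim vertices must induce a cycle. In the notation $W(G_1, \ldots, G_m, n', \ell, m)$ of Lemma~\ref{lemma-genwheel} (where I write $n'$ for the parameter that the lemma calls $n$, to avoid the clash with the $n$ of this corollary), I would set $\ell = 1$, $n' = 1$, $m = n$, and take each $G_i = K_1$.

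With these choices, the generalized wheel has exactly one hub vertex $h_1$, which is universal; it has $m \cdot n' = n$ subgraph vertices $s_{(1,1)}, \ldots, s_{(n,1)}$; it has no subgraph edges, since each $G_i = K_1$ has empty edge set; its rim edges $E_R = \{ s_{(i,1)} s_{(k,1)} : k \equiv i+1 \bmod n \}$ form a single cycle through all the subgraph vertices; and its connecting edges $E_C$ make $h_1$ adjacent to every subgraph vertex. Identifying $h_1$ with the hub of $W_n$ and the $s_{(i,1)}$ with the rim vertices $r_i$ exhibits an isomorphism $W(K_1, \ldots, K_1, 1, 1, n) \cong W_n$.

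Since $m = n \ge 3$, Lemma~\ref{lemma-genwheel} applies and yields that $W(K_1, \ldots, K_1, 1, 1, n)$ is in $\mbox{host}(K_{\ell + 2}) = \mbox{host}(K_3)$, whence $W_n \in \mbox{host}(K_3)$. There is essentially no obstacle here beyond verifying the parameter identification; the only point requiring a moment's care is that the subgraph size $n' = 1$ is permitted (a connected graph on one vertex, namely $K_1$), so that the subgraph edge set is empty and the construction genuinely collapses to the ordinary wheel rather than to a strictly larger graph.
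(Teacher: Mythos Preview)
Your proposal is correct and is exactly the intended derivation: the paper states Corollary~\ref{cor-wheel} without proof, as an immediate specialization of Lemma~\ref{lemma-genwheel}, and your parameter choice $\ell=1$, $n'=1$, $m=n\ge 3$, $G_i=K_1$ is precisely the specialization that collapses the generalized wheel to $W_n$ and yields $\mbox{host}(K_{\ell+2})=\mbox{host}(K_3)$.
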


We consider the splitting of vertices in two steps.  In Lemma~\ref{lemma-split-samelabel}, which applies more generally to $K_k$ for any $k > 2$, we show that we can reconfigure between models in which the vertices resulting from the split have the same label. Then, in Lemma~\ref{lemma-split}, which uses Lemma~\ref{lemma-slurp-siphon}, we consider cases in which the vertices can have different labels.

\begin{lemma}\label{lemma-split-samelabel}
Let $G$ be a $2$-connected graph and $G'$ be formed from $G$ by splitting a vertex $v$ into vertices $x$ and $y$. For any $k > 2$, let $f$ and $g$ be $K_k$-models of $G$ and $f'$ and $g'$ be $K_k$ models of $G'$ such that $f(v) = f'(x) = f'(y)$, $g(v) = g'(x) = g'(y)$ and for all  $u \in V(G) \setminus \{v\}$, $f(u) = f'(u)$ and $g(u) = g'(u)$. If $f$ and $g$ are reconfigurable, then $f'$ and $g'$ are reconfigurable.
\end{lemma}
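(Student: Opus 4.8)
The plan is to lift a reconfiguration sequence from $f$ to $g$ in $G$ to one from $f'$ to $g'$ in $G'$. The starting point is the observation that for any $K_k$-model $h$ of $G$, the labelling $h'$ of $V(G')$ defined by $h'(x)=h'(y)=h(v)$ and $h'(u)=h(u)$ otherwise is a $K_k$-model of $G'$, and conversely every $K_k$-model of $G'$ in which $x$ and $y$ share a label arises this way; the $f'$ and $g'$ of the statement are the images of $f$ and $g$. Indeed, contracting the edge $xy$ inside the branch set of label $h(v)$ in $G'$ recovers the corresponding branch set of $G$, since $N_{G'}(x)\cup N_{G'}(y)\setminus\{x,y\}=N_G(v)$; hence connectedness of branch sets transfers both ways, and an edge of $G$ joins two branch sets exactly when the matching edge of $G'$ does. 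So, writing a reconfiguration sequence in $G$ as $f=\sigma_0,\dots,\sigma_t=g$, each $\sigma_i'$ is a $K_k$-model of $G'$ and it suffices to connect $\sigma_i'$ to $\sigma_{i+1}'$ for each $i$. If $\sigma_i$ and $\sigma_{i+1}$ differ at some vertex $u\neq v$, then $\sigma_i'$ and $\sigma_{i+1}'$ differ only at $u$, and each of the four conditions of Lemma~\ref{lemma-conds} transfers verbatim between $G$ and $G'$ (the relevant branch sets either coincide or differ only by splitting $v$, and contracting $xy$ changes neither whether $u$ is a cut vertex, nor which branch sets contain $u$'s neighbours, nor which pairs of branch sets are joined by an edge avoiding $\{x,y\}$); so this step lifts to a single reconfiguration step in $G'$.

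The heart of the proof is a step that relabels $v$ itself, say from $a$ to $b$: here $\sigma_i'$ has $x$ and $y$ labelled $a$, $\sigma_{i+1}'$ has them labelled $b$, and all other labels agree with $\sigma_i$. Let $S_x$ and $S_y$ be the parts of $N_G(v)$ inherited by $x$ and $y$. Lemma~\ref{lemma-conds} applied to the step in $G$ gives $|G(\sigma_i,a)|>1$, $v$ not a cut vertex of $G(\sigma_i,a)$, a neighbour of $v$ labelled $b$, and $v$ essential for no label outside $\{a,b\}$. The first facts to establish are that at most one of $x,y$ is a cut vertex of $G'(\sigma_i',a)$ --- with $x$ being one exactly when $S_x$ contains every neighbour of $v$ in $G(\sigma_i,a)$ --- and that any one of $x,y$ that has a neighbour labelled $b$ and is not such a cut vertex can be relabelled to $b$: conditions~\ref{cond-nonempty} and \ref{cond-nbr} are immediate, and its being essential for a label $e\notin\{a,b\}$ would force $v$ to be essential for $e$ (any edge of $G$ between $G(\sigma_i,a)$ and $G(\sigma_i,e)$ not through $v$ survives in $G'$ and avoids $x$ and $y$), contradicting condition~\ref{cond-edges}. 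So if one of $x,y$ is at once a non-cut vertex of $G'(\sigma_i',a)$ and has a neighbour labelled $b$, I would relabel it to $b$ and then relabel the other (which now has the first as a $b$-neighbour, is not a cut vertex of the shrunken branch set because otherwise the first would have been, and remains non-essential outside $\{a,b\}$), reaching $\sigma_{i+1}'$ in two steps.

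The remaining case is that exactly one of $x,y$, say $x$, is a cut vertex of $G'(\sigma_i',a)$, it carries every $b$-neighbour of $v$, and $y$ --- whose only neighbour in $G(\sigma_i',a)$ is $x$ --- has no neighbour labelled $b$. Then $\deg_{G'}(y)\ge 3$ together with $k>2$ forces $y$ to have a neighbour with some label $c\notin\{a,b\}$; this is the only place the hypothesis $k>2$ enters. I would then (i) relabel $y$ to $c$: $y$ is not a cut vertex of $G(\sigma_i',a)$, has a neighbour labelled $c$, has no neighbour labelled $b$ so cannot be essential for $b$, and is non-essential for other labels as before; (ii) relabel $x$ to $b$: branch set $a$ is now isomorphic to $G(\sigma_i,a)$ with $x$ in the role of $v$, so $x$ is not a cut vertex, it still has a $b$-neighbour in $S_x$, and each edge of $G$ witnessing $v$ non-essential for a label outside $\{a,b\}$ still avoids $x$; (iii) relabel $y$ from $c$ to $b$: $y$ now has $x$ as a $b$-neighbour, is not a cut vertex of the branch set of label $c$, and for each label $e$ it neighbours there is an edge of $G$ between the branch sets of $c$ and $e$ (it exists since $\sigma_i$ is a model, and cannot use $v$) avoiding $y$. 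These three relabellings reach $\sigma_{i+1}'$; concatenating the lifts of all $t$ steps gives a reconfiguration sequence from $f'$ to $g'$.

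The main obstacle I expect is exactly the bookkeeping in this last case: splitting $v$ can turn a non-cut vertex of a branch set into a cut vertex, and one must simultaneously avoid that, the $b$-crucial obstruction of condition~\ref{cond-edges}, and the possibility that no ``escape label'' $c$ is available --- it is the collision of these three constraints that forces the three-step detour and makes the hypothesis $k>2$ necessary. Everything else reduces to routine verification that the conditions of Lemma~\ref{lemma-conds} are preserved under the correspondence between $K_k$-models of $G$ and $K_k$-models of $G'$.
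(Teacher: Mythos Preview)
Your proposal is correct and follows essentially the same approach as the paper. Both lift the reconfiguration sequence step by step, handle steps not touching $v$ by direct transfer, and for a step relabelling $v$ from $a$ to $b$ split into the ``easy'' two-step case (some non-cut vertex among $x,y$ has a $b$-neighbour) and the three-step detour via an auxiliary label $c$ when the unique vertex with a $b$-neighbour is a cut vertex of the branch set; the paper organises the split slightly differently (it first names the vertex with a $b$-neighbour $x$ and then branches on whether $x$ is a cut vertex), but the substance is identical. One small point: in your justification of step~(iii), the parenthetical ``cannot use $v$'' is not literally true when $e=a$, since an edge between the $c$- and $a$-branch sets of $\sigma_i$ may well have $v$ as its $a$-endpoint; what you actually need (and have, since $v$ is not $b$-crucial and hence not essential for $c$) is that \emph{some} such edge avoids $v$.
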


\begin{proof}
Since $f$ and $g$ are reconfigurable, there is a reconfiguration sequence $\sigma = f=f_{1},\ldots,f_{\ell}= g$ for some value of $\ell$. Using $\sigma$, we wish to form a reconfiguration sequence $\sigma'$ from $f'$ to $g'$ in the reconfiguration graph for $G'$.  In forming the sequence, we observe that if there is a prefix $\tau$ of $\sigma$ such that $v$ is not relabeled in any of the steps, then we can form a prefix $\tau'$ of $\sigma'$ by executing the same sequence of steps.  

We now consider the first relabeling of $v$ in $\sigma$, say from $f_{j}$ to $f_{j+1}$; we wish to show that in $\sigma'$, we can relabel both $x$ and $y$ in the same way.  Without loss of generality, we assume that $f_{j}(v) = f'_{j}(x) = f'_{j}(y) = a$ and that $f_{j+1}(v) = b$.

We can use the fact that Lemma~\ref{lemma-conds} holds for the labeling of $v$ by $b$ to establish useful properties of $x$ and $y$.  Because $v$ is not in a branch set of size one (condition~\ref{cond-nonempty}), $x$ and $y$ are not the only two vertices in $G'(f'_{j},a)$. Since $v$ is a not a cut vertex of its branch set (condition~\ref{cond-notcut}), $x$ and $y$ together cannot form a cut set.  As $v$ has a neighbour with label $b$ (condition~\ref{cond-nbr}), either $x$ or $y$ (or both) must have a neighbour with label $b$ under $f'_{j}$.  Finally, because $v$ is not a $b$-crucial vertex (condition~\ref{cond-edges}), there must exist some vertex other than $x$ or $y$ in $G'(f'_{j},a)$ that has a neighbour with label $b$.

Without loss of generality, we assume that $x$ has a neighbour with label $b$, and consider two cases, depending on whether or not $x$ is a joining vertex in the block tree of $G'(f'_{j},a)$.

\smallskip

\noindent{\bf Case 1}: $x$ is a not a joining vertex of the block tree of $G'(f'_j, a)$

\smallskip

By our observations above, all conditions of Lemma~\ref{lemma-conds} hold for the relabeling of $x$ to $b$.  As a consequence of the relabeling, $y$ now has a neighbour with label $b$.  Because $v$ was not a cut vertex in $G(f_j,a)$, the removal of both $x$ and $y$ cannot disconnect $G'(f'_j,a)$, and hence condition~\ref{cond-notcut} holds for $y$.  As the remaining conditions of Lemma~\ref{lemma-conds} were established in the argument above, we can now relabel $y$ to $b$, as needed. 

\smallskip
  
\noindent{\bf Case 2}: $x$ is a joining vertex in the block tree of $G'(f'_j,a)$

\smallskip

We first show that the component $C$ of $G'(f'_j,a)$ containing $y$ consists solely of the vertex $y$.  If instead there existed another vertex in $C$, then the removal of both $x$ and $y$ would separate the vertex from the other components of $G'(f'_j,a)\setminus\{x\}$, and consequently $v$ would be a cut vertex in $G(f_j,a)$, which contradicts Lemma~\ref{lemma-conds} for the relabeling of $v$ to $b$.

By the definition of the split operation, $\deg(y) \geq 3$; because $x$ is $y$'s only neighbour with label $a$, $y$ must have at least one neighbour with label $b$ or $c$, for some $c \notin \{a,b\}$.  If $y$ has a neighbour with label $b$, we can use Case 1 to complete the relabeling of $y$ and then $x$.  Otherwise, we will show that we can first relabel $y$ to $c$, relabel $x$ to $b$, and finally relabel $y$ to $b$.  In each case, we show that we can use Lemma~\ref{lemma-conds}.

To see that we can relabel $y$ to $c$, it suffices to observe that $y$ is not $c$-crucial, as $x$ has a neighbour in $b$, and if $y$ were essential for some $d \notin \{b,c\}$, then $v$ would be essential for $d$ in $f_j$, and hence $b$-crucial in $f_j$, which is a contradiction.  Now, since $x$ is now no longer a cut vertex, we can relabel $x$ to $b$. Finally, since $y$ now has a neighbour with label $b$ (that is, $x$), we can relabel $y$ with $b$, as needed. 
\end{proof}

\begin{lemma}\label{lemma-slurp-siphon}
Given a $3$-connected graph $G$ and a $K_3$-model $f$ of $G$, suppose there exists a cut vertex $x$ of $G(f,a)$ with a neighbour in $G(f,b)$ such that $x$ is not essential for $c$.  Then there exists a component $D$ of $G(f,a)\setminus\{x\}$ such that we can reconfigure $f$ to a model $g$ in which $g(v)=f(v)$ for each $v\in V(D)$, $g(x)=b$, and $g(u)\neq f(u)$ for all other vertices of $G(f,a)$.
\end{lemma}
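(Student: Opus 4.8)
The plan is to designate one component $D$ of $G(f,a)\setminus\{x\}$ to leave untouched, relabel every other component of $G(f,a)\setminus\{x\}$ out of the branch set for $a$ entirely, and then relabel $x$ itself to $b$. Since $ac\in E(K_3)$ forces an edge between the branch sets for $a$ and $c$ and $x$ is not essential for $c$, some edge of $G$ joins $G(f,c)$ to a vertex of $G(f,a)$ that lies in a component $D$ of $G(f,a)\setminus\{x\}$ other than $x$; fix such a $D$, preferring one that contains two distinct endpoints of edges to $G(f,c)$ if there is a choice. Note that $x$ has a neighbour in $D$ (as $G(f,a)$ is connected) and that $G(f,a)\setminus\{x\}$ has at least one more component because $x$ is a cut vertex.

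I would then empty each component $C\neq D$ using Lemma~\ref{lemma-slurp-component}, one at a time. By Lemma~\ref{lemma-leafblock} applied to the $3$-connected graph $G$, every leaf block of $C$ — which is also a leaf block of the branch set for $a$ — contains two interior vertices incident to connecting edges, so $C$ has a vertex adjacent to the branch set for $b$ or to the branch set for $c$. If $C$ has a vertex adjacent to the branch set for $b$, Lemma~\ref{lemma-slurp-component} with the edge $ab$ relabels all of $C$ out of the branch set for $a$; otherwise every leaf block of $C$ is adjacent to the branch set for $c$, so Case~2 in the proof of Lemma~\ref{lemma-slurp-component} applies to every leaf block and the lemma with the edge $ac$ relabels all of $C$ out of the branch set for $a$. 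Throughout, $D$ is untouched, its vertex witnessing the edge to $G(f,c)$ keeps that neighbour, and so long as some vertex outside $V(D)\cup\{x\}$ still carries label $a$ the vertex $x$ remains a cut vertex of the branch set for $a$; hence each successive application of Lemma~\ref{lemma-slurp-component} is legitimate. When the last component other than $D$ has been emptied, the branch set for $a$ equals $D\cup\{x\}$ and every vertex of $G(f,a)$ outside $V(D)\cup\{x\}$ has label $b$ or $c$.

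Finally I would relabel $x$ to $b$ using Lemma~\ref{lemma-conds}: condition~\ref{cond-nonempty} holds since $D$ is nonempty; condition~\ref{cond-notcut} holds since deleting $x$ from $D\cup\{x\}$ leaves the connected graph $D$; condition~\ref{cond-nbr} holds since the neighbour of $x$ in $G(f,b)$ was never relabeled; and condition~\ref{cond-edges} holds because $x$ is not $b$-crucial, since it is not essential for $c$ — the vertex of $D$ still supplies an edge to $G(f,c)$ whose endpoint in the branch set for $a$ is not $x$. Writing $g$ for the resulting model, we get $g(v)=f(v)$ for all $v\in V(D)$, $g(x)=b$, and $g(u)\neq f(u)$ for every other vertex of $G(f,a)$, and $g$ is reachable from $f$ because every intermediate step was a single valid reconfiguration step.

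The main obstacle is verifying the hypotheses of Lemma~\ref{lemma-slurp-component} that no leaf-crucial model and no leaf-$\ell$-crucial model is hit while emptying a component. No leaf-crucial model ever arises: by Lemma~\ref{lemma-leafblock} every leaf block of every branch set of a $K_3$-model of $G$ has two interior vertices incident to connecting edges, so no interior vertex of a leaf block can be essential for both of the other two labels. For the leaf-$\ell$-crucial conditions the crucial invariant is that leaving $D$ and $x$ untouched keeps control of the essential vertices: throughout the emptying, the only vertex of the branch set for $a$ that can be essential for $b$ is $x$ (its $b$-neighbour persists) and, by the choice of $D$, no vertex of the branch set for $a$ is essential for $c$; and while any vertex outside $V(D)\cup\{x\}$ still has label $a$, the vertex $x$ is a cut vertex of the branch set for $a$ and hence never an interior vertex of a leaf block. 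By Observation~\ref{obs-no-limited-vertex}, this is what the proof of Lemma~\ref{lemma-slurp-component} actually uses. The delicate residual case is when $x$ has no edge to $G(f,c)$ and $D$ is the only component contributing edges to $G(f,c)$, all incident to a single vertex $u$; here one argues separately, using $3$-connectivity to control the block structure of $D\cup\{x\}$ (or by postponing the relabeling of $x$ to the moment just before the branch set for $a$ collapses), that $u$ is not an interior vertex of a leaf block while it is essential for $c$.
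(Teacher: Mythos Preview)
Your overall strategy---pick a component $D$ of $G(f,a)\setminus\{x\}$ carrying an edge to $G(f,c)$, empty every other component $C$ via Lemma~\ref{lemma-slurp-component}, then relabel $x$ to $b$ via Lemma~\ref{lemma-conds}---is exactly the paper's approach, and your verification of the final step (relabeling $x$) is correct.

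Where you diverge from the paper, and where you tie yourself in knots, is in checking the ``no leaf-crucial / leaf-$\ell$-crucial'' hypotheses of Lemma~\ref{lemma-slurp-component}. The paper dispatches this in one line: because $x$ keeps its neighbour in $G(f,b)$ and some vertex of $D$ keeps its neighbour in $G(f,c)$, and both remain in the branch set for $a$ and lie outside $C$, \emph{no vertex of $C$} is ever essential for $b$ or for $c$; hence no such vertex is crucial, $b$-crucial, or $c$-crucial. That is all Lemma~\ref{lemma-slurp-component} actually needs, since its proof only ever inspects interior vertices of leaf blocks of $T(\cdot,\cdot,a,C)$, and those lie in $C$.

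Your detour through Lemma~\ref{lemma-leafblock} for the leaf-crucial condition is valid but unnecessary. More importantly, your sentence ``by the choice of $D$, no vertex of the branch set for $a$ is essential for $c$'' is false in precisely the residual case you then worry about: if a single vertex $u\in D$ carries the only edge to $G(f,c)$, then $u$ \emph{is} essential for $c$. The correct and sufficient statement is that no vertex \emph{of $C$} is essential for $c$ (since $u\in D$ witnesses this). Once phrased that way, your ``delicate residual case'' disappears entirely---you never attempt to relabel $u$, so whether $u$ sits in a leaf block of $D$ is irrelevant to the emptying of $C$. The extra hedging (preferring a $D$ with two $c$-endpoints, postponing the relabeling of $x$, invoking $3$-connectivity to control blocks of $D\cup\{x\}$) is all unneeded.
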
  

\begin{proof}
We form model $g$ by first siphoning all components of $G(f,a)\setminus\{x\}$ except $D$ out of $G(f,a)$ and then by relabeling $x$ to $b$. 

Because $x$ is not essential for $c$, there must exist at least one component of $G(f,a) \setminus \{x\}$ that contains a vertex with a neighbour in $G(f,c)$; we choose $D$ to be one such component.

By Lemma~\ref{lemma-leafblock}, each other component $C$ must have a neighbour in either $G(f,b)$ or $G(f,c)$.   Since $x$ has a neighbour in $G(f,b)$ and $D$ has a neighbour in $G(f,c)$, $f$ does not hit a leaf-crucial model,  a leaf-$b$-crucial model,  or a leaf-$c$-crucial model on relabeling $C$, and hence we can use Lemma~\ref{lemma-slurp-component} to relabel all vertices of $C$ with either $b$ or $c$. 

After the vertices of every component except $D$ have been relabeled, all the conditions for Lemma~\ref{lemma-conds} now hold for the relabeling of $x$ to $b$: the branch set for label $a$ has at least one vertex other than $x$, $x$ is no longer a cut vertex, $x$ has a neighbour labeled $b$, and a vertex in $D$ has a neighbour labeled $c$. 

\end{proof}

\begin{lemma}\label{lemma-split}
  Suppose $G$ is a $3$-connected graph such that $G \in \mbox{host}(K_{3})$ and $G'$ is a graph formed from $G$ by splitting
a vertex $v$ into vertices $x$ and $y$. Then $G'$ is in $\mbox{host}(K_{3})$.
\end{lemma}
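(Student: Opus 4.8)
Since splitting a vertex of a $3$-connected graph yields a $3$-connected graph (this is exactly the operation appearing in Theorem~\ref{thm-tutte-3conn}), $G'$ is $3$-connected, so every result of Sections~\ref{sec-prelims} and~\ref{sec-tools} is available for $G'$ with $k=3$. I would call a $K_3$-model $f'$ of $G'$ \emph{merged} if $f'(x)=f'(y)$; contracting $\{x,y\}$ back to $v$ sets up a bijection between merged models of $G'$ and $K_3$-models of $G$. The plan is to prove the lemma in two parts: (i) every $K_3$-model of $G'$ can be reconfigured to a merged model, and (ii) any two merged models of $G'$ are reconfigurable. Part (ii) is the easy half: if $f'$ and $g'$ are merged with corresponding $K_3$-models $f$ and $g$ of $G$, then $f$ and $g$ are reconfigurable because $G\in\mbox{host}(K_3)$, and Lemma~\ref{lemma-split-samelabel} (with $k=3$) lifts their reconfiguration sequence to one between $f'$ and $g'$. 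Given (i) and (ii), an arbitrary pair of $K_3$-models of $G'$ can first be pushed to merged models and then joined, so $G'\in\mbox{host}(K_3)$.

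For Part (i), fix a $K_3$-model $f'$ with $f'(x)=a\neq b=f'(y)$, and let $c$ denote the third label (if no such model exists, $f'$ is already merged). To merge, I will relabel $x$ to $b$ or relabel $y$ to $a$ — either produces a merged model — possibly after some preliminary relabelings that leave $x$ and $y$ untouched. Since $xy\in E(G')$, the neighbour condition (condition~\ref{cond-nbr}) of Lemma~\ref{lemma-conds} holds automatically for both candidate moves, so the only obstructions are: (1) the relevant branch set is a singleton ($G'(f',a)=\{x\}$ for the first move, $G'(f',b)=\{y\}$ for the second); (2) the vertex is a cut vertex of its branch set; and (3) $x$ (resp.\ $y$) is essential for $c$, making it $b$-crucial (resp.\ $a$-crucial).

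The key observation is that once $|G'(f',a)|\ge 2$ and $|G'(f',b)|\ge 2$, vertices $x$ and $y$ cannot both be essential for $c$: if they were, then every edge leaving $G'(f',c)$ would have its other endpoint in $\{x,y\}$, so $\{x,y\}$ would separate the nonempty set $G'(f',c)$ from the nonempty set $(G'(f',a)\setminus\{x\})\cup(G'(f',b)\setminus\{y\})$, contradicting $3$-connectivity. Hence, assuming both branch sets have size at least $2$, at least one of the moves $x\to b$, $y\to a$ clears obstruction (3); without loss of generality say $x$ is not essential for $c$. If $x$ is not a cut vertex of $G'(f',a)$, I relabel it to $b$ directly by Lemma~\ref{lemma-conds}; if $x$ is a cut vertex, I apply Lemma~\ref{lemma-slurp-siphon} with cut vertex $x$, the neighbour $y\in G'(f',b)$, and the fact that $x$ is not essential for $c$, reaching a model in which $x$ has label $b$. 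In both cases $y$ is untouched and the resulting model is merged.

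The remaining — and main — difficulty is clearing obstruction (1): eliminating singleton branch sets. Suppose $G'(f',a)=\{x\}$ (the case $G'(f',b)=\{y\}$ is symmetric). I would ``fill'' branch set $a$ by relabeling to $a$ a neighbour $w$ of $x$ lying in $G'(f',c)$ (or in $G'(f',b)$ when $|G'(f',c)|=1$, a choice that keeps the source branch set of size at least $2$). The degree condition $\deg_{G'}(x)\ge 3$ from the split guarantees that $x$ has at least two neighbours in one of $G'(f',b)$, $G'(f',c)$, and applying Lemma~\ref{lemma-leafblock} to that branch set — after at most one round of siphoning via Lemma~\ref{lemma-slurp-component} to expose a suitable $w$ as an interior vertex of a leaf block — lets me choose $w$ so that it is neither a cut vertex of its branch set nor essential for the third label, whence Lemma~\ref{lemma-conds} permits the relabeling. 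A constant number of such fill steps reaches a model in which both $G'(f',a)$ and $G'(f',b)$ have size at least $2$, after which the argument of the previous paragraph finishes the job. I expect the bookkeeping of this filling step — verifying that a legal choice of $w$ always exists, and that filling one branch set never re-creates a singleton elsewhere — to be the most delicate part of the proof.
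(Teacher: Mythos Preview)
Your overall architecture matches the paper exactly: reduce to \emph{merged} models, invoke Lemma~\ref{lemma-split-samelabel} for reconfiguration between merged models, and prove every $K_3$-model of $G'$ can be reconfigured to a merged one. Your key observation in Part~(i) --- that when $|G'(f',a)|\ge 2$ and $|G'(f',b)|\ge 2$ the vertices $x$ and $y$ cannot both be essential for $c$, since $\{x,y\}$ would otherwise be a $2$-cut --- is precisely the paper's use of Lemma~\ref{lemma-all-weak}, and your subsequent appeal to Lemma~\ref{lemma-conds} or Lemma~\ref{lemma-slurp-siphon} is identical.

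The one place you diverge is the case organisation. You split into ``both branch sets have size $\ge 2$'' versus ``some branch set is a singleton'', and plan a general filling procedure to escape the singleton case before applying the main argument. The paper instead runs the logic the other way: it first tries Lemma~\ref{lemma-conds} and Lemma~\ref{lemma-slurp-siphon} directly, observes that if \emph{neither} applies (to either of $x,y$) then both $x$ and $y$ must be essential for $c$, and then uses Lemma~\ref{lemma-all-weak} to conclude that this forces $|G'(f',a)|=|G'(f',b)|=1$. So there is only one residual case, not a family of singleton configurations, and the paper dispatches it with a single targeted two-move sequence: find a neighbour $w$ of $y$ in $G'(f',c)$ that is not a cut vertex of $G'(f',c)$ (a short $3$-connectivity argument rules out the possibility that every such neighbour is a cut vertex), relabel $w$ to $b$, then relabel $y$ to $a$. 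This sidesteps exactly the bookkeeping you flag as delicate --- there is no need to fill a branch set, check that the fill does not recreate a singleton elsewhere, or invoke Lemma~\ref{lemma-slurp-component} at all in Part~(i). Your route would work, but the paper's ordering of the case analysis collapses it to something shorter.
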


\begin{proof}
  We show that for any source and target $K_3$-models of $G'$, we can find a reconfiguration sequence from the source to the target.  We know from Lemma~\ref{lemma-split-samelabel} that we can reconfigure between any two $K_3$-models in which $x$ and $y$ have the same labels. Here, we show that we can reconfigure any $K_3$-model to a $K_3$-model in which $x$ and $y$ have the same labels. This suffices to demonstrate the existence of a reconfiguration sequence between the source and target $K_3$-models, as we reconfigure from the source $K_3$-model to a $K_3$-model in which $x$ and $y$ have the same labels, then to another $K_3$-model in which $x$ and $y$ have the same labels, and finally to the target $K_3$-model.

  Without loss of generality, we assume that the labels are $a$, $b$, and $c$, and that in the starting $K_3$-model, $f'(x) = a$ and $f'(y) = b$. If Lemma~\ref{lemma-conds} holds for either relabeling $x$ to $b$ or relabeling $y$ to $a$, then we can accomplish the reconfiguration in a single step.  Similarly, the reconfiguration can be accomplished using Lemma~\ref{lemma-slurp-siphon} if either $x$ or $y$ is a cut vertex that is not essential for $c$. 

    Thus, it suffices to consider the cases in which either condition~\ref{cond-nonempty} or \ref{cond-edges} of Lemma~\ref{lemma-conds} must be violated for both relabeling $x$ to $b$ and relabeling $y$ to $a$.  In any case, both $x$ and $y$ will be essential for $c$.  By Lemma~\ref{lemma-all-weak}, it is not possible for both $x$ and $y$ to be essential for $c$ unless $|G'(f',a)| = |G'(f',b)| = 1$.  Thus, it suffices to consider the case $|G'(f',a)| = |G'(f',b)| = 1$. 

    Due to the $3$-connectivity of $G$, $x$ and $y$ will each have at least two neighbours in $G'(f',c)$.  We will show that one of $y$'s neighbours $w$ can be relabeled $b$, after which $y$ can be relabeled $a$.

    If Lemma~\ref{lemma-conds} does not apply for the relabeling of $w$ to $b$, then because $w$ is not $b$-crucial, the only possible condition of Lemma~\ref{lemma-conds} that can be violated is condition~\ref{cond-notcut}.  Suppose that every neighbour of $y$ in $G'(f',c)$ is a cut vertex. Since by Lemma~\ref{lemma-leafblock}, each leaf block in $G'(f',c)$ has two interior vertices that are endpoints of connecting edges, each of these edges must connect to $x$.
By $3$-connectivity, there must be three vertex-disjoint paths to $y$ from an interior vertex in a leaf block in $G(f',c)$.  As only one can pass through $x$ and only one can pass through the joining vertex of the leaf block, there can only be two paths at most,  forming a contradiction.  Because all conditions of Lemma~\ref{lemma-conds} must hold, we can relabel $w$ to $b$. 

  To see that we can now relabel $y$ to $a$, we observe that since $G'(f',c)$ was connected, $z$ has a neighbour with label $c$. This implies that $y$ is not essential for $c$, as needed to satisfy all conditions of Lemma~\ref{lemma-conds}. 
\end{proof}

Finally, in Lemma~\ref{lemma-addedge}, we show that for $G'$ the graph formed by adding an edge $xy$ to a $3$-connected graph $G \in \mbox{host}(K_3)$, $G'$ is also in $\mbox{host}(K_3)$.  We achieve the result by showing that we can handle situations in which $xy$ plays a role not played by any other edge, either as an essential edge or as a bridge within a branch set.  The proof relies on the following results:

\begin{lemma}\label{obs-cor-ess-vertex}
Given a $2$-connected graph $G$ and a $K_3$-model $f$ of $G$, any branch set containing at least two vertices has no crucial vertex.  
\end{lemma}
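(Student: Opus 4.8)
The plan is to argue by contradiction: suppose there is a branch set $G(f,a)$ with $|G(f,a)| \ge 2$ that contains a crucial vertex $v$. Since we are working with $K_3$, the three labels are $a$, $b$, and $c$, and by definition of crucial $v$ must be essential for both $b$ and $c$. The goal is to exhibit a cut set of size at most $2$ in $G$, contradicting $2$-connectivity.

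First I would observe that because $v$ is essential for $b$, every connecting edge between $G(f,a)$ and $G(f,b)$ has $v$ as its endpoint in $G(f,a)$; likewise every connecting edge between $G(f,a)$ and $G(f,c)$ has $v$ as its endpoint in $G(f,a)$. Now pick any vertex $w \in G(f,a)$ with $w \ne v$ (this exists since $|G(f,a)| \ge 2$). Consider any path in $G$ from $w$ to a vertex outside $G(f,a)$. Since $G(f,a)$ is connected but the only edges leaving $G(f,a)$ toward $G(f,b)$ or $G(f,c)$ emanate from $v$, such a path must pass through $v$ before it can leave the branch set. Hence $\{v\}$ alone separates $w$ from $V(G) \setminus V(G(f,a))$ — but wait, this would make $v$ a cut vertex of $G$ of size one, which already contradicts $2$-connectivity, \emph{provided} $V(G) \setminus V(G(f,a))$ is nonempty; that is guaranteed since $G(f,b)$ and $G(f,c)$ are nonempty.

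So the cleaner formulation is: the single vertex $v$ is a cut vertex of $G$, since $G \setminus \{v\}$ has $G(f,a) \setminus \{v\}$ (nonempty as $|G(f,a)|\ge 2$) in one part and $G(f,b) \cup G(f,c)$ (nonempty) with no edges between them — every edge from $G(f,a)\setminus\{v\}$ to the rest of the graph would have to be a connecting edge to $b$ or $c$ not incident to $v$, which cannot exist, and edges within $G(f,a)$ cannot leave. This contradicts $2$-connectivity of $G$, completing the proof. I do not anticipate a genuine obstacle here; the only subtlety is making sure that after deleting $v$ the remnant $G(f,a)\setminus\{v\}$ is actually nonempty (handled by $|G(f,a)|\ge 2$) and that it is genuinely disconnected from the rest (handled by $v$ being essential for \emph{both} other labels, so there is no escape route). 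One could alternatively phrase it via Lemma~\ref{lemma-all-weak} applied with $B = G(f,a)$, noting that essentiality of $v$ for $b$ and for $c$ makes $G(f,a)$ weakly connected to both other branch sets with $v$ as the forced lynchpin, so no non-lynchpin in $G(f,a)$ can coexist with a non-lynchpin elsewhere — but the direct cut-vertex argument is shorter and self-contained.
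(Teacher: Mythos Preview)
Your proof is correct and uses essentially the same mechanism as the paper: a crucial vertex $v$ in $G(f,a)$ is essential for both other labels, so it alone separates the remaining vertices of $G(f,a)$ from the rest of $G$, contradicting $2$-connectivity. The paper packages this as a special case of a more general bound (for $k$-connected $G$ and odd $|V(H)|$ with $k>(|V(H)|-1)/2$, any branch set of size at least $k$ has at most $(|V(H)|-3)/2$ crucial vertices), but when specialized to $K_3$ the argument is exactly your cut-vertex contradiction.
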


\begin{proof}
We will show that for a $k$-connected graph $G$ and an $H$-model of $G$ such that $|V(H)|$ is odd and $k>(|V(H)|-1)/2$, any branch set containing at least $k$ vertices has at most $(|V(H)|-3)/2$ crucial vertices.  The result then follows from the fact that $(|V(K_3)|-3)/2=0$.

We first observe that any branch set $G(f,a)$ contains at most $(|V(H)|-1)/2$ crucial vertices, because each crucial vertex must be essential for at least two distinct labels in the set $V(H) \setminus \{a\}$.  To prove the lemma, it thus suffices to consider the case in which $G(f,a)$ contains at least $k$ vertices and has exactly $\lfloor (|V(H)|-1)/2 \rfloor$ crucial vertices.  When $G(f,a)$ has $\lfloor (|V(H)|-1)/2 \rfloor$ crucial vertices, each crucial vertex in $G(f,a)$ has exactly two neighbours in different branch sets.  Moreover, because together all the crucial vertices are essential for all the labels in $V(H) \setminus \{a\}$, no non-crucial vertex in $G(f,a)$ has a neighbour in a different branch set.  Thus, the crucial vertices form a cut set which separates the non-crucial vertices of $G(f,a)$ from the rest of the vertices of $G$, which is a contradiction since $G$ is $k$-connected.  Hence, $G(f,a)$ contains at most $(|V(H)|-3)/2$ crucial vertices. 
\end{proof}

\begin{lemma}\label{lemma-ben}
  Given a $3$-connected graph $G$ and a $K_{3}$-model $f$ of $G$, suppose that $x \in G(f,a)$, $y \in G(f,b)$, and $xy$ is an essential edge. Then we can reconfigure $f$ to a $K_{3}$-model $g$ such that $g(v) \ne f(v)$ for each $v \in G(f,c)$, $c \notin \{a, b\}$, $g(u) = f(u)$ for all other vertices $u$, and $xy$ is not an essential edge in $g$.
\end{lemma}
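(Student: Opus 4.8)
The plan is to leave the branch sets $G(f,a)$ and $G(f,b)$ entirely untouched and to relabel only vertices of $G(f,c)$, moving them into $a$ and $b$ until some edge other than $xy$ joins the two branch sets, while keeping at least one vertex labelled $c$ so that the resulting labeling $g$ is again a $K_3$-model.

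I would begin by recording the structural consequences of $3$-connectivity. Since $xy$ is the unique edge between $G(f,a)$ and $G(f,b)$, deleting $x$ and $y$ leaves $G(f,a)\setminus\{x\}$ and $G(f,b)\setminus\{y\}$ joined only through $G(f,c)$; a short $2$-cut argument (for instance, if $G(f,c)=\{w\}$ then $\{x,w\}$ or $\{y,w\}$ would be a cut set of size two, or $w$ would have degree at most two) shows $|G(f,c)|\ge 2$, and hence by Lemma~\ref{obs-cor-ess-vertex} no vertex of $G(f,c)$ is crucial; the same holds in every intermediate model, since $G$ is also $2$-connected. By Definition~\ref{def-model} some vertex of $G(f,c)$ has a neighbour in $G(f,a)$ and some has a neighbour in $G(f,b)$.

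The core of the argument is to first reach a model in which some vertex $w\in G(f,c)$ has a neighbour in both $G(f,a)$ and $G(f,b)$, and then to exploit it. To reach such a model I would relabel vertices of $G(f,c)$ into $a$ and $b$ one at a time, always relabelling a vertex of the current $c$-branch-set that is adjacent to only one of the two other branch sets (and hence not essential for the other, so not crucial for the label it receives): directly by Lemma~\ref{lemma-conds} when it is not a cut vertex of its branch set, and after a siphon, using Lemma~\ref{lemma-slurp-component} or Lemma~\ref{lemma-slurp-siphon} applied with the labels permuted so that $c$ plays the role of the cut-vertex branch set, when it is; since $G(f,c)$ is connected and joins $a$ to $b$, this process cannot continue without eventually exposing a vertex adjacent to both branch sets. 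Once we have such a $w$, it is not crucial, so without loss of generality $w$ is not essential for $b$ and hence not $a$-crucial; relabelling $w$ to $a$ — directly if $w$ is not a cut vertex, and otherwise after one more siphon (Lemma~\ref{lemma-slurp-siphon}, again with permuted labels) that retains a nonempty component of the $c$-branch-set — puts $w$ into $G(g,a)$ still adjacent to $G(g,b)$, so the edge from $w$ into $b$ is a second edge between the two branch sets, $xy$ is no longer essential, and a $c$-labelled vertex survives.

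The delicate part, and what I expect to be the main obstacle, is the bookkeeping that keeps every individual relabelling legal: checking that $f$ never hits a leaf-crucial or leaf-$\ell$-crucial model on the subgraph being relabelled, that the branch sets stay connected, and — the trickiest point — that the relabellings are ordered so that no vertex that must still be relabelled has, in the meantime, become essential for a second label (this can happen, for instance, when a sibling component is siphoned into $b$ and thereby deletes $c$--$b$ connecting edges). Controlling this is exactly where $3$-connectivity does its work: the $2$-cut counting that forbids crucial vertices in branch sets of size at least two, together with Lemma~\ref{lemma-leafblock} on how connecting edges are distributed among the interior vertices of leaf blocks, is what both forces progress and rules out the configurations in which the second $a$--$b$ edge could fail to appear; the small residual cases ($|G(f,c)|=2$, and $G(f,c)$ $2$-connected) I would dispatch directly.
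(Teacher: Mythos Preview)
Your overall plan --- relabel only vertices of $G(f,c)$ until a second $a$--$b$ edge appears --- is the right one, and matches the paper's intent. But your execution differs from the paper's and leaves exactly the gap you yourself flag as ``the trickiest point''.

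The paper's proof avoids all of your delicate bookkeeping by splitting on whether $G(f,c)$ is $2$-connected. If it is \emph{not} $2$-connected, a short $3$-connectivity argument shows every leaf block of $G(f,c)$ has interior vertices with neighbours in \emph{both} $G(f,a)$ and $G(f,b)$; Lemma~\ref{lemma-relabel-leafblock} then relabels the interior of a single leaf block, which immediately produces a second $a$--$b$ edge while leaving the joining vertex (and hence a nonempty $c$-branch-set) behind. If $G(f,c)$ \emph{is} $2$-connected, the paper takes a shortest $(u,v)$-path $P$ inside $G(f,c)$ with $u$ adjacent to $G(f,a)$ and $v$ adjacent to $G(f,b)$, observes (again by $3$-connectivity) that there are spare vertices $w,z\notin V(P)$ with neighbours in $G(f,a)$ and $G(f,b)$ respectively, and relabels $P$ to $a$ one vertex at a time. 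Either the whole path is relabeled --- so $v$, now in the $a$-branch-set, supplies the second $a$--$b$ edge --- or at some step $G(\cdot,c)$ ceases to be $2$-connected, and Case~1 finishes the job. The point is that the paper \emph{never relabels a cut vertex of the $c$-branch-set}: in Case~1 only interior leaf-block vertices are touched, and in Case~2 one stops the moment $2$-connectivity fails.

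Your approach instead tries to locate a single vertex $w\in G(f,c)$ adjacent to both other branch sets and then push it into $a$. When $w$ (or an earlier candidate) is a cut vertex you invoke Lemma~\ref{lemma-slurp-siphon} with permuted labels, but that lemma siphons away \emph{all but one} component of the branch set before relabeling the cut vertex. You have no control over which component survives, so the siphon may discard precisely the vertices carrying the $c$--$b$ connecting edges; after that, the remaining $c$-vertices can become essential for $b$, blocking the very relabelings you still need. This is the gap: you acknowledge the danger (``a sibling component is siphoned into $b$ and thereby deletes $c$--$b$ connecting edges'') but do not show how to order the relabelings to prevent it. The paper's $2$-connected/not-$2$-connected dichotomy is exactly the device that makes this issue disappear.
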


\begin{proof}
We consider two cases, depending on whether or not $G(f,c)$ is $2$-connected.  
  \smallskip

\noindent{\bf Case 1}: $G(f,c)$ is not $2$-connected.
\smallskip

By Lemma~\ref{lemma-leafblock}, each leaf block $L$ in $G(f,c)$ has at least two interior vertices that are endpoints of connecting edges.  Due to $3$-connectivity, we can further show that each leaf block in $G(f,c)$ has edges to both $G(f,a)$ and $G(f,b)$, as otherwise the joining vertex of the leaf block and either $x$ or $y$ would form a cut set of size two separating internal vertices of the leaf block at one of the branch sets.

The fact that all other leaf blocks connect to both $G(f,a)$ and $G(f,b)$ ensure that
$f$ does not hit a leaf-crucial model on relabeling $L$. We can then use Lemma~\ref{lemma-relabel-leafblock} to relabel all interior vertices of $L$. Due to the connectivity of $L$ and the fact that it contained neighbours in both $G(f,a)$ and $G(f,b)$, it follows that $xy$ is not an essential edge in the resulting model $g$.

\smallskip
\noindent{\bf Case 2}: $G(f,c)$ is $2$-connected. 
\smallskip

We first use $3$-connectivity to show that $|G(f,c)| > 1$ and that
$G(f,c)$ contains vertices $u$ and $v$ such that $u$ has a neighbour
in $G(f,a)$ and $v$ has a neighbour in $G(f,b)$.  If $|G(f,c)| = 1$,
then the vertex in $G(f,c)$ and either $x$ or $y$ form a cut set of
size two separating the branch sets $G(f,a)$ and $G(f,b)$.  Similarly, if $G(f,c)$ contained only a single endpoint of a connecting edge, then the endpoint and either $x$ or $y$ would also form a cut set of size two.

We can choose $u$ and $v$ such that $P$ is a $(u,v)$-path in $G(f,c)$ such that no vertex in $P$ other than $u$ or $v$ has a neighbour in $G(f,a)$ or $G(f,b)$. 
We let $u =v_{1},\ldots,v_{t} =v$ be the vertices of $P$ with edges $v_{i}v_{i+1}$, $i \in \{1,\ldots,t\}$.   Since $G$ is $3$-connected and $xy$ is an essential edge, $G(f,c)$ must contain vertices $w \notin \{u,v\}$ and $z \notin \{u,v\}$ such that $w$ has an edge to $G(f,a)$ and $z$ has an edge to $G(f,b)$.

We will attempt to relabel all of $P$ to label $a$. As $G(f,c)$ is $2$-connected, we can relabel $v_{1}$ to $a$. If the resulting branch set is $2$-connected, then we attempt relabel $v_{2}$,$v_{3}$, $\ldots, v_{t}$ in that order until we relabel the entire path. If this succeeds, then we are done. Otherwise, at some step relabeling along the path we obtain a $K_{3}$ model $g$ such that $G(g,c)$ is not $2$-connected. Now we apply Case 1 to $g$ to complete the claim.  
\end{proof}

\begin{lemma}\label{lemma-addedge}
Suppose $G$ is a $3$-connected graph such that $G \in \mbox{host}(K_{3})$ and $G'$ is formed from $G$ adding an edge $xy$. Then $G' \in \mbox{host}(K_{3})$.
\end{lemma}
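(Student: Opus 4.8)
The plan is to exploit that $G$ is a spanning subgraph of $G'$ (they share the same vertex set, with $E(G) = E(G')\setminus\{xy\}$) and that $G'$ is again $3$-connected. Every $K_3$-model of $G$ is a $K_3$-model of $G'$, and every single-label reconfiguration step valid in the reconfiguration graph of $(G,K_3)$ is valid in that of $(G',K_3)$, so the former is an induced subgraph of the latter. Since $G\in\mbox{host}(K_3)$, it suffices to show that every $K_3$-model of $G'$ can be reconfigured, using only steps valid in $G'$, to a $K_3$-model that is also a $K_3$-model of $G$: given a source and a target $K_3$-model of $G'$, we push each to a $K_3$-model of $G$ and then join the two using $G\in\mbox{host}(K_3)$.

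To carry this out I would first describe the $K_3$-models $f'$ of $G'$ that are not $K_3$-models of $G$. Since deleting $xy$ from $G'$ recovers $G$, such an $f'$ fails in exactly one of two mutually exclusive ways: either $x$ and $y$ lie in distinct branch sets, say $x\in G'(f',a)$ and $y\in G'(f',b)$ with $ab\in E(K_3)$, and $xy$ is the unique connecting edge between $G'(f',a)$ and $G'(f',b)$, so $xy$ is an essential edge; or $x$ and $y$ lie in a common branch set $G'(f',a)$ and $xy$ is a bridge of $G'(f',a)$.

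In the essential-edge case, I would apply Lemma~\ref{lemma-ben} with $G'$ in the role of the $3$-connected graph: it reconfigures $f'$ to a $K_3$-model $g'$ in which only the vertices of the third branch set $G'(f',c)$ change labels and $xy$ is no longer an essential edge. As $x$ keeps label $a$ and $y$ keeps label $b$, the edge $xy$ still joins distinct branch sets of $g'$, so it is not a bridge inside a branch set, and, being non-essential, $g'$ is a $K_3$-model of $G$. In the bridge case, let $C_x,C_y$ be the components into which deleting $xy$ splits $G'(f',a)$, with $x\in C_x$ and $y\in C_y$. By Lemma~\ref{lemma-leafblock}, every leaf block of $G'(f',a)$ (in the $3$-connected graph $G'$) has at least two interior vertices, which forces either $G'(f',a)=\{x,y\}$ or both $|C_x|\ge 2$ and $|C_y|\ge 2$ (with $x$ and $y$ both cut vertices). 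In the latter case, $C_y$ is a component of $G'(f',a)\setminus\{x\}$ containing, again by Lemma~\ref{lemma-leafblock}, a vertex with a neighbour in $G'(f',b)$ or $G'(f',c)$; using the siphoning of Lemma~\ref{lemma-slurp-component}, aided by Lemma~\ref{obs-cor-ess-vertex} (which rules out crucial vertices in the still-nontrivial branch set $G'(f',a)$), I would reconfigure so that every vertex of $C_y$ leaves branch set $a$. Then $x$ and $y$ are in different branch sets, so $xy$ is no longer a bridge inside a branch set; if it has become essential, we conclude as in the essential-edge case, and otherwise the resulting model is already a $K_3$-model of $G$. The degenerate case $G'(f',a)=\{x,y\}$ I would handle by a short direct argument with Lemma~\ref{lemma-conds}: by $3$-connectivity each of $x,y$ has at least two neighbours outside $\{x,y\}$, and relabeling $y$ (or $x$), possibly after first relabeling one outside neighbour into branch set $a$, moves $y$ out of branch set $a$ while keeping a valid model, once more reducing to the essential-edge case if $xy$ becomes essential.

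I expect the bridge case to be the main obstacle. The delicate point is verifying that siphoning $C_y$ out of branch set $a$ never forces a leaf-crucial or leaf-$\ell$-crucial model, which requires tracking the sizes of the other two branch sets and treating singleton branch sets as special cases; the fully degenerate configuration $G'(f',a)=\{x,y\}$ similarly needs its own sub-case analysis according to which of $x,y$ is crucial and which other branch sets are singletons.
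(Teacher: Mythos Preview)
Your approach is essentially the paper's: reduce to reconfiguring each $K_3$-model of $G'$ to one that is already a $K_3$-model of $G$, handle the essential-edge case via Lemma~\ref{lemma-ben}, and the bridge case via siphoning with Lemma~\ref{lemma-slurp-component} together with Lemma~\ref{obs-cor-ess-vertex}.

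Two places where your write-up drifts. First, the dichotomy you derive from Lemma~\ref{lemma-leafblock} is not justified: nothing rules out $|C_x|=1$ with $|C_y|\ge 2$. When a leaf block is a single vertex, the proof of Lemma~\ref{lemma-leafblock} only tells you that this one vertex is the endpoint of at least two connecting edges, not that there are two distinct interior vertices. This is harmless---the singleton side is handled by the same siphoning (using the other endpoint as cut vertex) or by a direct application of Lemma~\ref{lemma-conds}---but you should not organise the case split around the sizes of $C_x,C_y$, and the separate ``degenerate'' case $G'(f',a)=\{x,y\}$ is unnecessary.

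Second, the ``delicate point'' you flag is resolved more simply than you anticipate, and not by tracking sizes of the other branch sets. The paper splits on which of $G'(f',b)$, $G'(f',c)$ each of $C_1,C_2$ has connecting edges to: either some component reaches both (Case~2a), or one reaches only $b$ and the other only $c$ (Case~2b). In either case the component you keep already supplies the needed connections to the other branch sets, so no vertex of the siphoned component can be essential for those labels; this immediately gives the required ``no leaf-$\ell$-crucial'' hypothesis of Lemma~\ref{lemma-slurp-component}, while Lemma~\ref{obs-cor-ess-vertex} handles ``no leaf-crucial'' because at least two vertices remain labeled $a$ throughout.
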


\begin{proof}
We first observe that any $K_3$-model of $G$ is also $K_3$-model of $G'$.  Consequently, to show that we can reconfigure between any $K_3$-models of $G'$, it suffices to show that we can reconfigure between any $K_3$-model of $G'$ and a $K_3$-model of $G$, as the fact that $G \in \mbox{host}(K_3)$ ensures that we can reconfigure between any two $K_3$-models of $G$.
  
There are only two cases in which a $K_3$-model $f$ of $G'$ is not a $K_3$-model of $G$, namely cases in which the role $xy$ plays in the $K_3$-model is not played by any other edge.  In both cases we can assume that $G' \ne G$, and consequently that $|V(G')| > 3$.

  \smallskip
  \noindent{\bf Case 1}: $xy$ is the essential edge connecting $G'(f,f(x))$ and $G'(f,f(y))$
  \smallskip

Without loss of generality, we assume $f(x) = a$ and $f(y) = b$. By Lemma~\ref{lemma-ben}, we can reconfigure $f$ to a $K_3$-model $f'$ by relabeling only vertices in $G(f,c)$ such that $xy$ is not an essential edge in $f'$, which means $f'$ is also a $K_3$-model of $G$.

  \smallskip
  \noindent{\bf Case 2}: $f(x) = f(y)$ and $xy$ is a bridge in $G'(f,f(x))$
  \smallskip

We show that we can reconfigure to a model in which $x$ and $y$ have different labels so that $xy$ is a connecting edge. Depending on whether $xy$ is then an essential edge, we have either completed the reconfiguration or we have reduced the situation to Case 1.
  
Without loss of generality, we assume that $f(x) = f(y) = a$, and observe that the removal of $xy$ separates $G'(f,a)$ into two components $C_1$ (containing $x$) and $C_2$ (containing $y$), each of which contains at least one leaf block.

By Lemma~\ref{lemma-leafblock}, each of the leaf blocks has two vertices with neighbours in $G'(f,b)$ or $G'(f,c)$.  We will show that we can reconfigure to a $K_3$-model in which either $C_1$ or $C_2$ has no vertex with label $a$, so that $xy$ is no longer a bridge.

    \smallskip
    \noindent{\bf Case 2a}:  One component has edges to both $G'(f,b)$ and $G'(f,c)$.
    \smallskip

    Without loss of generality, let $C_1$ have connecting edges to both $G'(f,b)$ and $G'(f,c)$. Then $f$ does not hit a leaf-$b$-crucial model or a leaf-$c$-crucial model on relabeling $C_2$ because there are necessary connecting edges from $C_1$. Also, $f$ does not hit a leaf-crucial model on relabeling $C_2$ because there are at least two vertices labeled $a$ in each model in the reconfiguration sequence, and so by Lemma~\ref{obs-cor-ess-vertex}, there never exists a crucial vertex labeled $a$. Now by Lemma~\ref{lemma-slurp-component}, since $x$ is a cut vertex of $G'(f,a)$, we can relabel all the vertices of $C_2$, which ensures $xy$ is a connecting edge.
    
    \smallskip
    \noindent{\bf Case 2b}: One component has two edges to $G'(f,b)$ and one component has two edges to $G'(f,c)$.
    \smallskip

Without loss of generality, let $C_1$ have connecting edges to $G'(f,b)$ and $C_2$ have connecting edges to $G'(f,c)$. Then $f$ does not hit a leaf-$c$-crucial model on relabeling $C_2$ because $C_1$ has edges to $G'(f,b)$. Also, it follows from Lemma~\ref{obs-cor-ess-vertex} that $f$ does not hit a leaf-crucial model on relabeling $C_2$ because there are at least two vertices labeled $a$ in each model in the reconfiguration sequence. Hence, by Lemma~\ref{lemma-slurp-component}, since $x$ is a cut vertex of $G'(f,a)$, we can relabel all the vertices of $C_2$, which ensures $xy$ is a connecting edge.

\end{proof}

\section{Characterizing $\mbox{host}(K_4)$}\label{sec-k4}

In order to use Ding and Qin's characterization in Theorem~\ref{thm-dingqin}, we show that $C^2_6 \in \mbox{host}(K_4)$ (Lemma~\ref{lemma-base-4conn}) and $K_5 \in \mbox{host}(K_4)$ (a special case of Lemma~\ref{lemma-clique}) and present analogues of Lemmas~\ref{lemma-split} and \ref{lemma-addedge} (Lemmas~\ref{lemma-split-4} and \ref{lemma-addedge-4}), showing that $\mbox{host}(K_{4})$ is closed under the splitting of vertices or adding of edges for $4$-connected graphs.  The four results are sufficient to prove the following theorem:

\begin{theorem}\label{thm-4conn}
  Every $4$-connected graph is in $\mbox{host}(K_{4})$, provided it is not in $\mathcal{L}$, where $\mathcal{L} = \{H : H$ is the line graph of an internally $4$-connected cubic graph$\}$. 
\end{theorem}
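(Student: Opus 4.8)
The plan is to apply Ding and Qin's characterization (Theorem~\ref{thm-dingqin}) in the same way Theorem~\ref{thm-3conn} applied Tutte's. A $4$-connected graph $G$ not in $\mathcal{L}$ is either in $\mathcal{C} = \{C^2_k : k \geq 5\}$, or it sits at the end of a chain $G = G_1, \ldots, G_n$ where $G_n \in \{C^2_6, K_5\}$ and each $G_i$ is obtained from $G_{i-1}$ by deleting a single edge (equivalently, $G_{i-1}$ is obtained from $G_i$ by adding a single edge), with every $G_i$ in the chain $4$-connected. So the overall structure is: (1) handle the base cases $C^2_6$ and $K_5$; (2) handle the remaining members of $\mathcal{C}$, namely $C^2_k$ for $k \geq 5$ with $k \neq 6$; (3) show $\mathrm{host}(K_4)$ is closed under adding an edge between two vertices of a $4$-connected graph that remains $4$-connected (Lemma~\ref{lemma-addedge-4}), so that running the chain backwards from $G_n$ to $G_1 = G$ keeps us inside $\mathrm{host}(K_4)$; and possibly (4) the splitting closure (Lemma~\ref{lemma-split-4}), which the excerpt lists as needed, presumably to reach the $C^2_k$ family from a base case via splits rather than only edge-additions. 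I would first invoke $C^2_6 \in \mathrm{host}(K_4)$ from Lemma~\ref{lemma-base-4conn} and $K_5 \in \mathrm{host}(K_4)$ from the clique lemma (Lemma~\ref{lemma-clique}) as black boxes.

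Next I would deal with the infinite family $\mathcal{C}$. For $k \geq 6$, each $C^2_{k+1}$ is obtainable from $C^2_k$ by a split of a rim vertex (splitting a vertex of the squared cycle into two adjacent vertices, distributing the ``left'' and ``right'' neighbours appropriately) followed by an edge addition or two, all preserving $4$-connectivity; so Lemmas~\ref{lemma-split-4} and \ref{lemma-addedge-4}, together with $C^2_6 \in \mathrm{host}(K_4)$, inductively give $C^2_k \in \mathrm{host}(K_4)$ for all $k \geq 6$, and a separate small argument (or the same split/add reduction run toward $C^2_6$, or toward $K_5$ since $C^2_5 = K_5$) handles $C^2_5$. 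I would spell out the split/add decomposition of $C^2_{k+1}$ from $C^2_k$ carefully, checking that intermediate graphs stay $4$-connected, since this is the one genuinely combinatorial verification in this step.

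Finally, for a $4$-connected $G \notin \mathcal{C} \cup \mathcal{L}$, Theorem~\ref{thm-dingqin} gives a $(G,C^2_6)$-chain (if $G$ planar) or a $(G,K_5)$-chain (otherwise). Reversing the chain, $G$ is built from $C^2_6$ (resp.\ $K_5$) by repeatedly adding a single edge, and every graph in the chain is $4$-connected. Since the base graph is in $\mathrm{host}(K_4)$ and Lemma~\ref{lemma-addedge-4} shows $\mathrm{host}(K_4)$ is preserved under adding an edge to a $4$-connected graph (with the result $4$-connected), an induction along the chain yields $G \in \mathrm{host}(K_4)$. This completes all cases.

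The main obstacle I expect is Lemma~\ref{lemma-addedge-4} itself (and, to a lesser extent, Lemma~\ref{lemma-split-4}): generalizing the $K_3$ arguments of Lemmas~\ref{lemma-addedge} and \ref{lemma-split} to $K_4$ is where the real work lies, because with four labels a branch set can genuinely contain crucial vertices (Lemma~\ref{obs-cor-ess-vertex} fails for $|V(H)|=4$), so the ``no leaf-crucial model'' hypotheses needed to apply Lemma~\ref{lemma-slurp-component} and Lemma~\ref{lemma-relabel-leafblock} are harder to establish; one must lean on Lemmas~\ref{lemma-leafblock}, \ref{lemma-two-weak}, and \ref{lemma-all-weak} with $k=4$ to control where essential vertices and lynchpins can sit. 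Relative to that, assembling the chain argument at the top level is routine. The excerpt defers these lemmas, so in this proof I would simply cite them; the only thing I must get right here is the case split from Theorem~\ref{thm-dingqin} and the explicit realization of the $C^2_k$ family from the base cases.
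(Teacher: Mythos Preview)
Your overall plan mirrors the paper's exactly: the paper's proof of Theorem~\ref{thm-4conn} is just the assertion that Lemmas~\ref{lemma-base-4conn}, \ref{lemma-clique}, \ref{lemma-split-4}, and \ref{lemma-addedge-4} together with Theorem~\ref{thm-dingqin} suffice, with no further detail. You correctly identify that the chain portion of Theorem~\ref{thm-dingqin} only needs the edge-addition closure (Lemma~\ref{lemma-addedge-4}), that the splitting closure (Lemma~\ref{lemma-split-4}) must be what handles the exceptional family $\mathcal{C}=\{C^2_k:k\ge 5\}$, and that $C^2_5=K_5$, so only $k\ge 7$ is genuinely at issue.

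Your proposed treatment of $\mathcal{C}$, however, does not survive the verification you flag. In $C^2_k$ every vertex has degree exactly $4$, so splitting any vertex yields two new vertices whose degrees sum to $5$; one of them has degree at most $3$, the intermediate graph is not $4$-connected, and Lemma~\ref{lemma-split-4} does not apply. More fundamentally, any two adjacent vertices of $C^2_{k+1}$ share at least one common neighbour, whereas the split operation assigns each old neighbour to exactly one of the two new vertices; hence $C^2_{k+1}$ cannot be produced from \emph{any} simple graph by a single split, and no ``split then add edges'' sequence starting from a $k$-vertex graph can land on a $4$-connected spanning subgraph of $C^2_{k+1}$. The paper does not spell this step out either, so your outline is faithful to the paper's; but the concrete decomposition you sketch for $\mathcal{C}$ would need to be replaced---for example by a direct connectivity argument for $C^2_k$ in the style of Lemma~\ref{lemma-base-4conn}, or by a more elaborate route through auxiliary $4$-connected graphs that are not themselves subgraphs of $C^2_{k+1}$.
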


The results establishing the base cases of the characterization are relatively straightforward.  Lemma~\ref{lemma-base-4conn} makes use of various properties of the structure of $C_{6}^{2}$, most notably the fact that for each vertex $v$, there is a unique vertex $s(v)$ such that there is no edge between $v$ and $s(v)$.  As an immediate consequence, any four vertices form a $4$-cycle, which permits the use of Theorem~\ref{lemma-2conn-total} for reconfiguration of part of the graph.  Lemma~\ref{lemma-clique}, a generalization of $K_5 \in \mbox{host}(K_4)$, follows easily from the high connectivity of cliques.

\begin{lemma}\label{lemma-base-4conn}
$C_{6}^{2}$ is in $\mbox{host}(K_{4})$. 
\end{lemma}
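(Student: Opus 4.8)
The plan is to exploit the highly symmetric structure of $C_6^2$, in particular the fact noted before the statement that each vertex $v$ has a unique non-neighbour $s(v)$, and that $\{s(v) : v \in V(C_6^2)\}$ partitions the six vertices into three antipodal pairs. I would first establish the key structural fact that any four vertices of $C_6^2$ induce a connected graph containing a $4$-cycle (since the only missing edges among any four vertices come from at most two antipodal pairs, and even two disjoint missing edges leave a $4$-cycle). In particular, for any four vertices, the induced subgraph is $2$-connected, so Theorem~\ref{lemma-2conn-total} applies to it.

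As in the proofs of Lemma~\ref{lemma-genwheel} and Theorem~\ref{lemma-2conn-total}, I would designate a family of \emph{canonical models} and then show (i) every $K_4$-model reconfigures to a canonical model, and (ii) any two canonical models are reconfigurable. For the canonical models, fix an antipodal pair, say $\{u, s(u)\}$; a canonical model assigns some label to $u$, a second label to $s(u)$, a third label to one designated vertex $w$ among the remaining four, and the fourth label to the remaining three vertices. Since $C_6^2$ is $4$-connected and contains a $K_4$-minor, each such assignment is a valid $K_4$-model (the three-vertex branch set is connected because the remaining four vertices minus $w$ still induce a connected graph, and the required connecting edges are present by the density of $C_6^2$). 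Reconfiguring an arbitrary model to a canonical one: repeatedly relabel vertices so that one branch set grows to size three, one to a singleton non-neighbour-pair configuration, using Lemma~\ref{lemma-conds} and Lemma~\ref{lemma-universal}-style arguments; because $C_6^2$ has only six vertices, only a bounded case analysis on branch-set size sequences $(n_1,n_2,n_3,n_4)$ summing to $6$ is needed, and in each case a vertex on the boundary of the largest branch set is non-crucial (by Lemma~\ref{obs-cor-ess-vertex}-type reasoning, or directly, since $G(f,a)$ with $|G(f,a)|\ge 2$ in a $4$-connected graph cannot have its non-crucial vertices cut off) and can be moved.

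For reconfiguring between canonical models, it suffices to realize a transposition of two labels. If the two labels sit on $u$ and $s(u)$, or on $w$ and one of $u,s(u)$, or on $w$ and a vertex of the large branch set, I would use a "parking" vertex trick exactly as in Case~1 and Case~2 of Lemma~\ref{lemma-genwheel}: temporarily recolour an auxiliary vertex to free up a label, perform the two single-vertex relabelings (each justified by Lemma~\ref{lemma-conds}, whose conditions hold because the relevant branch sets have size $\ge 2$, the moved vertices are not cut vertices of their branch sets thanks to the $4$-cycle structure, and no $K_4$-model with all four labels present here forces a crucial vertex in a branch set of size $\ge 2$ once we are careful), then restore the auxiliary vertex. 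Transpositions affecting only the size-three branch set and $w$ reduce, via the $2$-connectivity of that four-vertex induced subgraph, to Theorem~\ref{lemma-2conn-total}. Composing transpositions gives an arbitrary permutation of the labels, completing the proof.

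The main obstacle I expect is bookkeeping: verifying in the canonicalization step that, for every branch-set size profile on six vertices, there is a legal single relabeling move that makes progress toward the canonical form without ever passing through a non-model — in particular handling profiles like $(3,1,1,1)$ and $(2,2,1,1)$ where a careless move could either disconnect a branch set or destroy the last connecting edge between two branch sets. This is where the uniqueness of non-neighbours in $C_6^2$ does the real work, since it guarantees the abundance of connecting edges and $4$-cycles that Lemma~\ref{lemma-conds} needs; I would isolate that density fact as a small claim up front and invoke it repeatedly rather than re-deriving it in each case.
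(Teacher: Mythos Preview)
Your canonical model is not a $K_4$-model. You place singleton branch sets on $u$ and on $s(u)$, but by definition $s(u)$ is the unique \emph{non}-neighbour of $u$; hence there is no connecting edge between the branch sets $\{u\}$ and $\{s(u)\}$, and condition~\ref{model-edges} of Definition~\ref{def-model} fails. Your justification (``the required connecting edges are present by the density of $C_6^2$'') overlooks exactly the one missing edge that matters. The paper states this obstruction explicitly: an assignment of labels is a $K_4$-model if and only if all four labels appear and no two singleton branch sets sit on an antipodal pair $v,s(v)$.

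The paper's proof works around this by taking the type~B models---profile $(2,2,1,1)$---as the organizing objects rather than the $(3,1,1,1)$ models, and then handling three kinds of label exchanges among type~B models (rearranging the two size-two branch sets via Theorem~\ref{lemma-2conn-total} on the induced $4$-cycle; swapping the two singletons; swapping a singleton with one vertex of a size-two branch set), in each step checking that no intermediate model places two singletons on an antipodal pair. Your overall architecture (canonical form plus transpositions, invoking Theorem~\ref{lemma-2conn-total} on a four-vertex $2$-connected subgraph) is close in spirit, and the case analysis really is tiny since $(3,1,1,1)$ and $(2,2,1,1)$ are the only profiles; but you must either change your canonical form so its singletons are adjacent, or route all arguments through type~B models as the paper does.
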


\begin{proof}
  We first observe that every $K_{4}$-model of $C_{6}^{2}$ either has one branch set of size three (type A) or two branch sets of size two (type B).  Moreover, we can easily reconfigure from a $K_4$-model of type A to one of type B, as the branch sets that are not of size three are each of size one and form a $K_3$. Therefore it suffices to show that we can reconfigure any type B $K_4$-model to any other $K_4$-model.

  We can form a reconfiguration sequence using transformations of the three types outlined below, where in each case we use $a$ and $b$ as the labels of the two branch sets of size two and $c$ and $d$ as the labels of the two remaining branch sets.  In Case 1, we consider rearrangements of the vertices with labels $a$ and $b$ and in Case 2, the exchanging of labels for the two branch sets of size one.  The remaining step is covered in Case 3, in which there is an exchange of labels of one vertex with label $a$ and one with label $d$.
  
  We make extensive use of the fact that any assignments of labels to vertices forms a $K_4$-model provided that all four labels are used and there are no branch sets of size one consisting of $v$ and $s(v)$, for any vertex $v$.  In particular, there is always a connecting edge between branch sets when at least one branch set is of size greater than one.

  \smallskip
  \noindent{\bf Case 1:}  $G(f,c) = G(g,c)$ and $G(f,d) = G(g,d)$
  \smallskip

  We observe that any collection of four vertices from $C_{6}^{2}$ lie on a $4$-cycle, and that by Theorem~\ref{lemma-2conn-total}, $C_{4}$ is in $\mbox{host}(K_{2})$.  Consequently, we can reconfigure $f$ to $g$ without changing the vertices with labels $c$ and $d$.

  \smallskip
  \noindent{\bf Case 2:} $G(f,c) = G(g,d)$ and $G(f,d) = G(g,c)$
  \smallskip

  We execute the relabeling in a series of steps, ensuring that there are never two branch sets of size one containing $v$ and $s(v)$ for any vertex $v$, as is sufficient to guarantee a $K_4$-model. Since each vertex in the graph has degree four, we can assume without loss of generality that the vertex $u$ in $G(f,d)$ is adjacent to both vertices in $G(f,a)$ and that the vertex $v$ in $G(f,c)$ is adjacent to both vertices in $G(f,b)$.  

  First, we choose a vertex $w$ in $G(f,a)$ to relabel to $d$, ensuring that the remaining vertex $x$ is not $s(v)$.  Next, we choose a vertex $y$ in $G(f,b)$ to relabel to $c$, ensuring that the remaining vertex $z$ is not $s(x)$.

  At this point we have a $K_4$-model $j$ such that $G(j,a) = \{x\}$, $G(j,b) = \{z\}$, $G(j,c) = \{v, y\}$, and $G(j,d) = \{u, w\}$.  By the argument given in Case 1, we can now reconfigure to a $K_4$-model $k$ such that $G(k,a) = \{x\}$, $G(k,b) = \{z\}$, $G(k,c) = \{u, w\}$, and $G(k,d) = \{v, y\}$.

  We can now safely relabel $y$ by $b$, as $x$ is not $s(v)$, and then relabel $w$ by $a$, as there must be an edge between $u$ and $v$ (namely the connecting edge between between $G(f,c)$ and $G(f,d)$).  

  \smallskip
  \noindent{\bf Case 3:} $G(f,a) = \{w, x\}$, $G(f,d) = \{u\}$, $G(g,a) = \{u, w \}$, $G(g,d) = \{x\}$
  \smallskip

  We observe that there must be edges $uv$ and $vx$, as connecting edges between branch sets in models $f$ and $g$.  In addition, there must be edges $wx$ and $yz$, for $G(f,b) = \{y, z\}$, as the vertices in each branch set must be connected.  

  Using the technique in Case 2, we first reconfigure the two branch sets of size two so that $x$ retains label $a$ and $s(v)$ has label $b$ (where no change is required if $w \ne s(v)$).  We can now relabel $x$ to $d$, as the two branch sets of size one contain $v$ and a vertex that is not $s(v)$, and then $u$ to $a$.  Now the technique from Case 2 can be used to relabel the vertices with labels $a$ and $b$, as needed. 
\end{proof}

\begin{lemma}\label{lemma-clique} For any $m > \ell$,
$K_{m} \in \mbox{host}(K_\ell)$. 
\end{lemma}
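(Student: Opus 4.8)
The plan is to exploit the extreme connectivity of $K_m$ for $m > \ell$: every vertex of $K_m$ is universal, and $K_m$ is $(m-1)$-connected with $m - 1 \geq \ell$. I would first designate a canonical model, then show every $K_\ell$-model reconfigures to a canonical one, and finally show any two canonical models reconfigure to each other; transitivity through canonical models then gives the result.

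For the canonical model, fix $\ell - 1$ distinct vertices $u_1, \dots, u_{\ell-1}$ of $K_m$ and call a $K_\ell$-model \emph{canonical} if $f(u_i) = i$ for $1 \le i \le \ell - 1$ and $f(w) = \ell$ for every other vertex $w$. Since $m > \ell$, the branch set for label $\ell$ has at least two vertices and every branch set is a clique (hence connected), and every pair of branch sets has a connecting edge, so this is a valid $K_\ell$-model. To reconfigure an arbitrary $K_\ell$-model $f$ to a canonical one, I would proceed label by label. First, for each $i$ from $1$ to $\ell - 1$, I want $u_i$ to carry label $i$: if $u_i$ does not already have label $i$, I can relabel it, using Lemma~\ref{lemma-universal} once I have arranged that some vertex other than $u_i$ carries $u_i$'s current label and noting $u_i$ is adjacent to everything. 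The only thing to be careful about is not emptying the branch set $u_i$ is leaving; since $m > \ell$ there are plenty of spare vertices, and I can first move a spare vertex (one not among $u_1, \dots, u_{\ell-1}$, of which there are $m - \ell + 1 \ge 1$, but in fact I should keep enough) into that branch set before relabeling $u_i$. Once $u_1, \dots, u_{\ell-1}$ are fixed with labels $1, \dots, \ell-1$, every remaining vertex $w$ is adjacent to $u_{\ell-1}$ (say), which has label $\ell - 1$; but more simply, $w$ is adjacent to $u_1$ with label $1$, and the branch set for label $f(w)$ — if $f(w) \ne \ell$ — contains the universal vertex $u_{f(w)}$ which stays put, so by Lemma~\ref{lemma-universal} I can relabel $w$ to $\ell$ (using adjacency of $w$ to any vertex already labeled $\ell$, which exists once at least one spare vertex has been pushed to label $\ell$). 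Iterating relabels all non-$u_i$ vertices to $\ell$.

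For reconfiguring between two canonical models, it suffices to reconfigure between canonical models differing by a transposition of which vertex, or which label-assignment, is used; equivalently, I show I can swap the labels of $u_i$ and $u_j$, and swap the label of $u_i$ with the ``bulk'' label $\ell$, since arbitrary permutations are generated by such swaps. Each swap is done via a spare vertex: pick a spare vertex $t$ (one currently labeled $\ell$; since $m > \ell$ the bulk branch set has size $\ge 2$, so removing one spare still leaves it nonempty), relabel $t$ to $i$, then relabel $u_i$ to $j$, then relabel $u_j$ to $i$, then relabel $t$ back to $\ell$ — at each step the conditions of Lemma~\ref{lemma-universal} hold because the relevant branch sets retain a universal vertex distinct from the one being moved (e.g.\ when moving $u_i$ off label $i$, vertex $t$ with label $i$ is universal) and universality supplies the needed adjacency for the target label. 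The $u_i \leftrightarrow \ell$ swap is analogous.

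The main obstacle I anticipate is purely bookkeeping: making sure that at no intermediate step some branch set becomes empty or that I try to relabel the unique universal vertex of a branch set before a substitute is in place. The slack $m - \ell \ge 1$ is exactly what guarantees a spare vertex is always available, and careful ordering (always push a spare vertex into a branch set before vacating its universal vertex) handles this. No genuinely hard combinatorial step arises; every single relabeling is justified by Lemma~\ref{lemma-universal}, and the global argument is the standard ``route everything through a canonical form'' template already used in Lemma~\ref{lemma-genwheel} and Theorem~\ref{lemma-2conn-total}.
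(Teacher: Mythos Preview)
Your proof is correct and follows essentially the same approach as the paper: route every $K_\ell$-model to a fixed canonical configuration, using that every vertex of $K_m$ is universal (so Lemma~\ref{lemma-universal} applies freely) and that $m>\ell$ guarantees a spare vertex to park a label on during swaps. Note that with your definition there is exactly one canonical model, so your third paragraph (reconfiguring between canonical models) is superfluous---your second paragraph already establishes connectivity---but the swap argument you give there is precisely the mechanism the paper uses to finish fixing the singleton labels.
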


\begin{proof}

  To show that it is possible to reconfigure between any two $K_\ell$-models, it suffices to show that any $K_\ell$-model can be reconfigured to the same canonical model.  We denote $V(K_m) = \{v_1, \ldots, v_m\}$ and $V(K_\ell) = \{u_1, \ldots, u_\ell\}$, and the canonical $K_\ell$-model $g$ such that for each $i \in \{1,\ldots,\ell\}$, $g(v_{i}) = u_i$, and for each $i \in \{\ell + 1,\ldots,m\}$, $g(v_{i}) =u_\ell$.

 We now consider an arbitrary $K_{\ell}$-model $f$, and show that it can be reconfigured to the canonical model.  We first relabel each vertex in  $\{v_{\ell}+1,\ldots,v_{m}\}$ with the label $u_\ell$.  Since $K_m$ is a complete graph, the only condition in which a vertex cannot be relabeled is if it is the only vertex in its branch set.  Suppose there is such a vertex $v_i$, $i \in \{\ell+1, \ldots,m\}$.  In this case, there must be two vertices $v_j$ and $v_k$ with the same label, one of which can be relabeled to $f(v_i)$.  Then $v_i$ can be relabeled, as desired.  By repeating this process, we ensure that for all $i \in \{\ell+1, \ldots, m\}$, $v_i$ has the label $u_\ell$.

  To complete the relabeling, it suffices to show that we can swap the labels of any vertices $v_i$ and $v_j$ for $i,j \in \{1, \ldots, \ell\}$, ensuring that by the end of all the swaps, $v_i$ has the label $u_i$.  This can easily be achieved by choosing any vertex $v_k$, $k \in \{\ell+1, \ldots, m\}$ and executing the following sequence of relabelings, starting with model $f$: $v_k$ is relabeled $f(v_i)$, $v_i$ is relabeled $f(v_j)$, $v_j$ is relabeled $f(v_i)$, and then $v_k$ is relabeled $u_\ell$.  At each step, there is at least one vertex with each label, as required.  
\end{proof}

As essential edges can result from either the splitting of vertices or the
adding of edges, Lemma~\ref{lemma-xyess} plays a crucial role in the proofs of both Lemma~\ref{lemma-split-4} and Lemma~\ref{lemma-addedge-4}.  The proof of Lemma~\ref{lemma-xyess} makes extensive use of Lemmas~\ref{lemma-all-weak}, \ref{lemma-four-weak}, and \ref{lemma-minislurp} in covering all possible cases of weak connections among branch sets, where branch sets of size one are handled separately.

\begin{lemma}\label{lemma-four-weak}
  Given a $4$-connected graph $G$ and a $K_4$-model $f$ of $G$ such that for labels $a$, $b$, $c$, $d$ there exist weak connections between branch sets with labels $a$ and $b$, $b$ and $c$, $c$ and $d$, and $d$ and $a$, then it is not possible to designate lynchpins such that the branch set with label $a$ contains at loeast one vertex $x$ that is not a lynchpin and the branch set with label $d$ contains at least one vertex $y$ that is not a lynchpin.
\end{lemma}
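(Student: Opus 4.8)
The plan is to argue by contradiction, following the template of the proof of Lemma~\ref{lemma-all-weak}. Suppose lynchpins of the four weak connections have been designated so that $G(f,a)$ contains a vertex $x$ that is not a lynchpin and $G(f,d)$ contains a vertex $y$ that is not a lynchpin. I will produce a cut set of size at most three that separates $x$ from $y$, contradicting the $4$-connectivity of $G$.

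The first step is to reduce to the case in which the only weak connections among the four branch sets are the four named in the hypothesis, that is, neither the pair $\{a,c\}$ nor the pair $\{b,d\}$ is weakly connected. Indeed, if $G(f,a)$ and $G(f,c)$ are also weakly connected, then $G(f,a)$ has weak connections to all three other branch sets; since $x\in G(f,a)$ and $y$ are both non-lynchpins, Lemma~\ref{lemma-all-weak} applied with $B=G(f,a)$ already yields a contradiction. By the symmetry of the weak-connection $4$-cycle, the same argument with $B=G(f,d)$ disposes of the case in which $G(f,b)$ and $G(f,d)$ are weakly connected.

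So assume neither diagonal pair is weakly connected, and let $\ell_{ab},\ell_{bc},\ell_{cd},\ell_{da}$ be the designated lynchpins. Because the pairs $\{a,c\}$ and $\{b,d\}$ are not weak, the subgraphs induced by $G(f,a)\cup G(f,c)$ and by $G(f,b)\cup G(f,d)$ are each connected, and every edge running between these two subgraphs is incident to one of the four lynchpins; hence any path from $x$ to $y$ must pass through at least one of these lynchpins. To sharpen this separator from four vertices down to three, I would choose $x$ to be an interior vertex of a leaf block $L$ of $G(f,a)$, and $y$ an interior vertex of a leaf block of $G(f,d)$; such choices are possible because, by Lemma~\ref{lemma-leafblock}, $L$ has three interior endpoints of connecting edges, of which at most two (namely $\ell_{ab}$ and $\ell_{da}$) can be lynchpins. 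Then the joining vertex of $L$, together with a suitable pair drawn from the four lynchpins (the choice of pair depending on which of the two incident branch sets each lynchpin lies in), should form a cut set of size three separating $x$ from $y$; branch sets that are $2$-connected, or that contain fewer than three vertices, must be handled by separate arguments.

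The main obstacle is precisely this last step. The obvious separator, consisting of all four lynchpins, is one vertex too large to contradict $4$-connectivity, so one must work out, case by case according to the locations of $\ell_{ab},\ell_{bc},\ell_{cd},\ell_{da}$ (and the choices available when a weak connection is realized by an essential edge), how to trade one lynchpin for a leaf-block joining vertex while still separating $x$ from $y$. Checking that each reduced separator genuinely disconnects $x$ from $y$ is the technical heart of the proof.
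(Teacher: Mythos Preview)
Your approach diverges substantially from the paper's, which is a one-paragraph direct argument: it simply enumerates the orders in which a path from $x$ to $y$ can traverse the four branch sets ($a,d$; $a,b,d$; $a,c,d$; $a,b,c,d$) and asserts that in each case the path meets one of the three lynchpins $\ell_{da}$, $\ell_{ab}$, $\ell_{cd}$, giving a cut set of size three. There is no preliminary reduction to ``no diagonal weak connections'' and no leaf-block machinery at all. Your instinct that the obvious separator has four vertices rather than three is exactly the order $a,c,b,d$ that the paper's enumeration does not list; whether one regards the paper's argument as complete on this point is a separate matter, but that is the intended route.

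Your own proposal has two concrete problems beyond being unfinished. First, the reduction via Lemma~\ref{lemma-all-weak} is not airtight: to apply that lemma with $B=G(f,a)$ you must designate a lynchpin $\ell_{ac}$ for the newly acknowledged weak connection so that $x$ remains a non-lynchpin, and if $x$ itself is the unique essential vertex for $c$ this designation is forced to be $x$. Second, and more seriously, the leaf-block trade cannot succeed in the very setting you reduce to. Once neither $\{a,c\}$ nor $\{b,d\}$ is weak, interior vertices of a leaf block $L$ of $G(f,a)$ may have arbitrarily many neighbours in $G(f,c)$, none of them controlled by any of the four lynchpins; so removing the joining vertex of $L$ together with any two of $\ell_{ab},\ell_{bc},\ell_{cd},\ell_{da}$ still leaves a path that exits $L$ into $G(f,c)$, crosses to $G(f,b)$ through whichever of $\ell_{bc}$ you omitted or via an uncontrolled $b$--$d$ edge on the other side, and reaches $y$. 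No pair of the four lynchpins blocks both the $a$--$c$ and the $b$--$d$ crossings simultaneously, so the case analysis you anticipate cannot rescue this particular three-vertex separator.
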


\begin{proof}
  As in the proof of Lemma~\ref{lemma-all-weak}, we demonstrate that if it is possible to designate lynchpins in the way described in the statement of the lemma, then we violate the $4$-connectivity of $G$ by finding a cut set of size at most three that separates $x$ and $y$.

 Each path between $x$ and $y$ must pass through the lynchpin in the weak connection between $G(f,a)$ and $G(f,d)$, the lynchpin in the weak connection between $G(f,a)$ and $G(f,b)$ (if the path goes through the branch sets with labels $a$, $b$, and $d$ or branch sets with labels $a$, $b$, $c$, $d$, in those orders), the lynchpin in the weak connection between $G(f,b)$ and $G(f,c)$ (if the path goes through branch sets with labels $a$, $b$, $c$, and $d$, in that order), or the lynchpin in the weak connection between $G(f,c)$ and $G(f,d)$ (if the path goes through the branch sets with labels $a$, $c$, and $d$, or branch sets with labels $a$, $b$, $c$, $d$, in those orders).  In each case one of the at most three lynchpins is on the path, forming a cut set of size at most three separating $x$ and $y$. 
\end{proof}

\begin{lemma}\label{lemma-minislurp}
Given a $3$-connected graph $G$ and an $K_4$-model $f$ of $G$, suppose $xy$ is an essential edge and there exists an essential edge $e$ from $G(f,f(y))$ to $G(f,f(z))$, $z \ne x$, such that $y$ is not the endpoint of $e$.  
Then it is possible to reconfigure $f$ to a model $g$ in which $g(x) = g(y) = f(x)$, and for all $a \ne f(y)$, for $v \in G(f,a)$, $g(v) = f(v)$.
\end{lemma}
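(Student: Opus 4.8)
The plan is to obtain $g$ by relabeling the single vertex $y$ from $f(y)$ to $f(x)$, after first siphoning vertices out of the branch set $G(f,f(y))$ in the case where $y$ is a cut vertex of it. Assume without loss of generality that $f(x)=1$, $f(y)=2$ and $f(z)=3$, and let $4$ denote the remaining label. Since $xy$ is an essential edge, $y$ is the only vertex of $G(f,2)$ with a neighbour in $G(f,1)$, so $y$ is essential for $1$; writing $y'$ for the endpoint of $e$ lying in $G(f,2)$ (so $y'\neq y$ by hypothesis), $y'$ is essential for $3$, and hence $y$ is not essential for $3$. Note also that $z\neq x$ forces $f(z)\neq f(x)$, since otherwise $e$ would be the unique essential edge $xy$ between $G(f,1)$ and $G(f,2)$, contradicting the hypothesis that $y$ is not an endpoint of $e$.

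The heart of the argument is to show that $y$ is also not essential for $4$. Suppose it were. Then $y$, whose neighbours outside $G(f,2)$ all lie in $G(f,1)\cup G(f,4)$, and $y'$, whose neighbours outside $G(f,2)$ all lie in $G(f,3)$, are the only vertices of $G(f,2)$ having a neighbour outside $G(f,2)$. If $|G(f,2)|\geq 3$, then any vertex of $G(f,2)$ other than $y$ and $y'$ has all of its neighbours in $G(f,2)$ and is therefore separated from $x$ by the cut set $\{y,y'\}$ of size two, contradicting the $3$-connectivity of $G$; hence $G(f,2)=\{y,y'\}$ with $yy'\in E(G)$. But then $y'$ has exactly two neighbours, namely $y$ and the unique endpoint of the essential edge $e$ in $G(f,3)$, again contradicting $3$-connectivity. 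So $y$ is essential only for $1$; in particular $y$ is not a $1$-crucial vertex, $|G(f,2)|\geq 2$, and $x$ is a neighbour of $y$ in $G(f,1)$. Consequently, if $y$ is not a cut vertex of $G(f,2)$, then all four conditions of Lemma~\ref{lemma-conds} hold for relabeling $y$ to $1$, and this single reconfiguration step produces the required $g$.

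It remains to handle the case where $y$ is a cut vertex of $G(f,2)$; let $C_1$ be the component of $G(f,2)\setminus\{y\}$ containing $y'$. For any other component $C$, every leaf block of $T(G,f,2,C)$ is a leaf block of $T(G,f,2)$, so by Lemma~\ref{lemma-leafblock} it has two interior vertices that are endpoints of connecting edges, and since these lie in $C$ they are distinct from $y$ and from $y'$, whence their neighbours outside $G(f,2)$ all lie in $G(f,4)$. Arguing as in the proof of Lemma~\ref{lemma-slurp-siphon}, $f$ does not hit a leaf-crucial model or a leaf-$4$-crucial model on relabeling $C$: throughout any relabeling of $C$, a vertex of $C$ still labeled $2$ has neighbours outside $G(f,2)$ only in $G(f,4)$ and so is neither crucial nor $4$-crucial, $y'$ is essential only for $3$ and has no neighbour in $G(f,4)$, and $y$ remains a joining vertex while $C$ is being processed. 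Hence, by Lemma~\ref{lemma-slurp-component} applied with labels $2$ and $4$ and cut vertex $y$, we can relabel every vertex of $C$ with label $4$. Carrying this out for each component of $G(f,2)\setminus\{y\}$ other than $C_1$ reduces the branch set for label $2$ to $\{y\}\cup C_1$, in which $y$ is no longer a cut vertex; since $C_1$ still contains, by Lemma~\ref{lemma-leafblock}, a vertex other than $y'$ with a neighbour in $G(f,4)$, the vertex $y$ is still not essential for $4$ and hence not $1$-crucial, so Lemma~\ref{lemma-conds} again lets us relabel $y$ to $1$. The resulting model $g$ satisfies $g(y)=g(x)=1=f(x)$ and differs from $f$ only on vertices of $G(f,2)$, as required.

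I expect the structural claim that $y$ is not essential for $4$ to be the main obstacle: it is precisely what makes a single-label move of $y$ into $G(f,1)$ possible, and without it the conclusion is actually false, since (as the claim's proof shows) an essential vertex $y$ for $4$ would force $G(f,2)=\{y,y'\}$, after which no change of a single label can place $y$ in $G(f,1)$ while keeping the branch set for label $2$ joined to $G(f,4)$. The remaining, more routine, difficulty is verifying the hypotheses about not hitting leaf-crucial or leaf-$\ell$-crucial models needed to invoke Lemma~\ref{lemma-slurp-component}; this is done exactly as in the proof of Lemma~\ref{lemma-slurp-siphon}.
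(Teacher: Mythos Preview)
Your approach is essentially the paper's: show that $y$ is not essential for the fourth label (hence not $1$-crucial), then either relabel $y$ in one step or, if $y$ is a cut vertex of $G(f,2)$, first siphon away the components of $G(f,2)\setminus\{y\}$ not containing $y'$ using Lemma~\ref{lemma-slurp-component}, and then relabel $y$. Your argument that $y$ is not essential for $4$ (splitting into $|G(f,2)|\geq 3$ versus $|G(f,2)|=2$) is correct; the paper gets both sub-cases at once by observing that $\{y,z'\}$, with $z'$ the endpoint of $e$ in $G(f,3)$, is a $2$-cut separating $y'$ from $x$.

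One small slip: the assertion that $y'$ ``has no neighbour in $G(f,4)$'' is unjustified---nothing in the hypotheses rules this out. If $y'$ does have such a neighbour, then $y'$ is $4$-crucial (being essential for $3$), and if $y'$ happens to be an interior vertex of a leaf block of $G(f,2)$, the model is already leaf-$4$-crucial, so the literal hypothesis of Lemma~\ref{lemma-slurp-component} fails. This does not actually break the argument: the proof of Lemma~\ref{lemma-slurp-component} only tests cruciality of vertices inside the component being relabeled, and $y'$ lies in $C_1$, which is untouched. The paper's own proof glosses over this verification (it simply asserts ``as $C$ has all necessary connecting edges, we can apply Lemma~\ref{lemma-slurp-component}''), so your treatment is at least as careful as the original; just drop the incorrect claim about $y'$ and note instead that the only vertices whose cruciality matters lie in the component being siphoned.
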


\begin{proof}
  Without loss of generality, we let $f(x)=a$, $f(y)=b$, and $f(z) = c$, so that there is an essential edge $\mbox{ess}(b,c)$. We consider two cases, depending on whether or not $y$ is a cut vertex.
  
  When $y$ is not a cut vertex, we verify that all conditions of Lemma~\ref{lemma-conds} hold for relabeling $y$ to $a$: condition~\ref{cond-nonempty} follows as $G(f,b)$ contains at least $y$ and an endpoint $v$ of $\mbox{ess}(b,c)$, condition~\ref{cond-notcut} follows by assumption, condition~\ref{cond-nbr} follows from the existence of $xy$, and condition~\ref{cond-edges} follows from the observation that if $y$ is an $a$-crucial vertex, then it must be essential for $d$, which implies $\{y,z\}$ is a $2$-cut in $G$, contradicting the fact that $G$ is 3-connected.

  If instead $y$ is a cut vertex, by removing $y$ we can break $T(G,f,b)$ into components such that one of the components $C$ contains the endpoint of $\mbox{ess}(b,c)$.   By Lemma~\ref{lemma-leafblock}, each leaf block of $C$ must contain at least two vertices with neighbours in other branch sets, and hence $C$ must contain an edge with a neighbour in $G(f,d)$.  As $C$ has all necessary connecting edges, we can apply Lemma~\ref{lemma-slurp-component} to siphon away the vertices in every other component $C' \neq C$.  Consequently, $y$ will no longer be a cut vertex, we can then relabel $y$ to $a$, as needed.  
\end{proof}

\begin{lemma}\label{lemma-xyess}
Suppose $G$ is a $4$-connected graph such that $G \in \mbox{host}(K_{4})$, $|V(G)| \ge 5$, and $xy$ is an essential edge under the $K_4$-model $f$.  Then it is possible to reconfigure $G$ to a $K_4$-model in which $x$ and $y$ have the same label.
\end{lemma}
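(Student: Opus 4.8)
The plan is to avoid constructing a reconfiguration sequence directly. Since $G \in \mbox{host}(K_4)$, the reconfiguration graph for $G$ and $K_4$ is connected, so it suffices to exhibit a \emph{single} $K_4$-model $g$ of $G$ in which $x$ and $y$ carry the same label; then $f$ is reconfigurable to $g$ and the lemma follows. To build such a $g$, I would pass to the contraction $G/xy$, in which $x$ and $y$ are identified into one vertex $z$ (discarding the resulting loop and any parallel edges), find a $K_4$-model of $G/xy$, and lift it back to $G$.

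In more detail, the steps are as follows. First, show that $G/xy$ is $3$-connected. Given a cut set $S$ of $G/xy$ with $|S|\le 2$, if $z\notin S$ then $(G\setminus S)/xy = (G/xy)\setminus S$ is disconnected; since $xy$ is still an edge of $G\setminus S$ and contracting an edge of a connected graph keeps it connected, $G\setminus S$ is disconnected, so $S$ is a cut set of $G$ of size at most $2$. If $z\in S$ then $(G/xy)\setminus S = G\setminus(\{x,y\}\cup(S\setminus\{z\}))$ is disconnected, so $\{x,y\}\cup(S\setminus\{z\})$ is a cut set of $G$ of size at most $3$. Either way $|V(G)|\ge 5$ ensures the two sides of the separation are nonempty, contradicting the $4$-connectivity of $G$. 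Second, since $|V(G/xy)| = |V(G)|-1 \ge 4$ and every $3$-connected graph on at least four vertices has a $K_4$ minor (a classical fact; it also follows from Theorem~\ref{thm-tutte-3conn}, as wheels have $K_4$ minors and edge-additions and vertex-splittings preserve this), fix a $K_4$-model $h$ of $G/xy$ with $z$ in the branch set labeled, say, $1$. Third, define $g$ on $V(G)$ by $g(v)=h(v)$ for $v\notin\{x,y\}$ and $g(x)=g(y)=1$. Then $G(g,1) = (G(h,1)\setminus\{z\})\cup\{x,y\}$ contracts along the edge $xy$ to the connected graph $G(h,1)$ and hence is connected; the other three branch sets are unchanged and still connected; and every connecting edge required by $h$ either avoids $z$ (so it still joins the corresponding branch sets of $g$ in $G$) or is incident to $z$ (so it was contracted from an edge at $x$ or at $y$, which now joins the corresponding branch sets of $g$). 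Thus $g$ is a $K_4$-model of $G$ with $g(x)=g(y)$, as required.

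The only delicate step, I expect, is the $3$-connectivity of $G/xy$: the case $z\in S$, where the lifted cut set can have size three (still below four), and the appeal to $|V(G)|\ge 5$ to guarantee that each separation has nonempty sides, both need a little care; everything else is routine checking. For completeness I note the alternative of proving the lemma by explicit reconfiguration, without invoking $G\in\mbox{host}(K_4)$: one attempts to relabel $y$ to $f(x)$ or $x$ to $f(y)$ via Lemma~\ref{lemma-conds}, the obstructions being a singleton or cut-vertex branch set (handled by siphoning components away with repeated applications of Lemma~\ref{lemma-slurp-component}) or one of $x,y$ being an essential vertex for a third label; the latter cases are organized by which pairs of branch sets are weakly connected, with Lemmas~\ref{lemma-all-weak} and~\ref{lemma-four-weak} excluding the weak-connection patterns that would violate $4$-connectivity and Lemma~\ref{lemma-minislurp} handling the cases that carry a second essential edge, with branch sets of size one treated separately using $|V(G)|\ge 5$.
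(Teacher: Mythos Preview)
Your argument is correct and vastly simpler than the paper's: you observe that the hypothesis $G\in\mbox{host}(K_4)$ makes the reconfiguration graph connected, so it suffices merely to \emph{exhibit} some $K_4$-model $g$ with $g(x)=g(y)$, which you obtain by contracting $xy$, finding a $K_4$-model of the resulting $3$-connected graph, and lifting. All three steps check out.

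The paper, by contrast, never invokes the hypothesis $G\in\mbox{host}(K_4)$; its proof is a long explicit case analysis (eight main cases with subcases, organized by branch-set sizes and patterns of weak connections, using Lemmas~\ref{lemma-all-weak}, \ref{lemma-four-weak}, and~\ref{lemma-minislurp}) that produces an actual reconfiguration sequence from $f$ to a model with $x$ and $y$ equi-labeled, relying only on $4$-connectivity. This is precisely the ``alternative'' you sketch in your final paragraph.

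The distinction is not cosmetic. Lemma~\ref{lemma-xyess} is invoked inside the proofs of Lemmas~\ref{lemma-split-4} and~\ref{lemma-addedge-4} on the graph $G'$---the graph whose membership in $\mbox{host}(K_4)$ is exactly what those lemmas are trying to establish. So in the intended applications the hypothesis ``$G\in\mbox{host}(K_4)$'' (with $G'$ playing the role of $G$) is not available, and the paper's proof must, and does, avoid it. Your shortcut, while a valid proof of the lemma as literally stated, would make the inductive proof of Theorem~\ref{thm-4conn} circular. In short: the stated hypothesis is a red herring; the content the paper actually needs is the explicit-reconfiguration version you outline at the end, and that is what the paper supplies in full.
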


\begin{proof}

  Without loss of generality, we assume that $f(x) = a$, $f(y) = b$, and that the two remaining labels are $c$ and $d$.    As the relabeling of $x$ or $y$ can be achieved in a single step if Lemma~\ref{lemma-conds} holds, in the remainder of the proof we assume instead that the lemma holds for neither $x$ nor $y$.  Similarly, since the result follows immediately from Lemma~\ref{lemma-minislurp}, we consider only cases in which the lemma does not apply.  That is, we assume that there is no essential edge with an endpoint that is either in $G(f,a) \setminus \{x\}$ or in $G(f,b) \setminus \{y\}$.  

  We consider all possibilities for the sizes of branch sets and weak connections among them.  Due to symmetry between $a$ and $b$ and symmetry between $c$ and $d$,  the cases listed handle all possible situations.  Cases 1 and 2 handle the situations in which $|G(f,a)| = |G(f,b)| = 1$, Cases 3--6 consider situations in which at most one of $G(f,a)$ and $G(f,b)$ can consist of a single vertex, and the remaining cases consider situations in which $G(f,a)$ and $G(f,b)$ each consist of more than a single vertex.

  \smallskip
  \noindent {\bf Case 1}: $|G(f,a)| = |G(f,b)| = |G(f,c)| = 1$
  \smallskip

  We use $z$ to denote the vertex in $G(f,c)$ and observe that by $4$-connectivity, each of $x$, $y$, and $z$ have at least two neighbours in $G(f,d)$. In order to relabel $y$ to $a$, we need to first fill $G(f,b)$ with vertices from $G(f,d)$ such that at least one has a neighbour in $G(f,c)$.  We omit the symmetrical case in which we could fill $G(f,b)$ with vertices from $G(f,d)$ in order to relabel $x$ to $b$.

  \smallskip
    \noindent{\bf Case 1a}: $y$ has a neighbour $w$ in $G(f,d)$ such that $w$ is a cut vertex of $G(f,d)$
  \smallskip

We observe that as Lemma~\ref{lemma-two-weak}, point one holds for each pair of branch sets in $G(f,a)$, $G(f,b)$, and $G(f,c)$, each leaf block in $G(f,d)$ contains interior vertices that are endpoints of connecting edges to each of $G(f,a)$, $G(f,b)$, and $G(f,c)$. We can then select a component $C$ of $G(f,d) \setminus \{w\}$ to retain in $G(f,d)$, using Lemma~\ref{lemma-slurp-component} to siphon away all other components.  Now $w$ is no longer a cut vertex; it can be relabeled to $b$ and $y$ to $a$.

  \smallskip
  \noindent{\bf Case 1b}: $y$ has no neighbour in $G(f,d)$ that is a cut vertex of $G(f,d)$
  \smallskip

  If $y$ has a neighbour $w$ in $G(f,d)$ such that $w$ is a neighbour of a vertex in $G(f,c)$, then we can relabel $w$ to $b$ in a single step using Lemma~\ref{lemma-conds}, and in another step we can then relabel $y$ to $a$.
  
   Otherwise, because Lemma~\ref{lemma-conds} applies to the relabeling of neighbour of $y$ to $b$, we can iteratively relabel a neighbour, a neighbour of a neighbour, and so on until we encounter either a vertex that is a neighbour of a vertex in  $G(f,c)$, which can be relabeled to $b$ using Lemma~\ref{lemma-conds}, or a cut vertex, to which Case 1a applies.

   \smallskip
   \noindent {\bf Case 2}: $|G(f,a)| = |G(f,b)| = 1$, $|G(f,c)| > 1$, and $|G(f,d)| > 1$.  
   \smallskip

   We first show that there cannot be a weak connection between $|G(f,c)|$ and $|G(f,d)|$.
  By Lemma~\ref{lemma-four-weak}, we know it is not possible to designate lynchpins such that both $G(f,c)$ and $G(f,d)$ contain vertices that are not lynchpins of weak connections between $G(f,b)$ and $G(f,c)$, $G(f,c)$ and $G(f,d)$, and $G(f,d)$ and $G(f,a)$.  Since at most three vertices in $G(f,c)$ and $G(f,d)$ can be lynchpins of these three weak connections, the total number of vertices in the two branch sets is at most three.  This implies that either $G(f,c)$ or $G(f,d)$ has size one, contradicting the assumptions for this case.

  We first fill $G(f,b)$ with vertices from $G(f,d)$; if we can guarantee that one such vertex has a neighbour in $G(f,c)$, then $y$ will no longer be essential for $G(f,c)$ or $G(f,d)$, and can be relabeled to $a$.

  \smallskip
  \noindent{\bf Case 2a}: $y$ has a neighbour $w$ in $G(f,d)$ such that $w$ is a cut vertex of $G(f,d)$
  \smallskip

  If a neighbour $w$ of $y$ in $G(f,d)$ is a cut vertex, we observe that each leaf block in $G(f,d)$ has at least one interior vertex that has neighbour in $G(f,c)$, as otherwise the joining vertex of the leaf block, $x$, and $y$ form a cut set of size three separating the interior vertices of the leaf block and vertices in $G(f,c)$.  Consequently, we can find a component $C$ of $G(f,d) \setminus \{w\}$ that contains a neighbour of $x$.  

We can then conclude that $f$ does not hit a leaf-crucial model or a leaf-$b$-crucial model on relabeling any component $C' \neq C$ in $G(f,d) \setminus \{w\}$, and hence by Lemma~\ref{lemma-slurp-component}, we can relabel all vertices in $C'$ to labels other than $d$ such that $w$ has a neighbour labeled $c$. Thus siphoning away all components other than $C$ allows $w$ to be relabeled to $b$.  The presence of $w$ in $G(f,b)$ ensures that $y$ is not a cut vertex, and can now be relabeled to $a$.

  \smallskip
  \noindent{\bf Case 2b}: $y$ has no neighbour in $G(f,d)$ that is a cut vertex of $G(f,d)$
  \smallskip

As in Case 1b, if $y$ has a neighbour $w$ in $G(f,d)$ such that $w$ is a neighbour of a vertex in $G(f,c)$, we can relabel $w$ to $b$ in a single step using Lemma~\ref{lemma-conds}, and in another step we can then relabel $y$ to $a$.
  
Otherwise, we use Lemma~\ref{lemma-conds} for the relabeling of neighbour of $y$ to $b$; we can iteratively relabel a neighbour, a neighbour of a neighbour, and so on until we encounter either a vertex that is a neighbour of a vertex in  $G(f,c)$, so that either a neighbour of a vertex in $G(f,c)$ is labeled by $b$ or
Case 2a applies.

  \smallskip
  \noindent {\bf Case 3}: $|G(f,a)| = 1$,  $|G(f,b)| > 1$, and $|G(f,c)| = 1$
  \smallskip

  We first observe that if $|G(f,d)| = 1$, then by $4$-connectivity, each of $G(f,a)$, $G(f,c)$, and $G(f,d)$ have at least two neighbours in $G(f,b)$.  Thus, $y$ is not essential for any branch set, allowing us to achieve the necessary relabeling using either Lemma~\ref{lemma-conds}, if $y$ is a cut vertex in $G(f,b)$, or Lemma~\ref{lemma-slurp-component} otherwise.

  We now assume that $|G(f,d)| > 1$, and observe that as a consequence, there cannot be a weak connection between $G(f,b)$ and $G(f,d)$: by Lemma~\ref{lemma-all-weak}, there will then be weak connections between $G(f,b)$ and all other branch sets, but it will be possible to designate non-lynchpins in both $G(f,b)$ and $G(f,d)$.

 The cases below cover all possibilities in which Lemma~\ref{lemma-conds} does not apply (as otherwise we could relabel $y$ to $a$ in a single step).

    \smallskip
  \noindent {\bf Case 3a}: $y$ is essential for $c$
  \smallskip

  We show that we can fill $G(f,b)$ with vertices that include a neighbour of a vertex in $c$ so that $y$ is no longer essential for $c$.

  By $4$-connectivity, each of $G(f,a)$, $G(f,b)$, and $G(f,c)$ have at least two neighbours in $G(f,d)$.  Each leaf block in $G(f,d)$ must have an interior vertex that is the neighbour of a vertex in $G(f,b)$ (Lemma~\ref{lemma-two-weak} point one) as well as an interior vertex that is the neighbour of a vertex in $G(f,c)$ (as otherwise the joining vertex of the leaf block, $x$, and $y$ form a cut set of size three separating interior vertices in the leaf block and $G(f,c)$).

  We consider two cases, depending on whether or not a vertex in $G(f,b)$ has a neighbour in $G(f,d)$ that is a cut vertex for $G(f,d)$.

  If a vertex in $G(f,b)$ has a neighbour $w$ in $G(f,d)$ that is a cut vertex, since each leaf block in $G(f,b)$ contains neighbours in $G(f,c)$ and $G(f,d)$, we can choose a component $C \in G(f,d) \setminus \{w\}$ to retain, using Lemma~\ref{lemma-slurp-component} to remove all others.  This ensures that $w$ has neighbours in $G(f,c)$ and $G(f,d)$ and is no longer a cut vertex; it can then be relabeled to $b$, so that $y$ is no longer essential for $c$.
  
  If no vertex in $G(f,b)$ has a neighbour in $G(f,d)$ that is a cut vertex, then as in Cases 1b and 2b, vertices can be relabeled iteratively until either a vertex with a neighbour in $G(f,c)$ is labeled $b$, or it reduces to the case above.

  Now $y$ is no longer essential for $c$, either resulting in Case 3b or permitting $y$ to be labeled $a$.
    
  \smallskip
  \noindent {\bf Case 3b}: $y$ is a cut vertex
  \smallskip

  By Lemma~\ref{lemma-two-weak}, point one, each leaf block in $G(f,b)$ must have an 
interior vertex that is the neighbour of a vertex in $G(f,d)$.  We can then identify a component $C \in G(f,b) \setminus \{y\}$ that contains neighbours in both $G(f,c)$ and $G(f,d)$, allowing us to apply Lemma~\ref{lemma-slurp-component} to every component $C' \ne C$.  Then $y$ is no longer a cut vertex and can be relabeled $a$, as needed.

    \smallskip
  \noindent {\bf Case 4}:  $|G(f,b)| > 1$ and $G(f,b)$ does not contain an essential vertex for either $c$ or $d$

  \smallskip

  In this case, the only condition of Lemma~\ref{lemma-conds} that can be violated for the relabeling of $y$ to $a$ is condition~\ref{cond-notcut}, so we assume that $y$ is a cut vertex.

If there is a component of $G(f,b) \setminus \{y\}$ with connecting edges to $G(f,c)$ and $G(f,d)$, then by Lemma~\ref{lemma-slurp-component}, all the other components can be siphoned away, allowing $y$ to be relabeled to $a$.  Otherwise, each component in $G(f,y) \setminus \{b\}$ has either connecting edges to $G(f,c)$ or connecting edges to $G(f,d)$.  We can use Lemma~\ref{lemma-slurp-component} to siphon away all components one at a time, leaving $y$ as the only vertex in $G(f,b)$.  The case has now been reduced to Case 1 or 2 (if $|G(f,a)| = 1$) or, with the roles of $a$ and $b$ reversed, Case 3, 4, 5, or 6 (if $|G(f,a)| > 1$), where the second use of Case 4 cannot again result in Case 4, as at that point the branch sets for $a$ and $b$ will both be of size one.

  \smallskip
  \noindent {\bf Case 5}: $|G(f,a)| = 1$, $|G(f,b)| > 1$, $|G(f,c)| > 1$, $|G(f,d)| > 1$, and $G(f,b)$ has weak connections to $G(f,c)$ and $G(f,d)$
  \smallskip

By applying Lemma~\ref{lemma-all-weak} to $G(f,b)$, we observe that since $x$ can be designated the non-lynchpin of the weak connection between $G(f,b)$ and $G(f,a)$, each vertex in $G(f,b)$ must be essential for $c$, $d$, or both.  Thus, $G(f,b)$ contains at most two vertices; since by assumption $|G(f,b)| > 1$,
  $|G(f,b)| = 2$.

  Without loss of generality, $y$ is essential for $d$ and that $y' \in G(f,b) \setminus \{y\}$ is essential for $c$.  Since we have assumed that Lemma~\ref{lemma-minislurp} does not apply, $G(f,b) \setminus \{y\}$ does not contain the endpoint of any essential edge, and hence $y'$ has at least two neighbours in $G(f,c)$.

 We first show that it is not possible for there to be a weak connection between $G(f,c)$ and $G(f,d)$.  Suppose instead that there were such a weak connection, and hence weak connections between $G(f,a)$ and $G(f,b)$, $G(f,b)$ and $G(f,c)$, $G(f,c)$ and $G(f,d)$, and $G(f,d)$ and $G(f,a)$.  Because $G(f,c)$ contains at least two vertices, only one of which can be designated a lynchpin for the weak connection with $G(f,d)$, $G(f,c)$ contains a non-lynchpin.  But by designating $x$ as the lynchpin for the weak connection between $G(f,a)$ and $G(f,b)$, $y$ is a non-lynchpin in $G(f,b)$.  Hence, by Lemma~\ref{lemma-four-weak}, there cannot be a weak connection between $G(f,c)$ and $G(f,d)$.

  We now assume that there is no weak connection between $G(f,c)$ and $G(f,d)$, and show that we can fill $G(f,b)$ with vertices from $G(f,c)$ with neighbours in $G(f,d)$, allowing us to relabel $y$ by $a$.  The following two cases cover all possibilities:

    \smallskip
  \noindent{\bf Case 5a}: $y'$ has a neighbour $w$ in $G(f,c)$ such that $w$ is a cut vertex of $G(f,c)$
  \smallskip

  By $4$-connectivity, each leaf block in $G(f,c)$ must contain an interior vertex with a neighbour in $G(f,d)$, as otherwise there is a cut set of size three (the joining vertex of the leaf block, $x$, and $y$) separating the interior vertices of the leaf block from the vertices in $G(f,d)$.

  We can then choose a component $C$ of $G(f,c) \setminus \{w\}$ that contains a neighbour of $x$ and a neighbour in $G(f,d)$, and use Lemma~\ref{lemma-slurp-component} to siphon away each other component $C' \ne C$.  After this process, $w$ will have a neighbour in $G(f,d)$ and no longer be a cut vertex.  It can then be relabeled $b$, as needed.

    \smallskip
  \noindent{\bf Case 5b}: $y'$ has no neighbour in $G(f,c)$  that is a cut vertex of $G(f,c)$
  \smallskip

  As in Cases 1b and 2b, if $y'$ has a neighbour $w$ in $G(f,c)$ such that $w$ is a neighbour of a vertex in $G(f,d)$, we can relabel $w$ to $b$ in a single step using Lemma~\ref{lemma-conds}.  Otherwise, we use Lemma~\ref{lemma-conds} repeatedly until we either find such a vertex $w$ as a neighbour of a vertex in $G(f,b)$ or we encounter a cut vertex, to which Case 5a applies.

    \smallskip
    \noindent {\bf Case 6}: $|G(f,b)| > 1$, $|G(f,c)| > 1$, $|G(f,d)| > 1$, and
    there are weak connections between $G(f,a)$ and $G(f,c)$ and between $G(f,b)$ and $G(f,c)$
    \smallskip

    Since Lemma~\ref{lemma-conds} does not hold for the relabeling of $y$ by $a$, $y$ is an essential vertex for $c$, a cut vertex in $G(f,b)$, or both.

    If $y$ is an essential vertex for $c$, we fill $G(f,b)$ with vertices from $G(f,d)$ in order that $y$ is no longer essential for $c$.  By Lemma~\ref{lemma-two-weak}, point two, each leaf block in $G(f,d)$ has interior vertices with neighbours in all three other branch sets.  If a vertex in $G(f,b)$ has a neighbour $w$ in $G(f,d)$ that is a cut vertex, we simply choose a component $C$ of $G(f,d) \setminus \{w\}$ to retain and then siphon away all other components into $G(f,b)$, allowing us to relabel $w$ (which now has neighbours in all other branch sets) to $b$.  Otherwise, we use the process from Cases 1b, 2b, and 5b to relabel vertices in $G(f,d)$ by $b$ until we either have labeled vertices with the required neighbours or we find a cut vertex $w$. We now have a $K_4$-model $g$ in which $y$ is not an essential vertex for $c$.

      Now $y$ is no longer essential for $c$, but it may still be a cut vertex in $G(g,b)$.  If there is a component $C$ of $G(g,b) \setminus \{y\}$ with neighbours in both $G(g,c)$ and $G(g,d)$, we can siphon away all the components $C' \ne C$, and then relabel $y$ with $a$, as needed.  Otherwise, we can siphon away each component in turn, resulting in $y$ being the only vertex with label $b$, and thus reducing to Case 1 or 2 (if $|G(g,a)| = 1$) or Case 3, 4, 5, or 6 (if $|G(g,a)| > 1$), where the second use of Case 6 cannot again result in Case 6, as at that point the branch sets for $a$ and $b$ will both be of size one.   This completes Case 6. 

      In the remainder of the proof, we assume that $|G(f,a)| > 1$ and $|G(f,b)| >1$, and ensure that we consider all possible cases.  We observe that it is not possible to have $|G(f,a)| > 1$, $|G(f,b)| > 1$, and $|G(f,c)| = |G(f,d)| = 1$, since by Lemma~\ref{lemma-four-weak} as $G(f,a)$ and $G(f,b)$ can each contain a non-lynchpin in the weak connections between branch sets with labels $a$ and $b$, $b$ and $c$, $c$ and $d$, and $d$ and $a$.

\smallskip
  \noindent {\bf Case 7}:  $|G(f,a)| > 1$, $|G(f,b)| > 1$, $|G(f,c)| > 1$, and $|G(f,d)| = 1$.

  \smallskip

  In this case, by Lemma~\ref{lemma-all-weak} there cannot be a weak connection between $G(f,a)$ and $G(f,c)$, as we can designate a non-lynchpin in $G(f,a)$ by selecting $y$ and the vertex in $G(f,d)$ as lynchpins of the weak connection between $G(f,a)$ and $G(f,b)$ and the weak connection between $G(f,a)$ and $G(f,d)$, respectively, ensuring that at least one vertex in $G(f,a)$ and at least one vertex in $G(f,c)$ is left as a non-lynchpin.  The same argument can be used to show that there no weak connection between $G(f,b)$ and $G(f,c)$.

  Since $x$ cannot be labeled $b$ in one step, it must be essential for $G(f,d)$, a cut vertex in $G(f,a)$ or both.  Similarly, since $y$ cannot be labeled $a$ in one step, it must be essential for $G(f,d)$, a cut vertex in $G(f,b)$, or both.  We consider two cases, depending on whether or not both $x$ and $y$ are essential for $G(f,d)$.

\smallskip
  \noindent {\bf Case 7a}: $x$ and $y$ are both essential for $G(f,d)$
  \smallskip

By Lemma~\ref{lemma-two-weak}, point two, each leaf block in $G(f,c)$ has interior vertices that are endpoints of connecting edges to both $G(f,a)$ and $G(f,b)$, and by $4$-connectivity, an interior vertex that is an endpoint of a connecting edge to $G(f,d)$ (as otherwise the joining vertex of the leaf block, $x$, and $y$ form a cutset of size three separating interior vertices of the leaf block and $G(f,d)$).  

Our goal is to fill $G(f,b)$ with vertices from $G(f,c)$ so that $y$ is no longer essential for $d$.  If there is a vertex $w \in G(f,b)$ that has a neighbour $z$ which is a cut vertex of $G(f,c)$, we can use Lemma~\ref{lemma-slurp-component} to siphon away all but one component of $G(f,c) \setminus \{z\}$, and then relabel $z$ to $b$.  The remaining component will have all the required connecting edges, and $z$ will have neighbours in all branch sets.  Now either $y$ can be relabeled to $a$ or, if $y$ is still a cut vertex of the branch set with label $b$, we have reduced this case to Case 7b.

If instead no vertex in $G(f,b)$ has a neighbour which is a cut vertex of $G(f,c)$, we can use the technique of Cases 1b, 2b, and 5b.

\smallskip
  \noindent {\bf Case 7b}: $y$ is not essential for $G(f,d)$
  \smallskip

If there exists a component of $G(f,b) \setminus \{y\}$ with endpoints of connecting edges to both $G(f,c)$ and $G(f,d)$, we use Lemma~\ref{lemma-slurp-component} to siphon away all other components, after which $y$ can be relabeled $a$. Otherwise, each component has connecting edges to either $G(f,c)$ or $G(f,d)$.  We can then use the same lemma to siphon away all vertices except $y$, reducing the case to Case 3, 4, 5, or 6.  This completes Case 7.

  We now assume that all branch sets are of size greater than one and consider possibilities for weak connections.  Since we can apply Case 4 with either $a$ or $b$ playing the role of $b$, we can assume that each of $G(f,a)$ and $G(f,b)$ contain an essential vertex for either $c$ or $d$, or, stated in a weaker form, that each of $G(f,a)$ and $G(f,b)$ have a weak connection to either $G(f,c)$ or $G(f,d)$.  Since Case 6 handles the case in which $G(f,a)$ and $G(f,b)$ both have weak connections to the same branch set, Case 8 suffices to complete the proof.

      \smallskip
  \noindent {\bf Case 8}: Each branch set has size greater than one, and there are weak connections between $G(f,b)$ and $G(f,c)$ and between $G(f,a)$ and $G(f,d)$
  \smallskip

We first show that there cannot be a weak connection between $G(f,c)$ and $G(f,d)$.  Since by assumption $G(f,b) \setminus \{y\}$ does not contain the endpoint of an essential edge, either $y$ is the endpoint of an essential edge $\mbox{ess}(b,c)$ or $G(f,c)$ has at least two vertices that are neighbours of vertices in $G(f,b)$.  We show that neither case can hold.

  If $y$ is the endpoint of an essential edge $\mbox{ess}(b,c)$, we can designate the endpoint in $G(f,c)$ as the lynchpin of the weak connection and $y$ as the lynchpin of $xy$.  Since $G(f,a)$ contains at least two vertices, only one of which can be the lynchpin for the weak connection between $G(f,a)$ and $G(f,d)$, $G(f,a)$ must contain a non-lynchpin.  Because $G(f,b)$ also contains a non-lynchpin (any vertex other than $y$), by Lemma~\ref{lemma-four-weak}, this case cannot hold.  

  If instead $G(f,c)$ has at least two vertices that are neighbours of vertices in $G(f,b)$, then $G(f,c)$ contains a non-lynchpin, as it can contain only one vertex that is a lynchpin for the weak connection between $G(f,c)$ and $G(f,d)$.  By making $x$ the lynchpin of $xy$, $G(f,b)$ must also contain a non-lynchpin (any vertex that is not the essential vertex for $c$).  By Lemma~\ref{lemma-four-weak}, this case cannot hold.

We can now assume that there is no weak connection between $G(f,b)$ and $G(f,d)$. Moreover, since Lemma~\ref{lemma-conds} does not hold for the relabeling of $y$ by $a$, $y$ is an essential vertex for $c$, a cut vertex in $G(f,b)$, or both. We show how to handle each possible situation.

  If $y$ is an essential vertex for $c$, we first fill $G(f,b)$ with vertices from $G(f,c)$ such that is $y$ is no longer essential for $c$.  If $y$ has a neighbour that is not a cut vertex, then we can use Lemma~\ref{lemma-conds} to label the neighbour $b$.  If instead every neighbour of $y$ is a cut vertex in $G(f,c)$, we consider one such neighbour $w$ and note that by Lemma~\ref{lemma-two-weak} point two, each leaf block in $G(f,c)$ has an interior vertex that has a neighbour in $G(f,a)$.  We can then choose a component $C$ of $G(f,c) \setminus \{w\}$ that has neighbours in both $G(f,a)$ and $G(f,d)$, and use Lemma~\ref{lemma-slurp-component} to siphon away each other component $C' \ne C$.  After this process, $w$ will no longer be a cut vertex, so it can then be relabeled $b$, forming the $K_4$-model $g$.

  If $y$ is a cut vertex in $G(g,b)$, we use the same technique as in the proof of Cases 4, 6, and 7b to reduce the case to Case 3, 4, 5, or 6.
  
\end{proof}

Lemma~\ref{lemma-split-4} follows the structure of the proof of Lemma~\ref{lemma-split}, relying on Lemma~\ref{lemma-split-samelabel} for the reconfiguring between $K_4$-models in which $x$ and $y$ have the same label and on Lemma~\ref{lemma-xyess} for the case in which $xy$ is an essential edge.  The two possible cases for Lemma~\ref{lemma-addedge-4} are $xy$ being an essential edge (handled by Lemma~\ref{lemma-xyess}) and $xy$ being a bridge in a branch set.

\begin{lemma}\label{lemma-split-4}
  Suppose $G$ is a $4$-connected graph such that $G \in \mbox{host}(K_{4})$ and $G'$ is a $4$-connected graph formed from $G$ by splitting
a vertex $v$ into vertices $x$ and $y$. Then $G'$ is in $\mbox{host}(K_{4})$.
\end{lemma}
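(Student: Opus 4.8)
The plan is to follow the structure of the proof of Lemma~\ref{lemma-split}. Since $G$ is $4$-connected we have $|V(G)| \ge 5$, so $|V(G')| \ge 6$. Contracting the edge $xy$ turns any $K_4$-model of $G'$ in which $x$ and $y$ carry the same label into a $K_4$-model of $G$; hence, applying Lemma~\ref{lemma-split-samelabel} with $k = 4$, any two $K_4$-models of $G'$ that assign $x$ and $y$ a common label are reconfigurable to one another, because the two corresponding $K_4$-models of $G$ are reconfigurable --- this is the one place we use $G \in \mbox{host}(K_4)$. It therefore suffices to show that every $K_4$-model $f'$ of $G'$ can be reconfigured to one in which $x$ and $y$ share a label: from a source model we pass to such a model, then (by Lemma~\ref{lemma-split-samelabel}) to the analogous reduction of the target model, and then to the target. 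Fix $f'$ and write $f'(x) = a$, $f'(y) = b$, with the remaining labels $c$ and $d$.

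We then distinguish three cases. If Lemma~\ref{lemma-conds} permits relabeling $y$ to $a$ or $x$ to $b$, a single step suffices. Otherwise, if $xy$ is an essential edge, we invoke Lemma~\ref{lemma-xyess}: its proof uses only that the host graph is $4$-connected and has at least five vertices, both of which hold for $G'$, so it yields a $K_4$-model of $G'$ in which $x$ and $y$ share a label. The remaining possibility is that $xy$ is not essential and that Lemma~\ref{lemma-conds} fails for both of the relabelings above.

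In this last case, since $xy$ is not essential there is a second connecting edge between $G'(f',a)$ and $G'(f',b)$, so not both of these branch sets have size one, and the obstruction for either relabeling is not condition~\ref{cond-nbr} of Lemma~\ref{lemma-conds}. Assume $|G'(f',b)| \ge 2$ (the other situation is symmetric); then relabeling $y$ to $a$ fails only because $y$ is a cut vertex of $G'(f',b)$ or $y$ is essential for $c$ or for $d$. I would remove these obstructions exactly as in the case analysis of Lemma~\ref{lemma-xyess}: if $y$ is essential for a label $\ell \in \{c,d\}$, use the $4$-connectivity of $G'$ (Lemmas~\ref{lemma-leafblock} and \ref{lemma-two-weak}, with Lemma~\ref{lemma-all-weak} for the weak-connection bookkeeping) to find a vertex of $G'(f',\ell)$, or of the other of $G'(f',c)$ and $G'(f',d)$, that is adjacent to $y$ and that --- possibly only after first siphoning away offending components via Lemma~\ref{lemma-slurp-component} --- can be relabeled $b$, so that $y$ ceases to be essential for $\ell$; and if $y$ is a cut vertex of $G'(f',b)$, use Lemma~\ref{lemma-slurp-component} to siphon away all but a suitably chosen component of $G'(f',b) \setminus \{y\}$, one retaining whichever connecting edges to $a$, $c$, $d$ are still needed, after which $y$ is no longer a cut vertex. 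Once all obstructions are cleared we relabel $y$ to $a$. If the siphoning shrinks $G'(f',a)$ or $G'(f',b)$ to a single vertex along the way, then either $xy$ has become essential and Lemma~\ref{lemma-xyess} finishes, or $|G'(f',a)| = |G'(f',b)| = 1$ and the argument concludes as in the final paragraphs of Lemma~\ref{lemma-split}, using the at least four internally disjoint paths from an interior vertex of a leaf block to $y$ to relabel a neighbour of $y$ and then $y$ itself.

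The main obstacle is this last case. As in Lemma~\ref{lemma-xyess}, the real work is the bookkeeping that confirms each intermediate siphoning step avoids leaf-crucial and leaf-$\ell$-crucial models (so that Lemma~\ref{lemma-slurp-component} applies), together with the treatment of the degenerate size-one branch sets and the possible configurations of weak connections among $G'(f',a)$, $G'(f',b)$, $G'(f',c)$, and $G'(f',d)$; I expect it to be a lighter reprise of that analysis, since the extra connecting edge between $G'(f',a)$ and $G'(f',b)$ present when $xy$ is not essential removes several of the obstructions that arise when $xy$ is essential, but it is still the bulk of the proof.
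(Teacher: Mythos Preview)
Your overall structure matches the paper: reduce via Lemma~\ref{lemma-split-samelabel} to showing every $K_4$-model of $G'$ reconfigures to one with $f'(x)=f'(y)$, dispose of the single-step case, and hand the essential-edge case to Lemma~\ref{lemma-xyess}. The divergence is in how you treat the remaining case where $xy$ is not essential.

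You propose to \emph{fill} $G'(f',b)$---pull in a vertex from $G'(f',c)$ or $G'(f',d)$ so that $y$ stops being essential---and only then siphon to kill the cut-vertex obstruction. That is the Lemma~\ref{lemma-xyess} playbook, and as you note it drags in the full weak-connection bookkeeping (Lemmas~\ref{lemma-all-weak}, \ref{lemma-two-weak}, \ref{lemma-four-weak}) to certify that the filling steps are legal. The paper takes the opposite direction: it only \emph{siphons out} of $G'(f',a)$ or $G'(f',b)$, never fills. If, say, $y$ is essential for $c$, then no other vertex of $G'(f',b)$ is $d$-crucial, and $4$-connectivity forces each leaf block of $G'(f',b)$ to have interior vertices with neighbours in $G'(f',d)$; so one repeatedly relabels leaf-block vertices to $d$ until $G'(f',b)=\{y\}$, at which point $xy$ is essential and Lemma~\ref{lemma-xyess} finishes. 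If $y$ is not essential, one either finds a component of $G'(f',b)\setminus\{y\}$ carrying edges to both $c$ and $d$ (siphon the rest, relabel $y$) or siphons everything and again lands in the essential-edge case. The analogous argument on the $a$-side, when $|G'(f',a)|>1$, reduces to the $|G'(f',a)|=1$ case. This avoids the weak-connection case analysis entirely: the heavy lifting has already been packaged into Lemma~\ref{lemma-xyess}, and the point of the non-essential case is merely to \emph{reach} an essential edge, not to reprise that lemma's argument. Your route should work, but it is doing substantially more than necessary.
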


\begin{proof}

  Our proof follows the structure of the proof of Lemma~\ref{lemma-split}; we know from Lemma~\ref{lemma-split-samelabel} that we can reconfigure between any two $K_4$-models in which $x$ and $y$ have the same label. Here, we show that we can reconfigure from any $K_4$ model of $G'$ to a $K_4$-model in which $x$ and $y$ have the same label.

We consider a $K_4$-model $f'$ of $G'$ such that $f'(x) \ne f'(y)$; without loss of generality, we assume that $f'(x) = a$, $f'(y) = b$, and that the two remaining labels are $c$ and $d$.  If $xy$ is an essential edge, then the result follows from Lemma~\ref{lemma-xyess}.

Suppose instead that $xy$ is not an essential edge. If Lemma~\ref{lemma-conds} holds for either relabeling $x$ to $b$ or relabeling $y$ to $a$, then we can achieve the reconfiguration in a single step. Hence, at least one condition of Lemma~\ref{lemma-conds} is violated for relabeling $x$ to $b$ and $y$ to $a$. If condition 1 is violated for both $x$ and $y$, then $xy$ is an essential edge, which contradicts our assumption. So without loss of generality, we let $|G'(f',a)| > 1$ and attempt to relabel $x$ to $b$, or if that is not possible, $y$ to $a$. By Lemma~\ref{lemma-conds}, we can relabel neither $x$ nor $y$ in a single step if and only if $x$ (respectively, $y$) is a cut vertex or is $b$-crucial ($a$-crucial) or both. We show that we can relabel other vertices so that either one of $x$ and $y$ can be relabeled or $xy$ is an essential edge, which is handled by Lemma~\ref{lemma-xyess}.

  \smallskip
  \noindent{\bf Case 1:} $|G'(f',a)| = 1$
  \smallskip
 
  Since $xy$ is not an essential edge, $|G'(f',b)| > 1$.
 
  \smallskip
  \noindent{\bf Case 1a:} $y$ is a cut vertex of $G'(f',b)$
  \smallskip

  Suppose $y$ is essential for $c$ or $d$. Without loss of generality, let $y$ be essential for $c$. We show that all other vertices labeled $b$ can be relabeled to $d$. Note that since there are necessary connecting edges from $y$, no other vertex labeled $b$ is $d$-crucial. Furthermore, by $4$-connectivity, each leaf block of $G'(f',b)$ contains at least two interior vertices with neighbours labeled $d$ since otherwise $\{x,y\}$ is a cut set of size two. Hence, we can relabel a vertex in a leaf block to $d$ and repeat the process until $y$ is the only vertex in the branch set for $b$, at which point $xy$ is an essential edge.

  Hence, suppose $y$ is not an essential vertex. If a component $C$ of $G'(f',b) \setminus \{y\}$ has edges to both $G'(f',c)$ and $G'(f',d)$, then we use Lemma~\ref{lemma-slurp-component} to siphon away components other than $C$ and then relabel $y$ to $a$. Otherwise, for each component $C$, each leaf block of $C$ must contain at least two interior vertices with neighbours in one of $G'(f',c)$ and $G'(f',d)$ since otherwise the joining vertex and $x$ form a cut set of size two. Therefore, either of those interior vertices can be relabeled and repeating the process, all components of $G'(f',b) \setminus \{y\}$ can be siphoned away to branch sets with labels $c$ and $d$, which results in $xy$ being an essential edge.

  \smallskip
  \noindent{\bf Case 1b:} $y$ is not a cut vertex of $G'(f',b)$
  \smallskip
  
  Then $y$ must be essential for $c$ or $d$ because otherwise we can relabel $y$ to $a$ in a single step. Now, we use the same argument as in Case 1a to relabel interior vertices of the leaf blocks of the branch set for $b$ until either $y$ is the only vertex in the branch set for $b$, in which case $xy$ is an essential edge, or $y$ is a cut vertex of its branch set, which is Case 1a.

   \smallskip
   \noindent{\bf Case 2:} $|G'(f',a)| > 1$
   \smallskip
    
   We reconfigure to a model where $x$ can be relabeled to $b$ or to a model where $x$ is the only vertex labeled $a$ and thus reduce it to Case 1.
    
   \smallskip
   \noindent{\bf Case 2a:} $x$ is a cut vertex of $G'(f',a)$
   \smallskip
    
   Suppose $x$ is essential for $c$ or $d$. Without loss of generality, let $x$ be essential for $c$. Then no other vertex labeled $a$ is $d$-crucial and furthermore each leaf block of each component of $G'(f',a) \setminus \{x\}$ has at least two interior vertices with neighbours in $G'(f',d)$ by $4$-connectivity. Therefore, we can siphon away each component to obtain a model where $x$ is the only vertex labeled $a$.
    
   Suppose $x$ is essential for neither $c$ nor $d$. If there exists a component of $G'(f',a) \setminus \{x\}$ that has edges to both $G'(f',c)$ and $G'(f',d)$, then we use Lemma~\ref{lemma-slurp-component} to siphon away all other components and then relabel $x$ to $a$. Otherwise, we use Lemma~\ref{lemma-slurp-component} to siphon away every component so that $x$ is the only vertex labeled $a$.
    
   \smallskip
   \noindent{\bf Case 2b:} $x$ is not a cut vertex of $G'(f',a)$
   \smallskip
    
   Then $x$ must be essential for $c$ or $d$ because otherwise we can relabel $x$ to $b$ in a single step. Now, we use the same argument as in Case 2a to relabel interior vertices of the leaf blocks of the branch set for $a$ until either $x$ is the only vertex in the branch set for $a$ or $x$ is a cut vertex of its branch set, which is Case 2a.   

\end{proof}

\begin{lemma}\label{lemma-addedge-4}
Suppose $G$ is a $4$-connected graph such that $G \in \mbox{host}(K_{4})$ and $G'$ is formed from $G$ adding an edge $xy$. Then $G' \in \mbox{host}(K_{4})$.
\end{lemma}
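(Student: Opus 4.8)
The plan is to mirror the proof of Lemma~\ref{lemma-addedge}, using the $4$-connected toolkit (Lemmas~\ref{lemma-leafblock}, \ref{lemma-slurp-component}, and \ref{lemma-xyess}) in place of the $3$-connected one. First note that $G'$, being $G$ plus an edge, is $4$-connected and has $|V(G')| = |V(G)| \geq 5$ vertices, that every $K_4$-model of $G$ is a $K_4$-model of $G'$, and that any single-label reconfiguration step between $K_4$-models of $G$ is also a valid step between those models viewed as $K_4$-models of $G'$. Hence, since $G \in \mbox{host}(K_4)$, it suffices to show that from any $K_4$-model $f$ of $G'$ we can reconfigure, inside $G'$, to some $K_4$-model of $G$.

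If $f$ is already a $K_4$-model of $G$ we are done, so suppose not; since $G$ and $G'$ differ only in the edge $xy$, either (Case~1) $f(x) \neq f(y)$ and $xy$ is an essential edge under $f$, or (Case~2) $f(x) = f(y)$ and $xy$ is a bridge in $G'(f,f(x))$. Case~1 is dispatched by Lemma~\ref{lemma-xyess}, which (as $G'$ is $4$-connected with at least five vertices) reconfigures $f$ to a model in which $x$ and $y$ share a label; if $xy$ is then not a bridge in their common branch set we have a $K_4$-model of $G$, and otherwise we have landed in Case~2 --- indeed in the special form where, $xy$ having been the unique connecting edge between the two branch sets, one of the two pieces into which deleting $xy$ splits that branch set is a single vertex.

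For Case~2, write $a = f(x) = f(y)$ and let $C_1 \ni x$, $C_2 \ni y$ be the two components into which deleting $xy$ splits $G'(f,a)$. The goal is to relabel all the vertices of one of the $C_i$ away from $a$ --- so that $xy$ ceases to be a bridge --- while keeping a $K_4$-model of $G$. By Lemma~\ref{lemma-leafblock} and $4$-connectivity, every leaf block of $G'(f,a)$ inside a $C_i$ has three interior vertices that are endpoints of connecting edges, so writing $S_i \subseteq \{b,c,d\}$ for the set of labels to which $C_i$ is joined by a connecting edge, we have $S_i \neq \emptyset$ and $S_1 \cup S_2 = \{b,c,d\}$, so at least one $C_i$, say $C_1$, has $|S_1| \geq 2$. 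Keeping $C_1$ and choosing a target label $\ell \in S_2$, the facts that $C_1$ reaches at least two labels and is left untouched imply that no vertex of $C_2$ is essential for two labels and that none becomes $\ell$-crucial during the process, so $f$ hits neither a leaf-crucial nor a leaf-$\ell$-crucial model on relabeling $C_2$; provided $C_1$ supplies a cut vertex (namely $x$, which is a cut vertex of $G'(f,a)$ when $|C_1| \geq 2$), Lemma~\ref{lemma-slurp-component} then relabels all of $C_2$ away from $a$. If $\ell$ can be chosen in $S_1 \cap S_2$ the edge $xy$ is not essential afterward and we have a $K_4$-model of $G$; otherwise --- $S_1$ and $S_2$ then partition $\{b,c,d\}$ --- the relabeling may make $xy$ an essential edge, returning us to Case~1.

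The main obstacle is making this loop terminate and handling the smallest branch sets. When $|G'(f,a)| = 2$ we have $C_1 = \{x\}$ and $C_2 = \{y\}$, neither $x$ nor $y$ is a cut vertex, Lemma~\ref{lemma-slurp-component} does not apply, and a single relabeling of $x$ or $y$ may leave the branch set for $a$ unable to reach some other branch set; moreover the alternation Case~1 $\to$ Case~2 $\to$ Case~1 has no obvious decreasing measure. I would address both points in the spirit of Lemma~\ref{lemma-xyess}: before emptying a component, first \emph{fill} the component to be kept by relabeling to $a$ a suitably chosen neighbour of $x$ (or $y$) that carries a connecting edge to a label the kept component currently misses --- using Lemma~\ref{lemma-conds} for that relabeling, and Lemma~\ref{lemma-slurp-component} first if an obstruction must be siphoned out of the branch set that neighbour is leaving --- so that once the other component is emptied the model is a $K_4$-model of $G$, and then argue termination by tracking, for instance, the size of the branch set containing $x$ and $y$ together with the number of labels reached from its $x$-side, which the fill step increases. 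Carrying out the subcase analysis for which labels $C_1$ and $C_2$ reach, and verifying in each the leaf-crucial and leaf-$\ell$-crucial hypotheses of Lemma~\ref{lemma-slurp-component}, is routine but is where most of the remaining work lies, exactly as in Cases~2a/2b of Lemma~\ref{lemma-addedge}.
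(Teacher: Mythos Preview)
Your two-case skeleton---essential edge via Lemma~\ref{lemma-xyess}, bridge via Lemma~\ref{lemma-slurp-component} after splitting on which of $b,c,d$ each side $C_i$ reaches---is exactly the paper's proof; its Cases~1, 2, 3 are precisely your $|S_1|=3$, $|S_1|=2$, $|S_1|=1$.

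Where you diverge is in the last paragraph, and here you are being \emph{more} careful than the paper, not less. The paper does not address the termination issue at all: it simply says the essential-edge case ``is covered by Lemma~\ref{lemma-xyess}'' and, after siphoning one component in the bridge case, notes that one has ``either completed the reconfiguration or \ldots\ reduced the situation to the essential edge case.'' There is no decreasing measure, no filling step, and no separate treatment of $|G'(f,a)|=2$. So the loop you worry about (essential $\to$ same label, possibly bridge $\to$ different labels, possibly essential $\to\cdots$) is real and is not closed in the paper either---for instance, Case~1a of Lemma~\ref{lemma-xyess} ends by relabeling $y$ to $a$ leaving $G(\cdot,a)=\{x,y\}$, which is immediately a bridge configuration again. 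Your proposed fill-then-siphon refinement is additional work beyond what the paper supplies, not a gap in your argument relative to the paper's.

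One small slip in your sketch: choosing $\ell\in S_1\cap S_2$ can violate the hypotheses of Lemma~\ref{lemma-slurp-component}. With $|S_1|=2$ and $\ell$ the shared label, a vertex of $C_2$ may still be essential for the single label outside $S_1$, hence $\ell$-crucial, so the ``no leaf-$\ell$-crucial model'' hypothesis can fail. The paper avoids this by always siphoning toward the label $C_1$ misses (its Case~2), at the cost of making $xy$ essential afterwards---which is exactly what feeds the loop you noticed.
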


\begin{proof}
   As in the proof of Lemma~\ref{lemma-addedge}, it suffices to show that we can reconfigure between any $K_4$-model of $G'$ and a $K_4$-model of $G$, which we handle in two cases, depending on whether $xy$ is an essential edge or $xy$ is a bridge in a branch set.  As the former is covered by Lemma~\ref{lemma-xyess}, only the latter remains.

We now show that if $f(x) = f(y)$ and $xy$ is a bridge in $G'(f,f(x))$, we can reconfigure to a model in which $x$ and $y$ have different labels.  Depending on whether $xy$ is then an essential edge, we have either completed the reconfiguration or we have reduced the situation to the essential edge case.

We assume without loss of generality that $f(x) = f(y) = a$ and observe that the removal of $xy$ separates $G'(f,a)$ into two components $C_1$ (containing $x$) and $C_2$ (containing $y$), each of which contains at least one leaf block. By Lemma~\ref{lemma-leafblock}, each leaf block has at least three interior vertices with neighbours in other branch sets. 
We will show that we can reconfigure to a $K_4$-model in which either $C_1$ or $C_2$ has no vertex with label $a$, so that $xy$ is no longer a bridge. We consider three cases based on the number of different branch sets to which $C_1$ has a connecting edge.

 \smallskip
  \noindent{\bf Case 1:} $C_1$ has a connecting edge to each of $G'(f,b)$, $G'(f,c)$, and $G'(f,d)$.
  \smallskip
  
  As in Case 2a of the proof of Lemma~\ref{lemma-addedge}, since $x$ is a cut vertex, we use Lemma~\ref{lemma-slurp-component} to relabel the vertices of $C_2$, which ensures $xy$ is a connecting edge.
  
  \smallskip
  \noindent{\bf Case 2:} $C_1$ has connecting edges to two of $G'(f,b)$, $G'(f,c)$, and $G'(f,d)$.
  \smallskip
  
  Suppose without loss of generality that $C_1$ has edges to $G'(f,b)$, $G'(f,c)$. Then $C_2$ has an edge to $G'(f,d)$. Now $f$ does not hit a leaf-$d$-crucial or leaf-crucial model on relabeling $C_2$ because $C_1$ has the necessary connecting edges. Hence, by Lemma~\ref{lemma-slurp-component}, we can relabel the vertices of $C_2$, which ensures $xy$ is a connecting edge.
  
  \smallskip
  \noindent{\bf Case 3:} $C_1$ has a connecting edge to exactly one of $G'(f,b)$, $G'(f,c)$, and $G'(f,d)$.
  \smallskip
  
  Suppose without loss of generality that $C_1$ has an edge to $G'(f,b)$. Then $C_2$ has connecting edges to $G'(f,c)$ and $G'(f,d)$, and so the case reduces to Case 2 with the roles of $C_1$ and $C_2$ swapped.
\end{proof}

\section{Conclusions and open questions}\label{sec-conclusions}

We have developed a toolkit for the reconfiguration of minors, and specific results for $H$-models of small cliques $H$.  Our results imply an alternate definition of $2$-connectivity, whereby a graph is $2$-connected if and only if it is in $\mbox{host}(K_2)$. Furthermore, we have shown that every $3$-connected graph is in $\mbox{host}(K_3)$ and that every $4$-connected graph is in $\mbox{host}(K_4)$, provided that it is not in $\mathcal{L}$, where $\mathcal{L} = \{H : H$ is the line graph of an internally $4$-connected cubic graph$\}$.

It remains to be shown whether similar results can be obtained for larger cliques, or for other graphs $H$.  As our results rely on characterizations of $k$-connected graphs, further work is likely to depend on further progress on such results.

As there are alternate ways of defining adjacency relations, further work is needed to determine which definitions are equivalent and for those that are not, what results can be obtained.  In our work, we can view each label as a token; based on this viewpoint, the adjacency relation we have considered can be viewed as {\em Token relabeling (TR)}, changing the label of one vertex in $G$.  Two other possibilities worthy of consideration are {\em Token sliding (TS)}, swapping the labels of two adjacent vertices in $G$, and {\em Token jumping (TJ)}, swapping the labels of any two vertices in $G$.  Both TS~\cite{IDHPSUU11} and TJ~\cite{KMM12} are well-studied for other types of reconfiguration problems, many of which have unlabeled or distinctly labeled tokens.  The use of TS instead of TR is instrumental in handling degree-one vertices in $G$, which otherwise can rarely be relabeled.

Moreover, it is worth considering an alternate formulation in which solutions are considered to be adjacent if one can be formed from another by reassigning labels to vertices according to some permutation on the labels.

Future directions for research include considering other ways of assessing the reconfiguration graph, such as determining its diameter or, in cases in which the reconfiguration graph is connected, to form algorithms that determine whether there is a path between an input pair of solutions.  It remains open how to characterize isolated vertices in the reconfiguration graph, known as
{\em frozen configurations}~\cite{frozen}.

Throughout the paper, we required every vertex of $G$ to be a member of a branch set in an $H$-model.  If instead we considered a subgraph of $G$, a solution might entail the labeling of a subset of the vertices of $G$. We observe that when the number of labels is equal to the number of vertices in $H$, the problem is reduced subgraph isomorphism~\cite{subgraph}.  Alternative mappings can be considered as well, such as topological embedding of one graph in another.

\bibliography{minor.bib}

\end{document}